\crefname{hypothesis}{Hypothesis}{Hypotheses}
\title{Path-following methods for Maximum a Posteriori estimators in Bayesian hierarchical models \\ How estimates depend on hyperparameters\thanks{Submitted to the editors \today.
\funding{This research received no specific grant from any funding agency in the public, commercial, or not-for-profit sectors.}}}
\author{Zilai Si\thanks{Department of Statistics, University of Chicago, Chicago, IL (\email{zilai@uchicago.edu, yucongliu@uchicago.edu, alexstrang@uchicago.edu}).}
\and Yucong Liu\footnotemark[2]
\and Alexander Strang\footnotemark[2]}
\begin{document}

\maketitle

% REQUIRED
\begin{abstract}
Maximum a posteriori (MAP) estimation, like all Bayesian methods, depends on prior assumptions. These assumptions are often chosen to promote specific features in the recovered estimate. The form of the chosen prior determines the shape of the posterior distribution, thus the behavior of the estimator and complexity of the associated optimization problem. Here, we consider a family of Gaussian hierarchical models with generalized gamma hyperpriors designed to promote sparsity in linear inverse problems. By varying the hyperparameters, we move continuously between priors that act as smoothed $\ell_p$ penalties with flexible $p$, smoothing, and scale. We then introduce a predictor-corrector method that tracks MAP solution paths as the hyperparameters vary. Path following allows a user to explore the space of possible MAP solutions and to test the sensitivity of solutions to changes in the prior assumptions. By tracing paths from a convex region to a non-convex region, the user can find local minimizers in strongly sparsity promoting regimes that are consistent with a convex relaxation derived using related prior assumptions. We show experimentally that these solutions are less error prone than direct optimization of the non-convex problem. 
\end{abstract}

% REQUIRED
\begin{keywords}
path-following, sparse recovery, predictor-corrector, Bayesian hierarchical models
\end{keywords}

% REQUIRED
\begin{MSCcodes}
65K10, 62F15, 65F08
\end{MSCcodes}

\section{Background}

In an inverse problem, the inputs to a forward map $F(\cdot)$ must be inferred from a noisy output $b = F(x) + \epsilon$. Inverse problems arise across domains, including geophysics \cite{snieder1999inverse}, medical imaging \cite{arridge1999optical} signal processing \cite{figueiredo2007gradient} and machine learning \cite{de2005learning}.  Inverse problems are typically ill-posed so often cannot be solved stably without prior knowledge regarding the desired solution. For example, many real signals can be considered sparse or compressive in some domain \cite{candes2006stable,candes2005decoding,donoho2006most,donoho2005stable}. That is, the signal $x$ has many entries that are equal to or close to zero. 

Classically, sparsity is promoted by optimizing a regularized cost function:
\begin{equation}
\label{penalty}
F_p(x, \lambda)=\|b- F(x)\|^2+\lambda\|x\|_p^p
\end{equation}
where $\lambda$ is the regularization parameter that determines the severity of penalty term and $p$ is the shape parameter that determines solution properties. Algorithms that optimize (\ref{penalty}) when $1\leq p\leq2$ have been extensively studied  (c.f.~\cite{daubechies2010iteratively,tibshirani1996regression}). For $p<1$, obtaining the global minimum is challenging since the penalized function is non-convex. From a Bayesian perspective, solving  (\ref{penalty}) is equivalent to finding a maximum a posteriori (MAP) estimate with sparsity promoting priors \cite{park2008bayesian,polson2010shrink,calvetti2019hierachical,calvetti2020sparse}.

Consider the standard linear inverse problem: find $x \in \mathbb{R}^n$ given the data:
\begin{equation}
    b = Ax + \epsilon
\end{equation}
where $A \in \mathbb{R}^{m \times n}$ is a known matrix, and $\epsilon$ represents noise introduced by measurement error.  

Suppose that $\epsilon \sim \mathcal{N}(0,\Sigma)$. Then, the likelihood of sampling $b$ given $x$ is:
\begin{equation} \label{eqn: likelihood}
    \pi(b|x) = \frac{1}{\sqrt{2 \pi |\text{det}(\Sigma)|}} \exp \left(-\frac{1}{2}(b - A x)^\top \Sigma (Ax - b) \right).
\end{equation}

Maximum likelihood estimation (ML) proceeds by finding $x$ which maximize \eqref{eqn: likelihood} given $A$ and $b$. We assume, without loss of generality, that the noise covariance equals the identity, $\Sigma = I$, since, if the noise is not white, the problem can be whitened via a standard change of coordinates \cite{calvetti2019hierachical}. Then, the optimization problem reduces to minimizing the unweighted least squares objective, $\|Ax - b\|^2$.

From a Bayesian perspective, ML estimation fails to account for prior information. Suppose that $x$ is drawn from a prior distribution, $\pi_{\text{prior}}(x)$. Then, given an observed sample $b$, the input $x$ has posterior distribution:
\begin{equation} \label{eqn: Bayes formula}
    \pi_{\text{post}}(x|b) \propto \pi(b|x) \pi_{\text{prior}}(x)
\end{equation}
where the constant of proportionality is determined by the likelihood of sampling $b$. Since $\pi(b)$ depends on $b$ alone, it can be absorbed into the normalization factor. 

Maximum a posteriori estimation (MAP) aims to maximize the posterior \eqref{eqn: Bayes formula} instead of the likelihood. The MAP objective only differs from the ML objective via weighting by the prior. Maximizing the posterior is equivalent to minimizing its negative logarithm, which separates into a least squares term associated with the likelihood, and a regularization term associated with the prior. The least squares term ensures fidelity to the data, while the regularizer promotes estimates $x$ which could have plausibly been sampled from the prior. 

Different prior assumptions regularize the objective differently. Thus, the prior can promote different features in the estimator. Often, $\pi_{\text{prior}}(x)$ is chosen to promote sparsity, as when $\pi_{\text{prior}}(x) \propto \exp(- \lambda \|x\|_p)$ for $p < 2$. Then, MAP estimation reduces to the standard $\ell_p$ regularized problem \eqref{penalty}. Varying the prior varies the shape of the regularizer, shifting the estimator. Here we aim to study how the estimator, and associated estimation problem, changes as the prior changes.

%% focus on the hierarchical model with application examples (cite)
We focus on the Gaussian hierarchical model introduced in \cite{calvetti2008hypermodels}. See \cite{calvetti2015hierarchical,calvetti2019brain,calvetti2020sparse,calvetti2020sparsity,pragliola2020overcomplete} for examples. See \cite{agrawal2021variational} and \cite{kim2022hierarchical} for extensions that allow uncertainty quantification via variational inference, and nonlinear forward models via Kalman filtering. 

The Gaussian hierarchical model supposes that $\epsilon$ is multivariate Gaussian, and that $x$ is drawn from a conditionally Gaussian prior distribution with variances $\theta$ drawn from a generalized gamma distribution. In particular, we assume that $\pi(x|\theta)$ is $\mathcal{N}(0,D_{\theta})$ where $D_v$ denotes a diagonal matrix with diagonal entries specified by the vector $v$. Note that $\theta \in \mathbb{R}^n$ is itself a vector of unknown variances that must be estimated. A hyper-prior follows. The variances $\theta$ are drawn independently from a generalized gamma distribution with parameters $r$, $\eta$ and $\vartheta_j$. Thus:
\begin{equation}
    \pi_{\text{hyper}}(\theta_j|r,\eta,\vartheta_j) =  \frac{|r|^n}{\Gamma(\beta)^n} \prod_{j=1}^n \frac{1}{\vartheta_j} \left( \frac{\theta_j}{\vartheta_j} \right)^{\eta + 1/2} \exp \left(-\left( \frac{\theta_j}{\vartheta_j} \right)^r \right)
\end{equation}
where $\eta = r \beta - 3/2$ \cite{calvetti2009conditionally,calvetti2020sparse}. 

Here $r$, $\eta$, and $\vartheta$ act as hyperparameters. The hyperparameter $r \in \mathbb{R} \setminus \{0\}$ is a shape parameter. The hyperparameter $\eta$ is also a shape parameter chosen so that $\beta > 0$. The hyperparameters $\vartheta_j > 0$ are scale parameters and can be absorbed into the definition of $x$ and $\theta$ via a standard scaling \cite{calvetti2020sparse}. 

The full posterior is $\pi(x,\theta|b) \propto \pi(b|x) \pi_{\text{prior}}(x|\theta) \pi_{\text{hyper}}(\theta|r,\eta,\vartheta)$. Note that the prior $ \pi_{\text{hyper}}(\theta|r,\eta,\vartheta)$ is not a conjugate prior, so the posterior is non-Gaussian. The negative logarithm of the posterior, or Gibbs energy, is:
\begin{equation}
\begin{aligned}
    \mathcal{G}(x, \theta) & = \frac{1}{2}\|b-\mathrm{A} x\|^{2}+\frac{1}{2} \sum_{j=1}^{n} \frac{x_{j}^{2}}{\theta_{j}}-\eta \sum_{j=1}^{n} \log \frac{\theta_{j}}{\vartheta_{j}}+\sum_{j=1}^{n}\left(\frac{\theta_{j}}{\vartheta_{j}}\right)^{r}\\
    & = \frac{1}{2}\|b-\mathrm{A} x\|^{2} + \mathcal{P}(x,\theta|r,\eta,\vartheta)
\end{aligned}
\end{equation}
up to an additive constant associated with normalization of the posterior. Here $\mathcal{P}(\cdot)$ is the penalty term associated with the prior. Note that the fidelity term, $\frac{1}{2}\|b-\mathrm{A} x\|^{2}$, depends exclusively on the likelihood, and thus the error model, but is entirely independent of the prior and thus the hyperparameters.

%% motivate the choice of the model, and briefly discuss the shape of the regularizer
The hierarchical model is chosen for two reasons. First, MAP estimates that minimize $\mathcal{G}(x,\theta)$ can be computed efficiently using a coordinate descent scheme that alternately optimizes over $x$ and $\theta$. Let $\{z^k\}=\{(x^k,\theta^k\})$ denotes a sequence of iterates indexed by $k$. Then, the iterative alternating scheme (IAS) proceeds by:
\begin{enumerate}
    \item \textbf{Updating $x$ given $\theta$: } $x^{k+1} = \text{argmin}_{x \in \mathbb{R}^{n}}\{ \|Ax - b \|^2 + \|D_{\theta^k}^{-1/2} x\|^2 \}$
    \item \textbf{Updating $\theta$ given $x$: } $\theta^{k+1} = \text{argmin}_{\theta \in \mathbb{R}^{n+}}\{ \mathcal{P}(x^{k+1},\theta|r,\eta,\vartheta) \}$.
\end{enumerate}

Minimizing the fidelity term given fixed $\theta$ reduces to a Tikhonov penalized least squares problem. The penalty term separates into a sum of terms involving each variance $\theta_j$, so the second step can be evaluated via a precomputed update function that optimizes the penalty term given $x$ \cite{calvetti2020sparse}. Thus, IAS reduces to a reweighted least squares algorithm. The convergence of IAS is studied in \cite{calvetti2019hierachical}, where it is shown to converge quadratically off the support of a sparse signal $x$, but linearly on the support. The method converges linearly on the support, since, in order to increase a particular $x_j$, the corresponding variance $\theta_j$ must also grow. Since IAS updates $x$ and $\theta$ separately, it is forced to make incremental progress when converging on the support. Nevertheless, initial convergence off the support is often quick, especially if the least squares step is implemented efficiently \cite{calvetti2018bayes,calvetti2019hierachical}. These observations suggest the need for a second-order optimization scheme that can update $x$ and $\theta$ simultaneously. Such a scheme could accelerate IAS.

%% emphasize the role of the hyperparameters - and that the solution changes as the hyperparameters change. 
Second, the hierarchical model is chosen since it does not use conjugate priors, so the prior changes the form of the posterior distribution. By varying the shape parameters $r$ and $\eta$, a user can continuously adjust the form of the posterior. When $r > 0$ the hierarchical model induces an effective regularizer (defined by evaluating the penalty term at its minimum in $\theta$ given $x$), which acts as a smoothed $\ell_p$ penalty where $p$ depends on $r$, and the degree of smoothing depends on $\eta$. If $\eta$ converges to zero then the regularizer converges to the corresponding $\ell_p$ penalty, and the MAP estimation problem reduces to the standard $\ell_p$ regularized least squares problem \ref{penalty} \cite{calvetti2019hierachical}. Here, $p = 2 \frac{r}{1+r}$. Thus, when $r = 1$ the effective regularizer acts as a smoothed $\ell_1$ penalty, as $r$ approaches zero it approaches a smoothed $\ell_0$ penalty, and as $r$ approaches infinity it approaches an $\ell_2$ penalty. The smaller $r$, the more the prior promotes sparsity. 

Note that, because the regularizer is smoothed when $\eta > 0$, it only promotes quasi-sparsity in the MAP estimates. That is, MAP estimates typically have entries at two different scales. A small subset of entries are large (``on the support"), and the remainder are small but non-zero (``off the support") . The degree of shrinkage off the support is determined by $\eta$. %Quasi-sparsity reflects the fact that samples drawn from the prior model typically have many pairs $(x_j,\theta_j)$ with small $x_j$ and $\theta_j$, but a few pairs whose variance $\theta_j$ was sampled from the tail of the hyperprior allowing large $x_j$. The smaller $r$, the slower the tail of the hyperprior decays, so the more dramatic these differences in scale and cardinality become. A detailed analysis of the effective regularizer, and its dependence on the hyperparameters is provided in \cite{calvetti2020sparse}. 

Here, we investigate the relation between the hyperparameters and estimators.

\section{Introduction}

The MAP estimator of the Bayesian hierarchical model depends on the hyperparameters, which encode the underlying assumptions.
While the relation between the shape of the regularizer and the hyperparameters is well understood, the relation between MAP estimator and hyperparameters has only been studied via a few select examples with hand-chosen parameters. The explicit dependence of the estimator on the hyperparameters has not been explored in detail. 

%% discuss lack of convergence guarantees in non-convex regime % point to hybrid IAS, but no guarantee of consistency
Any MAP estimator is the solution to an optimization problem. Just as the MAP estimator depends on the hyperparameters, so does the optimization problem that defines it. When $r \geq 1$ the objective function $\mathcal{G}$ is globally convex, and IAS is guaranteed to converge to its global minimum. When $r < 1$ the objective is convex inside an $\ell_{\infty}$ ball (for sufficiently small $x$ and $\theta$) whose size is determined by $\eta$. In this setting, the MAP estimation problem is a non-convex optimization problem prone to local minima. The non-convex regime poses a numerical challenge. It produces sparser results but is less robust since optimizers may identify local minima. While sample IAS results in the non-convex regime can recover sparse solutions more accurately, there is no guarantee that the solutions returned by IAS are the global minimizer. Such a guarantee is typically impossible. How, then, to design an optimizer which chooses a local minimizer in a principled fashion?

Calvetti et.~al.~propose a hybrid approach in \cite{calvetti2020sparsity}. First, solve for the MAP estimate using hyperparameters chosen so that the objective is convex. Then initialize at the convex solution when solving in the non-convex regime. Thus, the hybrid scheme guides the solution in the non-convex regime via a solution to a nearby convex problem. Note that the nearby convex problem is derived using the same family of prior distributions. This convex relaxation approach produces more robust solutions. 

%% aim for path tracing method
We study the relationship between the underlying assumptions, as encoded by the hyperparameters, and the MAP estimator. Unlike the hybrid approach, which jumps discontinuously between assumptions, we track the motion of the MAP estimate as the hyperparameters change continuously. Thus, a user can gradually tune assumptions, study the sensitivity of estimates, explore the space of solutions, and test how prior knowledge informs inference. Consider the following exploration method. A user chooses a path through hyperparameter space, then follows the estimator as the hyperparameters change.  By selecting a path that starts in a convex regime, and ends in a non-convex regime, the user may find solutions in the non-convex regime that are, if not global minimizers, at least consistent with the global minimizer in a convex regime. By path following, the user can select a minima in the non-convex regime that is connected to a unique solution when convex.  

%% review path following methods
Various authors have developed path-following algorithms which trace the solution under varying $\lambda$ to the regularized optimization problem:
\begin{equation} \label{eqn: traditional regularized problem}
\hat{x}(\lambda)=\arg \min _x \left\{ \sum_i \mathcal{C}\left(b_i, a_i \cdot x\right)+\lambda \mathcal{P}(x) \right\}
\end{equation}
where $a_i$ is the $i_{th}$ row of $A$, $\mathcal{C}$ is a convex loss function, and $\mathcal{P}$ is a penalty term. Some algorithms exploit the piecewise-linearity of the solution path to generate it exactly \cite{efron2004least,rosset2007piecewise,hastie2004entire}. These methods identify the points where the active set (variables with nonzero coefficients) changes, then draw the entire path via linear interpolation.  

When the solution path is not piecewise linear, we can only obtain an approximation at selected parameter values. Various sources have considered using ODE systems to approximate the solution path. Those systems can be derived from a first order optimality condition \cite{zhu2021algorithmic}, stationarity condition \cite{zhou2014generic}, or an extension of least angle regression (LARS) \cite{wu2011ordinary}. For example, Zhu \cite{zhu2021algorithmic} considers numerical ODE and Newton based methods for recovering $\ell_2$ regularized solution paths. Adaptive grids of regularization parameters is considered to balance the trade-off between complexity and efficiency \cite{ndiaye2019safe,ndiaye2021continuation}. We will introduce a predictor-corrector algorithm, with Newton correction, for solving an ODE derived by first-order optimality. Predictor-corrector algorithms have been considered by other authors. Allgower and Georg \cite{allgower1993continuation} introduced  predictor–corrector strategies for path-following with varying parameters. Rosset \cite{rosset2004following} introduced a path-following algorithm which uses a Newton corrector step and a redundant predictor step. %That is, the corrector step starts from the previous approximate solution.
Park and Hastie \cite{park2007l1} propose a predictor-corrector algorithm with an adaptive step size that increases accuracy. Wang \cite{wang2014optimal} proposes an approximate regularization path-following method for nonconvex loss or penalty functions. We also compare our proposed method to the repeated application of coordinate descent (IAS) \cite{calvetti2019hierachical}. For related discussions, see \cite{friedman2010regularization,mazumder2011sparsenet,simon2011regularization}. %The papers \cite{suggala2018connecting,zhu2021algorithmic} investigate the connection between the sequence of iterates with a fixed regularizer and solution paths under varying regularizers. 
Note that the problem we aim to solve is more general than \eqref{eqn: traditional regularized problem}, since we aim to change the shape of the regularizer.

%% outline/foreshadow our main results:
% 1. a path tracing algorithm
% 2. hinges on efficiently solving linear systems that act as Newton steps - opens the door to second order accelerated methods
%Our predictor-corrector algorithm which allows the to select a path through hyperparameter space, then trace the corresponding path of MAP estimates. Note that the problem we aim to solve is more general than \eqref{eqn: traditional regularized problem}, since we aim to change the shape of the regularizer.

Our predictor-corrector algorithm uses a Newton type step for both prediction and correction. The success of our method depends on efficiently solving a series of linear systems involving the Hessian of the objective function. The Hessian presents an interesting numerical challenge since it is simultaneously large, extremely ill-conditioned when the estimates are near to sparse, highly structured, and near to low rank. Moreover, the Hessian changes continuously along the solution path, so each linear system is closely related to a linear system we have solved before. We introduce a preconditioning strategy that exploits the structure of the Hessian to efficiently and stably solve the required sequence of linear systems. Since Newton methods also require solving linear systems against the Hessian, the preconditioner can also be applied to perform Newton iteration with fixed hyperparameters. Thus, our method can significantly accelerate the convergence of MAP estimation by replacing IAS with Newton iteration near convergence.

The paper outline follows. Section \ref{sec: theory} introduces the theory which grounds our path-following approach. First, we derive a system of ODE's that govern the motion of the estimator (see Section \ref{sec: ODEsystem}). We simplify the Hessian in Section \ref{sec: hessian}. Section \ref{sec: invertibility} establishes the almost sure invertibility of the Hessian, then extends invertibility to uniqueness. In particular, we show that MAP estimates only bifurcate at locations where the Hessian is singular. Section \ref{sec: Methods} introduces the path-following algorithm. The success of the algorithm depends on a carefully designed preconditioner. We develop the preconditioner in Section \ref{sec: pre}, then use it to accelerate estimation in Section \ref{sec: acceleration}.

We conclude with a series of numerical experiments (see Section \ref{sec: example}). The experiments are all adapted from previous work on hierarchical Bayesian models (c.f.~\cite{agrawal2021variational,calvetti2019hierachical}). The predictor-corrector method is the most accurate of any method tested and is sufficiently fast to use on problems of intermediate scale. We show how to explore the space of possible estimates by varying the hyperparameters, and that using path-following to solve non-convex MAP estimation problems provides more robust results than direct solution. Note that, in the non-convex regime, points on the solution path are not necessarily global minimizers but are, at least, consistent with a global minimizer in the convex regime under smooth changes to the assumptions. We report the time cost of each stage of our algorithm, test the efficacy of the preconditioner, and monitor the conditioning of the Hessian to track the approach to possible bifurcations. We also show that Newton acceleration can rapidly improves MAP estimation.

\section{Theory} \label{sec: theory}

%%%%%%%%%%%%%%%%%%%%%%%%%%%%%%%%%%%%%%%%%%%%%%%%%%%%%%%%%%%%%%%%%%%
\subsection{ODE System} \label{sec: ODEsystem}
%% introduce the ODE (assume for now it is well behaved)
Let $\eta = r \beta-\frac{3}{2}$. Then, we aim to optimize the negative log-posterior 
\cite{calvetti2020sparse}: 
\begin{equation}\label{obj}
\mathcal{G}(x, \theta)=\mathcal{G}(x, \theta \mid r, \eta,  \vartheta)=\frac{1}{2}\|b-\mathrm{A} x\|^{2}+\frac{1}{2} \sum_{j=1}^{n} \frac{x_{j}^{2}}{\theta_{j}}-\eta \sum_{j=1}^{n} \log \frac{\theta_{j}}{\vartheta_{j}}+\sum_{j=1}^{n}\left(\frac{\theta_{j}}{\vartheta_{j}}\right)^{r}
\end{equation}

Let the hyperparameters $r(t)$, $\eta(t)$, $\vartheta(t)$ follow a differentiable path through the hyperparameter space parameterized by $t$. Let $\psi(t)$ denote the collection of hyperparameters  $\left(r(t), \eta(t), \vartheta(t) \right)$. Next, let $x(t),\theta(t)$ denote a minimizer of the MAP objective \eqref{obj} corresponding to hyperparameters $\psi(t)$.
Let $z$ denote the pair $[x; \theta]$ and $z_{*}(t)$ denote the minimizer at time $t$. 

First order optimality requires that, at all times $t$, 
\begin{equation} \label{eqn: first order optimality}
\nabla_{z} \mathcal{G}\left(z_{*}(t) \mid 
\psi(t)\right)=0.
\end{equation}

Equation \ref{eqn: first order optimality}  constrains the motion of all possible minimizers as the hyperparameters vary. To study that motion, differentiate with respect to time: 
$$
\begin{aligned}
\partial_{t} \nabla_{z} \mathcal{G}\left(z_{*}(t) \mid \psi(t) \right) 
= & H\left(z_{*}(t) \mid \psi(t) \right) \frac{d}{d t} z_{*}(t)+\partial_{r} \nabla_{z} \mathcal{G}\left(z_{*}(t) \mid \psi(t) \right) \frac{d}{d t} r(t) +   \\
& \partial_{\eta} \nabla_{z} \mathcal{G}\left(z_{*}(t) \mid \psi(t) \right) \frac{d}{d t} \eta(t) +\partial_{\vartheta} \nabla_{z} \mathcal{G}\left(z_{*}(t) \mid \psi(t) \right) \frac{d}{d t} \vartheta(t).
\end{aligned}
$$
Here, $H\left(z_{*}(t) \mid \psi(t) \right)$ is the Hessian of the objective function evaluated at the minimizer $z_*(t)$ given hyperparameters $\psi(t)$. 

The right hand side of the constraint \eqref{eqn: first order optimality} equals zero at all times, so its time derivative is zero. Therefore, when differentiable, $z_{*}(t)$ satisfies the ODE:
\begin{equation}\label{ODE}
\begin{aligned}
& H\left(z_{*}(t) \mid \psi(t) \right) \frac{d}{d t} z_{*}(t) = - \nabla_z \left(\nabla_\psi \mathcal{G}(z_*(t) \mid \psi(t)) \cdot \frac{d}{dt} \psi(t) \right)
%-\left(\partial_{r} \nabla_{z} \mathcal{G}\left(z_{*}(t) \mid \psi(t) \right) \frac{d}{d t} r(t) +  \right. \\ 
%& \hspace{2cm} \left.  \partial_{\eta} \nabla_{z} \mathcal{G}\left(z_{*}(t) \mid \psi(t) \right) \frac{d}{d t} \eta(t) + \partial_{\vartheta} \nabla_{z} \mathcal{G}\left(z_{*}(t) \mid \psi(t) \right) \frac{d}{d t} \vartheta(t) \right).
\end{aligned}
\end{equation}

The existence and uniqueness of solutions to \eqref{ODE} depends on the invertibility of $H$. If $H$ is invertible, then there is an explicit expression for $\frac{d}{dt}z_{*}(t)$. Simply multiply Equation \eqref{ODE}  by $H^{-1}$ on both sides. If $H$ is not invertible, then the system of equations may not have a solution, or may admit infinitely many. The invertibility of the Hessian is explored in detail in Section \ref{sec: invertibility}. There we show that the Hessian is invertible for almost all pairs $z = [x,\theta]$ where $\theta$ is optimized given $x$.

\begin{figure}
\begin{center}
   \begin{minipage}{0.24\textwidth}
     \centering
     \includegraphics[width=.9\linewidth]{ 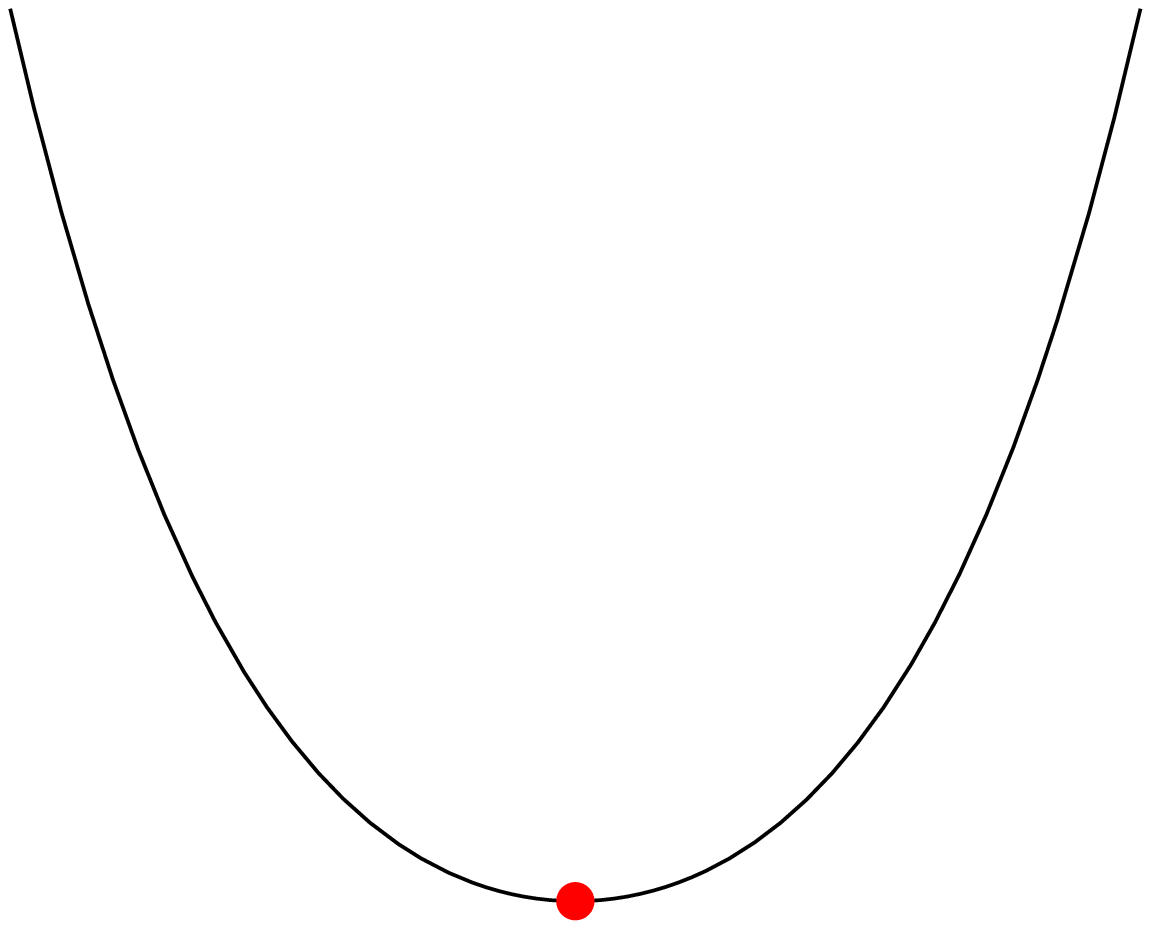}
   \end{minipage}\hfill
   \begin{minipage}{0.23\textwidth}
     \centering
     \includegraphics[width=.9\linewidth]{ 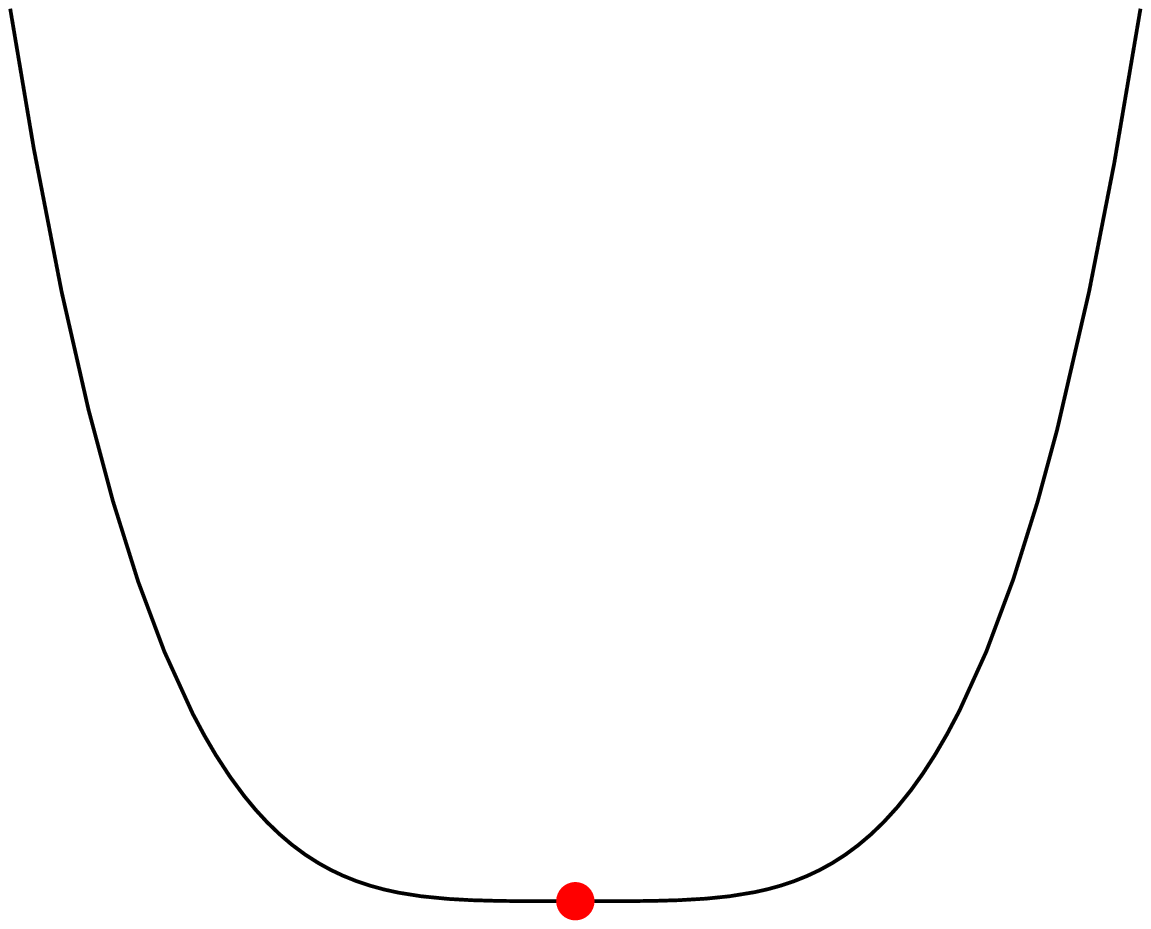}
   \end{minipage}
      \begin{minipage}{0.25\textwidth}
     \centering
     \includegraphics[width=.9\linewidth]{ 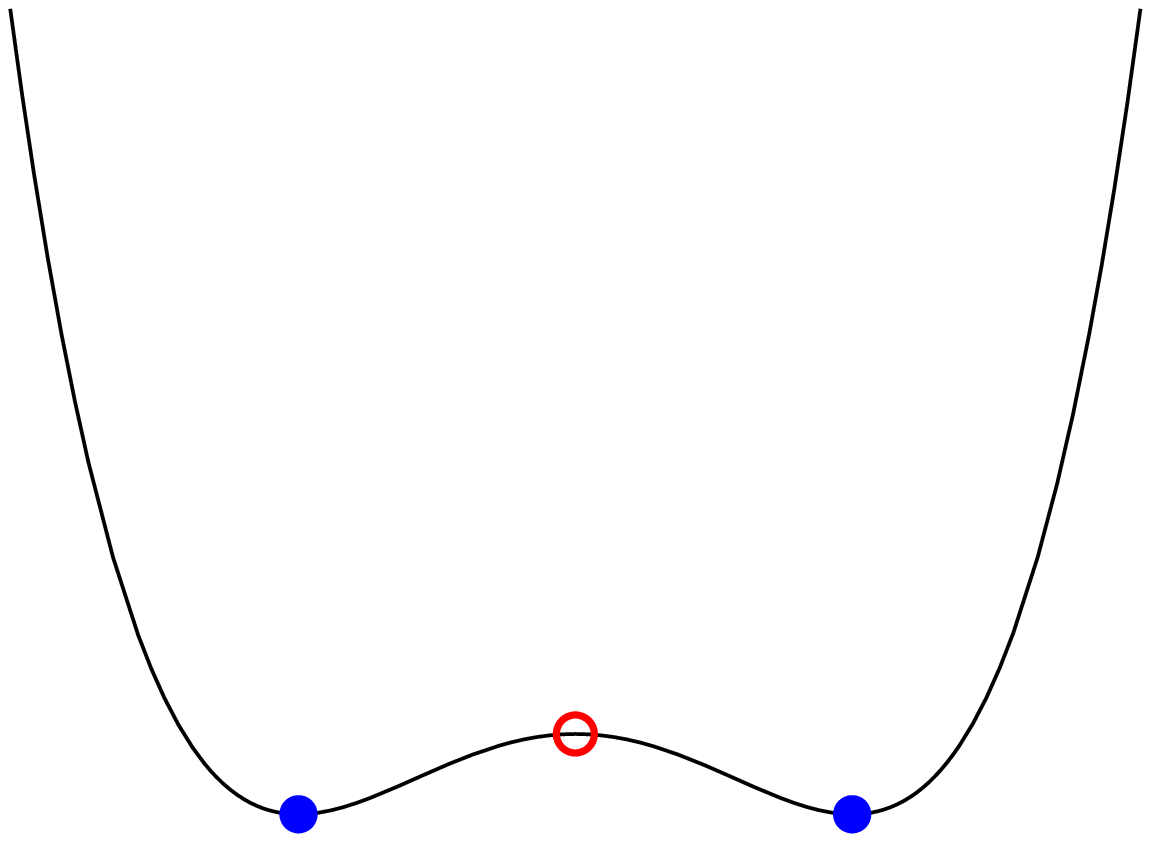}
   \end{minipage}
      \begin{minipage}{0.26\textwidth}
     \centering
     \includegraphics[width=.9\linewidth]{ 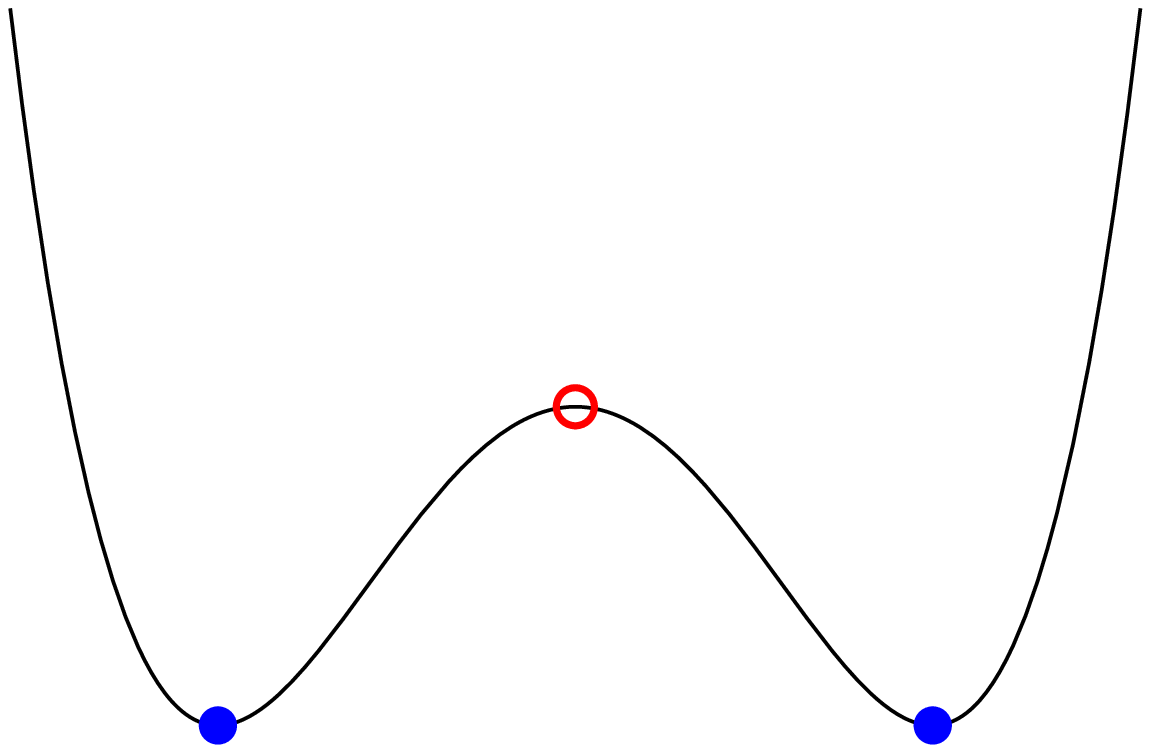}
   \end{minipage}
   \end{center}
   \caption{ A Minimum bifurcating. Note that, at the instant of the bifurcation (panel 2), the objective function has second derivative equal to zero at the minimizer.}
\label{fig:bif}
\end{figure}

The Hessian determines the shape of the objective function \eqref{obj} near the minimizer via the expansion, 
$$
\mathcal{G}(z) = \mathcal{G}(z_{*}) + (z-z_{*})^{\top}H(z_{*})(z-z_{*}) + \mathcal{O}\left((z-z_{*})^{3} \right).
$$
The smaller the smallest singular value of the Hessian, the flatter the objective, so the more uncertain the posterior. For example, the Laplace approximation to the posterior uses a Gaussian distribution with mean $z_*$ and covariance $H^{-1}$. Thus, the inverse of the Hessian approximates the covariance in the posterior (c.f.~\cite{agrawal2021variational}).

If the smallest singular value equals zero, then the minimizer is not unique, and the objective is flat (to second order) along the corresponding singular vector and $H$ is not invertible. This situation may occur if, as the hyperparameters change in time, a minimizer becomes a saddle point, as in a pitchfork bifurcation (see Figure \ref{fig:bif}). More strongly, when the Hessian $H$ is invertible, the path following ODE \eqref{ODE} admits a unique solution. Therefore, bifurcations only occur where $H$ is singular. We will show that $H$ is invertible almost everywhere, so bifurcations occur almost nowhere.

\subsection{Simplifying the Hessian}\label{sec: hessian}

Equation \eqref{ODE} is a linear system involving the Hessian. Similar systems arise whenever a Newton method is used to minimize the objective \eqref{obj}. The Hessian of the objective function takes the block form \cite{calvetti2020sparse}:
\begin{equation}
H(z \mid \psi)= \left[\begin{array}{cc}
A^{\top} A+D_{\theta}^{-1} & -D_{\theta}^{-2} D_{x} \\
-D_{\theta}^{-2} D_{x} & D_{\theta}^{-3} D_{x}^{2}+\eta D_{\theta}^{-2}+r(r-1)\vartheta^{-r} D_{\theta}^{r-2}
\end{array}\right]
\end{equation}
where $D$ denotes a diagonal matrix whose subscript fixes the diagonal entries.

Solving linear systems involving the Hessian can be expensive and unstable, especially when the Hessian is ill-conditioned. The Hessian is ill-conditioned when the solution $x$ is compressible, since many of the weights $\theta$ are very small. Taking $\eta$ to zero sharpens the solution by making these weights even smaller. In \cite{calvetti2020sparse}, the authors introduce a lower  bound on the optimal $\theta$, 
\begin{equation}\label{thetalowerbound}
    \theta \ge \vartheta\left(\frac{\eta}{r}\right)^{1/r}.
\end{equation}
The bound is tight when $x = 0$ and $\theta$ is optimized given $x$. Thus,  $\theta$ corresponding to small $x$ vanish as $\eta$ goes to zero and terms of the form $\theta^{-1}, \theta^{-2}, \theta^{-3}$ diverge. %Hence, careful treatment is needed. 

To solve equation \eqref{ODE}, we take advantage of the structure of Hessian which, after an appropriate change of variables and scaling, is sparse, and nearly a low rank perturbation of an explicitly invertible tridiagonal matrix. The subsequent section introduces the transformations needed to exploit that structure. These transformations drastically reduce the conditioning of the Hessian. The analysis also leads to an approximate inverse, which, if used as a preconditioner, drastically reduces the computational cost needed to solve linear systems involving $H$.

\subsubsection{Change of Variables, Rescaling,
and Decoupling} \label{sec: change of variables}

%\footnote{\ycs{We use $\phi$ here. But after that we still use $\theta$. Do we need some statement or change the variable symbol? We also use $\phi$ in the n by n theorem.} \as{Thanks for this question. We want to be as clear as possible. The Hessian here uses derivatives with respect to $\phi$, but then substitutes in for $\theta = \text{exp}(\phi)$. That will likely confuse readers.}}

First, change variables. Let $\phi=\log (\theta)$. Then:
\begin{equation}\label{Hessian}
H(z \mid \psi)=\left[\begin{array}{cc}
A^{\top} A+D_{\theta}^{-1} & -D_{\theta}^{-1} D_{x} \\
-D_{\theta}^{-1} D_{x} & \frac{1}{2}D_{\theta}^{-1} D_{x}^{2}+r^2\vartheta^{-r} D_{\theta}^{r}
\end{array}\right] .
\end{equation}
Here, $H$ stands for the Hessian with respect to $z = [x,\phi]$. Partials are evaluated with respect to $x$ and $\phi$, but are presented in terms of $\theta = \exp(\phi)$ for concision.

Recall that $\theta$ are the variances for each $x$, so $x^{2} / \theta$ is approximately order one in expectation. More precisely, $x^2/\theta$ is order $x^{2}$ for small $x$, and order $x^{2-\frac{2}{r+1}}=x^{\frac{2 r}{r+1}}$  for large $x$ when $r > 0$. Further asymptotics are available in \cite{calvetti2020sparse}.
Therefore, to avoid very large entries, we aim to pair $\theta^{-1}$ with $x^{2}$ and $\theta^{-1 / 2}$ with $x$.

Scale the Hessian from the left and right by the matrix $D_{[\theta^{1/2} ; 1]}$ where 1 stands for the vector of $n$ ones.
Then,
\begin{equation} \label{eqn: scaled Hessian}
D_{[\theta^{1/2} ; 1]} H D_{[\theta^{1/2} ; 1]} =\left[\begin{array}{cc}
D_\theta^{\frac{1}{2}} A^{\top} A D_\theta^{\frac{1}{2}} + I & -D_{\theta}^{-\frac{1}{2}} D_{x} \\
-D_{\theta}^{-\frac{1}{2}} D_{x} & \frac{1}{2}D_{\theta}^{-1} D_{x}^{2}+r^2\vartheta^{-r} D_{\theta}^{r}
\end{array}\right] = H_S .
\end{equation}
Here $H_S$ denotes the scaled Hessian. Both diagonal matrices are invertible, so any system involving $H$ can be converted to a system involving the $H_S$.

The scaled Hessian, \eqref{eqn: scaled Hessian}, breaks into two components associated with the fidelity and penalty terms of the objective. Accordingly, we separate $H_S$ into a fidelity term, $H_A$, and a penalty term $H_P$. The fidelity term is denoted $H_A$, since it depends primarily on the forward model, $A$.  Then, $H_S = H_A + H_P$ where:
\begin{equation} \label{eqn: fidelity and penalty}
\begin{aligned}
H_{A} &=\left[\begin{array}{cc}
D_{\theta}^{1 / 2} A^{\top} A D_{\theta}^{1 / 2} & 0 \\
0 & 0
\end{array}\right], 
H_{P} =\left[\begin{array}{cc}
I & -D_{\theta}^{-1 / 2} D_{x} \\
-D_{\theta}^{-1 / 2} D_{x} & \frac{1}{2} D_{\theta}^{-1} D_{x}^{2}+r^{2} \vartheta^{-r} D_{\theta}^{r}
\end{array}\right] .
\end{aligned}
\end{equation}

The penalty term, $H_P$ is always invertible. Specifically, the tridiagonal matrix can be broken into a product of triangular and diagonal matrices:
\begin{equation} \label{eqn: penalty decomp}
H_{P}=R^{\top} S R,
\end{equation}
where:
\begin{equation} \label{eqn: R and S}
\begin{aligned}
R =\left[\begin{array}{cc}
I & -D_{\theta}^{-1 / 2} D_{x} \\
0 & I
\end{array}\right], \quad
S =\left[\begin{array}{cc}
I & 0 \\
0 & \frac{1}{2} D_{\theta}^{-1} D_{x}^{2}+r^{2} \vartheta^{-r} D_{\theta}^{r}
\end{array}\right].
\end{aligned}
\end{equation}

The triangular factor $R$ is easily inverted by negating its off-diagonal block:
\begin{equation} \label{eqn: R inverse}
R^{-1}=\left[\begin{array}{cc}
I & D_{\theta}^{-1 / 2} D_{x} \\
0 & I
\end{array}\right].
\end{equation}

Hence $H_{P}^{-1}=R^{-1} S^{-1} R^{-\top}$.
%\footnote{\as{We should use the $^\top$ symbol instead of $^T$ for transpose. Please replace all $^T$ with $^\top$. }}

Thus, after changing variables and scaling, the Hessian can be expressed as the combination of a fidelity term, and an explicitly invertible penalty term. When $x$ is compressible, most of the weights, $\theta$, are small, so $H_A$ is close to sparse. In fact, the nonzero block in $H_A$ is exactly the matrix introduced in \cite{calvetti2019hierachical}, where it was shown that $H_A$ is near to low rank when $x$ is compressible, and, that the effective rank of $H_A$ approaches the cardinality of the true support. Thus, the scaled Hessian is near to a low rank perturbation of the penalty term when the MAP solution $x$ is compressible.

%\as{NOTE: when you write analysis it is important to motivate what you are doing. You always should write these sections by first introducing the problem you mean to solve. You should always close by emphasizing what the analysis has achieved. It is not enough to just present equations. No one will read them. You have to present a narrative. You should only expect a reader to care if you give them the context they need to care.}

%%%%%%%%%%%%%%%%%%%%%%%%%%%%%%%%%%%%%%%%%%%%%%%%%%%%%%%%%%%%%%%%%%%%%%
\subsection{Invertibility of the Hessian} \label{sec: invertibility}

In this section, we show that the Hessian is invertible for almost all combinations of $x$ and $\theta$. Section \ref{sec:unique} establishes that, when the Hessian is invertible, solutions to Equation (\ref{ODE}) are unique, so the path-followin ODE admits unique solution for almost all combinations of $x$ and $\theta$. To simplify the analysis, we continue reducing the Hessian into simpler factors.

\subsubsection{Transforming the Hessian}

To study the invertibility of the Hessian, we study the scaled Hessian. The scaled Hessian is related to the original Hessian by $D_{[\theta^{1/2} ; 1]}$. Since $\theta \geq 0$ (see \eqref{thetalowerbound}), $D_{[\theta^{1/2} ; 1]}$ is invertible. Therefore, the Hessian $H$ is invertible if and only if the scaled Hessian $H_{S}$ is invertible.

As before, the scaled Hessian separates into a fidelity and penalty term,
$$
H_{S}=H_{A}+H_{P}=H_{A}+R^{\top}SR
$$

The triangular factor $R$ is invertible. So, $H_{S}$ is invertible if and only if $R^{-\top}H_{S}R^{-1}$ is invertible. The latter product gives:

\begin{equation*}
\begin{aligned}
R^{-\top}H_{S}R^{-1} & = R^{-\top}H_{A}R^{-1} + S\\ & =\left[\begin{array}{cc}
D_\theta^{\frac{1}{2}} A^{\top} A D_\theta^{\frac{1}{2}} + I & D_\theta^{\frac{1}{2}} A^{\top} A D_{x} \\
D_{x} A^{\top} A D_\theta^{\frac{1}{2}} &
D_{x} A^{\top} A D_{x} +  \frac{1}{2}D_{\theta}^{-1} D_{x}^{2}+r^2\vartheta^{-r} D_{\theta}^{r}
\end{array}\right] .
\end{aligned}
\end{equation*}

Assume $x_{i} \neq 0$ for each $i$. Then, the diagonal matrix $D_{[\sqrt{\theta},x]}^{-1}$ is invertible. Then, $H_S$ is invertible if and only if  $\hat{H}$ is invertible, where
$$
\hat{H}=D_{[\sqrt{\theta},x]}^{-\top} R^{-\top}H_{S}R^{-1} D_{[\sqrt{\theta},x]}^{-1}  =  \left[\begin{array}{cc}
A^{\top}A + D_{\theta}^{-1} & A^{\top}A \\
A^{\top} A  & A^{\top} A  
+\frac{1}{2} D_{\theta}^{-1}
+ r^{2} \vartheta^{-r} D_{\theta}^{r} D_{x}^{-2}
\end{array}\right].
$$

The matrix $\hat{H}$ is easier to analyze than $H$ or $H_S$ since only the diagonal depends on the variables and hyperparameters.  We use $H_S$ for numerics, and $\hat{H}$ for analysis.

To simplify  $\hat{H}$ further, we follow \cite{calvetti2020sparse} and introduce the non-dimensional parameters $u_{j}$ and $\xi_{j}$ such that
$$
x_{j}=\vartheta_{j}^{1/2}u_{j}, \qquad \theta_{j}=\vartheta_{j}\xi_{j}.
$$

In the non-dimensionalized coordinates, the matrix $\hat{H}$ is:
\begin{equation}\label{newhessian}
\hat{H} = \left[\begin{array}{cc} 
\hat{A}^{\top} \hat{A} + D_{\xi}^{-1} & \hat{A}^{\top} \hat{A} \\
\hat{A}^{\top} \hat{A} & \hat{A}^{\top} \hat{A} +\frac{1}{2} D_{\xi}^{-1} + r^{2} D_{\xi}^{r} D_{u}^{-2}
\end{array}\right]
\end{equation}
where $\hat{\mathrm{A}}$ is the column scaled version of $\mathrm{A}$, such that
$$
\hat{\mathrm{A}} = \mathrm{A} D_{\vartheta}^{1/2}.
$$

Since $\vartheta_{j} > 0$ for all $j$, the original Hessian $H$ is invertible if $\hat{H}$ is invertible and $u_j \neq 0$ for all $j$.

\subsubsection{Invertibility}

Here, we  show that the $n \times n$ Hessian matrix is  invertible for almost all $x$. To start, we consider a simplified version of the problem.
%We begin by Lemma \ref{naivenonsingular}.
\begin{lemma}\label{naivenonsingular}
    Let $D_x$ be a diagonal matrix with diagonal entries $x$, and $A$ be a fixed matrix. Then $B(x) = A+D_x$ is invertible for almost all $x$. If the entries $x$ are continuously distributed, then $B(x)$ is almost always invertible.
\end{lemma}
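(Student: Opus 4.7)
The plan is to show that $\det B(x)$, viewed as a function of $x = (x_1, \ldots, x_n)$, is a nonzero polynomial, and then to invoke the standard fact that the zero set of a nonzero polynomial on $\mathbb{R}^n$ has Lebesgue measure zero.

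The first step is the Leibniz expansion of the determinant:
\[
\det(A + D_x) = \sum_{\sigma \in S_n} \mathrm{sgn}(\sigma) \prod_{i=1}^n \bigl(A_{i,\sigma(i)} + x_i\,\delta_{i,\sigma(i)}\bigr).
\]
Distributing the products yields a polynomial $p(x_1, \ldots, x_n)$ of total degree at most $n$ whose coefficients depend only on the entries of $A$.

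The second step is to show $p \not\equiv 0$ by isolating the coefficient of the top-degree monomial $x_1 x_2 \cdots x_n$. Since $(D_x)_{i,j} = 0$ whenever $i \neq j$, the only way to obtain the factor $x_i$ in a given term of the Leibniz sum is to select the $D_x$ contribution in row $i$, which forces $\sigma(i) = i$. Hence only the identity permutation contributes to this monomial, with coefficient $\mathrm{sgn}(\mathrm{id}) = 1$. Thus $p(x) = x_1 x_2 \cdots x_n + (\text{lower-order terms})$, so $p$ is a nonzero polynomial in $n$ real variables.

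The final step applies the standard measure-theoretic lemma that the zero set of a nonzero polynomial on $\mathbb{R}^n$ has Lebesgue measure zero (proved by induction on $n$ via Fubini, using that a nonzero univariate polynomial has finitely many roots). This gives invertibility of $B(x)$ for Lebesgue-almost every $x$. The second claim then follows immediately, since any distribution absolutely continuous with respect to Lebesgue measure assigns probability zero to null sets. No serious obstacle is anticipated; the only subtle point is ruling out cancellation in the coefficient of $x_1 \cdots x_n$, which is handled by the observation that non-identity permutations cannot produce the full product of diagonal $x_i$ factors.
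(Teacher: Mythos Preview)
Your argument is correct. The determinant $\det(A+D_x)$ is indeed a polynomial in $(x_1,\dots,x_n)$, the coefficient of $x_1\cdots x_n$ is $1$ because only the identity permutation can contribute all $n$ diagonal factors, and the zero set of a nonzero polynomial has Lebesgue measure zero. Nothing is missing.

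The paper, however, takes a different route. Rather than invoking the polynomial structure of the determinant, it argues geometrically and inductively on the columns: fix $x_1,\dots,x_{n-1}$, let $\mathcal{S}$ be the span of the first $n-1$ columns of $B(x)$, and observe that as $x_n$ varies the last column traces a line $\mathcal{L}=a_n+x_n e_n$. A case analysis (the line meets $\mathcal{S}$ transversally, is parallel to it, or lies inside it) shows that $B(x)$ can be singular for at most one $x_n$ unless the $(n-1)\times(n-1)$ leading minor is already singular, and the argument recurses. Your polynomial approach is shorter and more standard for this lemma in isolation. The paper's column-by-column induction, on the other hand, is the template reused in the proof of the main invertibility theorem, where the entries of the diagonal perturbation are coupled through the constraint $\theta_j=\phi(x_j)$ and one must append two rows and columns at a time; there the determinant is no longer a free polynomial in independent variables, so your argument would not transfer directly, whereas the geometric induction does.
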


\begin{proof}
Let $D_x = \mathrm{diag}(x_{1}, \cdots , x_{n})$ and $A$ have columns $a_1, a_2, \hdots, a_n$. Then
$$
B(x) = A+D_x = \left[ \begin{array}{ccc}
a_{1}+x_{1}e_{1} & \cdots & a_{n}+x_{n}e_{n}  
\end{array}\right],
$$
where $e_{i}$ is the $i^{th}$ column of the $n \times n$ identity matrix.

Let $b_{i} = a_{i} + x_{i}e_{i}$. Fix $x_{1} \dots x_{n-1}$, Let $\mathcal{S} = \text{span}(b_{1}, \dots b_{n-1})$. The remaining column, $b_{n}(x)$ depends only on $x_{n}$. The
set of possible $b_{n}(x)$ is a line, $\mathcal{L}$,  which passes through $a_n$ in the $n^{th}$ coordinate direction, $e_n$. There three possible geometries relating $\mathcal{S}$ and $\mathcal{L}$ labelled Intersection, Parallel, and Inside in Fig \ref{fig:location}.

\begin{figure}
    \begin{center}
  \begin{minipage}{1\textwidth}
     \centering
     \includegraphics[width=.7\linewidth]{ 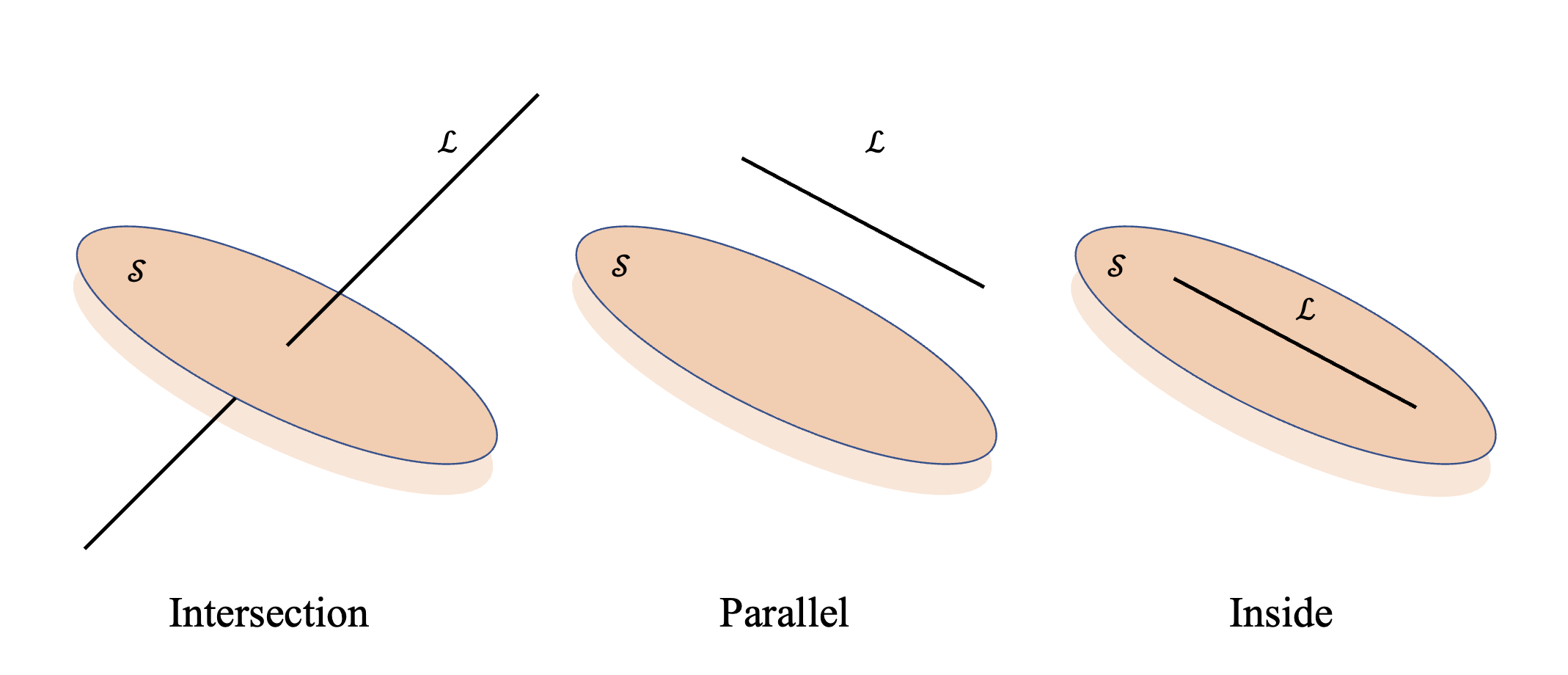}
  \end{minipage}\hfill
  \end{center}
  \caption{Possible line–plane intersection.}
\label{fig:location}
\end{figure}

\begin{enumerate}

\item{\textbf{Intersection:}} In the Intersection case, $e_n$ is not contained in $\mathcal{S}$ so $\mathcal{L}$ and $\mathcal{S}$ intersect at a single point. Then there is only one $x$ such that $b_{n}(x)$ is in $\mathcal{S}$. So, if $e_n \notin \mathcal{S}$, then, for any choice of $x_1,\hdots x_{n-1}$, there is precisely one $x_n$ such that $B(x)$ is singular. Thus, the set of $x$ where $B(x)$ is singular and $e_n \notin \mathcal{S}$ is a manifold of codimension 1 and has measure zero.                                                               

\item{\textbf{Parallel: }} Alternatively, suppose that $e_n \in \mathcal{S}$. Then $\mathcal{L}$ is parallel to $\mathcal{S}$, so is either contained inside of $\mathcal{S}$ for all $x$ (Inside), or never intersects the subspace (Parallel). In the parallel case, $b_n(x)$ is never in $\mathcal{S}$, so $B(x)$ is invertible for all $x_n$. %Thus, the only remaining case where $B(x)$ is singular for a set of $x_n$ of nonzero measure is the case when $\mathcal{L} \subset \mathcal{S}$.

\item{\textbf{Inside:} } Suppose that $\mathcal{L} \subset \mathcal{S}$. Then, $a_{n} \in \mathcal{S}$, and $e_{n} \in \mathcal{S}$. If $e_n \in \mathcal{S}$, then, by definition, there exist coefficients $t = t_{1}, \dots, t_{n-1}$ not all zero, such that $\sum_{i=1}^{n-1} t_{i}b_{i}=e_{n}$.  Let $B^{(m)}$ denote the $m\times m$ minor of $B$ consisting of its first $m$ rows and columns. Since all but the $n^{th}$ entry of $e_n$ are zero, $e_n \in \mathcal{S}$ requires $B^{(n-1)} t = 0$. Therefore, the line $\mathcal{L}$ is only contained in $\mathcal{S}$ if $B^{(n-1)}$ is singular. If the minor is nonsingular, then $\mathcal{L}$ is never contained in $\mathcal{S}$, so $B(x)$ is only singular on a set of measure zero.

\end{enumerate}

Now the argument recurses. %Any minor $B^{(m)}(x)$ is singular on a set of measure zero in $\mathbb{R}^m$ unless $B^{(m-1)}(x)$ is singular on a set with nonzero measure in $\mathbb{R}^{m-1}$. Stating the contrapositive, 
If $B^{(m-1)}(x)$ is singular on a set of measure zero, so is $B^{(m)}(x)$. Induction follows. 

All that remains is the base case, $m = 1$. If $m = 1$ then $B^{(1)}(x) = a_{11} + x_1$ is a $1\times1$ matrix, so is singular if and only if it is zero. But $a_{11} + x_1 = 0$ only holds for exactly one $x_1$. Thus, $B^{(1)}(x)$ is singular on a set of measure zero in $\mathbb{R}$. Then, by induction, $B(x)$ is singular on a set of measure zero in $\mathbb{R}^n$.
\end{proof}

To apply Lemma \ref{naivenonsingular}, note that $\hat{H}$ \eqref{newhessian} is the sum of a fixed matrix and a diagonal matrix that depends on $u$ and $\xi$,
$$
\hat{H} = \left[\begin{array}{cc} 
\hat{A}^{\top} \hat{A} & \hat{A}^{\top} \hat{A} \\
\hat{A}^{\top} \hat{A} & \hat{A}^{\top} \hat{A}
\end{array}\right]
+ \left[\begin{array}{cc}
D_{\xi}^{-1} & 0 \\
0 & \frac{1}{2} D_{\xi}^{-1} + r^{2} D_{\xi}^{r} D_{u}^{-2}
\end{array}\right].
$$

Without restricting to optimal solutions, $\hat{H}$ is of the form given in Lemma \ref{naivenonsingular} since $u$ and $\xi$ can be chosen independently. Thus, Lemma \ref{naivenonsingular} establishes that the Hessian is invertible for almost all $x$ and $\theta$. Crucially, this result is not strong enough to ensure invertibility along solution paths, since, at solutions to the optimization problem, $z = [x,\theta]$ is restricted to a manifold specified by $\xi = \varphi(u)$ (see Equation \eqref{eqn: manifold}). Let $\phi$ denote the corresponding function relating the original coordinates. Then, $\theta_j = \phi(x_j)$ defines the solution manifold. To show that the ODE is well defined almost everywhere we must show that the Hessian is invertible almost everywhere on the solution manifold. Note that, in practice Lemma \ref{naivenonsingular} is enough to ensure that the Hessian experienced numerically will be invertible. Any numerical method will retain random errors, so the iterates will almost never lie exactly on the solution manifold. 

\begin{theorem}[Invertibility] \label{thm: invertibility}
The Hessian matrix $H(z)$ \eqref{Hessian} is invertible for almost all $z = [x,\theta]$ such that $\theta = \phi(x)$. That is, for almost all $z$ in the solution manifold.  
\end{theorem}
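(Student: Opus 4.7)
The plan is to exploit the block structure of $\hat{H}$ to reduce the invertibility question on the solution manifold to a single-block problem of the kind treated in Lemma~\ref{naivenonsingular}. First I would analyze the null space of $\hat{H}$ from equation~\eqref{newhessian} directly. If $\hat{H}(v,w)^{\top}=0$, subtracting the two block equations eliminates $M:=\hat{A}^{\top}\hat{A}$ and gives $v = P w$ with the diagonal matrix $P = \tfrac{1}{2}I + r^{2}D_{\xi}^{r+1}D_{u}^{-2}$. Substituting back into the first block, $\hat{H}$ is singular iff $(MQ + D_{\xi}^{-1}P)w = 0$ for some $w \neq 0$, where $Q := P+I$. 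Since $P$ and $Q$ are positive diagonal, this is equivalent to singularity of $M + G$ where $G = D_{\xi}^{-1}PQ^{-1}$ is diagonal with scalar entries $g_{j}$. The point is that $\hat{H}$ has collapsed into an ``$M + \diag$'' matrix, precisely the form of Lemma~\ref{naivenonsingular}.

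Next I would specialize to the solution manifold. Because the penalty separates, the optimality condition in $\theta$ given $x$ reduces, in non-dimensional coordinates, to the scalar relation $\xi_{j}^{r+1} = (\eta/r)\xi_{j} + u_{j}^{2}/(2r)$. The implicit function theorem, applied to the derivative $(r+1)\xi^{r}-\eta/r$ which is strictly positive on the branch $\xi \ge (\eta/r)^{1/r}$ guaranteed by~\eqref{thetalowerbound}, gives $\xi_{j} = \varphi(u_{j})$ as a real-analytic function of the single variable $u_{j}$. Composing, each diagonal entry $g_{j}$ becomes a real-analytic function $\tilde{g}(u_{j})$ of $u_{j}$ alone. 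A direct limit calculation shows $\tilde{g}(u_{j}) \to (r/\eta)^{1/r}$ as $u_{j}\to 0$ and $\tilde{g}(u_{j}) \to 0$ as $|u_{j}|\to\infty$, so $\tilde{g}$ is nonconstant.

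With the reduction in hand, I would rerun the geometric recursion of Lemma~\ref{naivenonsingular} with $x_{j}$ replaced by $\tilde{g}(u_{j})$. Fixing $u_{1},\dots,u_{n-1}$ and varying $u_{n}$, the $n$th column of $M + \diag(\tilde{g}(u_{1}),\dots,\tilde{g}(u_{n}))$ moves along a line in direction $e_{n}$, now parameterized by $\tilde{g}(u_{n})$ rather than $u_{n}$ itself. The Intersection case yields at most one bad value of $\tilde{g}(u_{n})$, whose preimage under the nonconstant real-analytic function $\tilde{g}$ is a discrete set, hence measure zero in $\mathbb{R}$. The Parallel case gives no bad value. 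The Inside case again forces the leading $(n-1)\times(n-1)$ minor to be singular, which triggers the induction; the base case $m=1$ is a scalar analytic equation with isolated zeros. A Fubini argument, identical in structure to the one implicit in Lemma~\ref{naivenonsingular}, then gives measure zero in $u$. Pushing the conclusion back through the invertible maps $\hat{H} \leftrightarrow H_{S} \leftrightarrow H$ and the non-dimensionalization $(x,\theta)\leftrightarrow(u,\xi)$ finishes the proof, modulo the hyperplanes $\{x_{j}=0\}$, which are themselves measure zero in the solution manifold.

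The main obstacle, and the only real novelty beyond Lemma~\ref{naivenonsingular}, is confirming that the restriction to the solution manifold does not convert the ``at most one bad value of the diagonal'' statement into ``an interval of bad values of $u_{n}$.'' This is exactly what the nonconstancy of the analytic function $\tilde{g}$ buys us: the preimage under $\tilde{g}$ of any single point in $\mathbb{R}$ is discrete, so the bad set for $u_{n}$ remains of measure zero after we substitute a free scalar parameter with $\tilde{g}(u_{n})$. Everything else is a routine transcription of the lemma.
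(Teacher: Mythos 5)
Your proposal is correct, but it takes a genuinely different route from the paper's proof. The paper permutes $\hat H$ so that the rows/columns for $x_j$ and $\theta_j$ are interleaved and runs a two-stage induction with Lemma~\ref{lem: two by two} as base case: appending the $x_{k+1}$ row/column is a Lemma~\ref{naivenonsingular}-type step, while appending the $\theta_{k+1}$ row/column requires the explicit algebra of \eqref{2row}--\eqref{funxi}, i.e.\ sign analysis of $f,g$, substitution of the optimality condition \eqref{eq}, and a monotonicity argument showing at most one bad $\xi_{k+1}$. You instead eliminate the block structure outright: solving the null-space equations of \eqref{newhessian} gives $v=Pw$ and reduces singularity of the $2n\times 2n$ matrix $\hat H$ to singularity of the $n\times n$ matrix $M+\diag\bigl(\tilde g(u_1),\dots,\tilde g(u_n)\bigr)$ (your identities $P=\tfrac12 I+r^2D_\xi^{r+1}D_u^{-2}$, $G=D_\xi^{-1}PQ^{-1}$ check out), after which the solution-manifold constraint enters only through the single-variable analytic function $\tilde g=g\circ\varphi$; nonconstancy of $\tilde g$ (your limits $(r/\eta)^{1/r}$ at $0$ and $0$ at infinity are correct for $r>0$) makes its level sets discrete, so the ``one bad diagonal value'' count of Lemma~\ref{naivenonsingular} survives the substitution. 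This buys a halving of the dimension and replaces the paper's case-by-case sign/monotonicity computation with a soft analyticity argument; indeed, from your reduction you could finish even faster by noting that $\det\bigl(M+\diag(\tilde g(u_j))\bigr)$ is real-analytic on each orthant in $u$ and strictly positive as $u\to 0$ (there $G\to(r/\eta)^{1/r}I$ and $M+cI$ is positive definite), so its zero set is null without rerunning the recursion at all. What the paper's argument buys in exchange is elementarity (no analytic implicit function theorem) and uniform coverage of both admissible hyperparameter regimes ($r>0,\eta>0$ and $r<0,\eta<-3/2$, using only $r\eta>0$); your branch-positivity and limit computations are written for $r>0,\eta>0$, so the other regime needs a one-line check that $\tilde g$ is still nonconstant there. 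Like the paper, you correctly set aside the null set $\{x_j=0\}$ on which the passage from $H$ to $\hat H$ is not available.
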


The proof proceeds inductively. \ref{sec:2 by 2} establishes the 2×2 base case. Each unknown
entry of the signal is coupled to an unknown variance, so we must introduce two columns
and two rows to the Hessian at a time. \ref{naivenonsingular} proves that the induction hypothesis holds when a new column and row corresponding to $x_j$ are added. We show that the induction hypothesis also holds when appending a row and column corresponding to $\theta_j$ by explicitly converting a linear dependency check into an algebraic condition that holds almost nowhere. See \ref{sec:invertibility} for details.

\subsection{Uniqueness of Solutions}\label{sec:unique}

When $H$ is invertible, the ODE \eqref{ODE} admits unique solutions. Then, since the Hessian is invertible at almost everywhere (Theorem \ref{thm: invertibility}), the ODE (\ref{ODE}) admits unique solution for all most all $z$. 

\begin{theorem}\label{thm: uniqueness}
Assume $\psi(t)$ is continuously differentiable. When Hessian $H$ is invertible at $\left(z_{0}, \psi(t_{0})\right)$, the path following ODE \eqref{ODE} with initial value $z(t_{0})=z_{0}$ has a unique solution $z(t)$ on a closed interval containing $t_{0}$.
\end{theorem}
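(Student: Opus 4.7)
The plan is to reduce this to the classical Picard--Lindel\"of (Cauchy--Lipschitz) theorem applied to an explicit first-order ODE. The statement in \eqref{ODE} is implicit in $\dot z_*$, so the first task is to solve for $\dot z_*$ by inverting $H$, and then verify that the resulting right-hand side has enough regularity in $(t,z)$ to invoke the standard existence/uniqueness result.

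First, I would promote pointwise invertibility of $H$ at $(z_0,\psi(t_0))$ to invertibility on a neighborhood. The entries of $H$ in \eqref{Hessian} are smooth functions of $(z,\psi)$ on the domain where $\theta>0$ and $r\neq 0$ (in particular, the lower bound \eqref{thetalowerbound} keeps $\theta$ bounded away from zero at the initial point, and $\psi\in C^1$ keeps the hyperparameters in the admissible set for small $|t-t_0|$). Consequently $\det H$ is continuous in $(z,t)$, and since $\det H(z_0\mid\psi(t_0))\neq 0$ there exist an open neighborhood $U\ni z_0$ and an open interval $I\ni t_0$ on which $H(z\mid\psi(t))$ is invertible.

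Second, on $U\times I$ I would left-multiply \eqref{ODE} by $H^{-1}$ to recover the explicit system
\begin{equation*}
\frac{d}{dt}z(t) \;=\; -\,H(z\mid\psi(t))^{-1}\,\nabla_z\!\bigl(\nabla_\psi\mathcal{G}(z\mid\psi(t))\cdot\dot\psi(t)\bigr) \;=:\; F(z,t).
\end{equation*}
Because $\mathcal{G}$ in \eqref{obj} is $C^\infty$ in $z$ on $\{\theta>0\}$ and $C^\infty$ in $\psi$ on the admissible hyperparameter set, the mixed partials $\nabla_z\nabla_\psi\mathcal{G}$ are smooth on $U\times I$. Inversion preserves smoothness wherever $H$ is invertible (via Cramer's rule or the implicit function theorem), so $(z,t)\mapsto H^{-1}(z\mid\psi(t))$ is smooth on $U\times I$. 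Combined with the hypothesis $\psi\in C^1$, which makes $\dot\psi$ continuous in $t$, this shows that $F$ is continuous in $t$ and $C^1$ (hence locally Lipschitz) in $z$ on $U\times I$.

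Finally, I would apply Picard--Lindel\"of to the initial-value problem $\dot z = F(z,t)$, $z(t_0)=z_0$, which yields a $\delta>0$ and a unique $C^1$ solution on the closed interval $[t_0-\delta,\,t_0+\delta]\subset I$. Uniqueness for the original implicit equation \eqref{ODE} follows immediately, because on $U\times I$ the two forms are equivalent under the invertible change $\dot z\mapsto H\dot z$. The main obstacle is really a bookkeeping one: confirming that $F$ is genuinely smooth on a neighborhood requires the domain conditions ($\theta>0$, $r\neq 0$, $\eta$ in its admissible range) to persist along the candidate trajectory for small times, which is guaranteed by continuity of $\psi$, the bound \eqref{thetalowerbound}, and shrinking $\delta$ if necessary. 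No deeper structure of $\mathcal{G}$ beyond its smoothness and the invertibility hypothesis is needed.
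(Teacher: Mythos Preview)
Your proposal is correct and follows essentially the same route as the paper: invert $H$ on a neighborhood using continuity of $\det H$, rewrite \eqref{ODE} as an explicit system $\dot z = F(z,t)$, verify $F$ is continuous in $t$ and $C^1$ in $z$, and invoke Picard--Lindel\"of. The only cosmetic difference is that the paper writes out the mixed partials $\partial_r\nabla_z\mathcal{G}$, $\partial_\eta\nabla_z\mathcal{G}$, $\partial_\vartheta\nabla_z\mathcal{G}$ explicitly rather than appealing to smoothness of $\mathcal{G}$; one minor caveat is that your citation of \eqref{thetalowerbound} is slightly off (that bound holds only at the optimal $\theta$), but all you actually need is that $\theta_0>0$ componentwise, which is implicit in the domain of $H$ and persists on a small time interval by continuity.
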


See \ref{sec: Uniqueness} for the proof of \ref{thm: uniqueness}.

Theorem \eqref{thm: uniqueness} ensures that the ODE \eqref{ODE} admits a unique solution on an open neighborhood of any initial point where the Hessian $H$ is invertible. Theorem \eqref{thm: invertibility} ensures that $H$ is invertible for almost all $z$. Therefore, the solution to the path following ODE remains unique for almost all $z$.

%%%%%%%%%%%%%%%%%%%%%%%%%%%%%%%%%%%%%%%%%%%%%%%%%%%%%%%%%%%%%%%%%

%%%%%%%%%%%%%%%%%%%%%%%%%%%%%%%%%%%%%%%%%%%%%%%%%%%%%%%%%%%%%%%%%%%%%%%

\section{Methods} \label{sec: Methods}

\subsection{Path Following}
Here, we propose a Predictor-Corrector Algorithm, that traces the entire path of the estimates $(x,\theta)$ as the hyperparameters $\left(r, \beta, \vartheta\right)$ vary.

Let $\psi(t) = \left(r(t), \beta(t), \vartheta(t)\right)$ be a smooth path through the hyperparameter space that starts from $\psi(0)$ and arrives at  $\psi(T)$. Let $z_*(0) = (x(0), \theta(0))$ be a minimizer at time $t=0$. If the solutions depend continuously on the hyperparameters then the path $\psi(t)$ corresponds to a path of solutions $z_*(t)$. Unlike the IAS algorithm \cite{calvetti2020sparse}, where the hyerparameters are fixed or hybrid IAS algorithm \cite{calvetti2020sparsity}, where the hyperparameters jump, the Predictor-Corrector Algorithm updates the hyperparameters continuously.  It allows a user to explore the space of possible solutions, to study the sensitivity and robustness of solutions to changes in the hyperparameters, and to study how specific changes in those assumptions change the solution. 

We are particularly interested in paths that start at a convex region and end at a non-convex region because, when convex, the MAP estimation problem admits a unique solution that can be accurately obtained by IAS. That solution can initalize the Predictor-Corrector algorithm. The convex regime does not strongly promote sparsity, so a non-convex prior model that strongly promotes sparsity is often desired. In this context, the path-following approach acts as a convex relaxation of a non-convex problem. The non-convex problem may admit local minima, so it is usually not possible to recover a global minimizer. Nevertheless, it is possible to find a minimizer which is a consistent extension of the unique global minimizer in a convex relaxation of the non-convex problem. Thus, when moving from a convex to non-convex setting, path-following provides a principled method for selecting among solutions of the non-convex problem. We provide examples to show that these solutions are more accurate than direct minimization (see Section \ref{sec: deconvolution}).  %The relation between the convexity of the optimization problem, shape of the effective regularizer, and hyperparameters is  discussed in \cite{calvetti2020sparse}.

%%%%%%%%%%%%%%%%%%%%%%%%%%%%%%%%%%%%%%%%%%%%%%%%%%%%%%%%%%%%%%%%%%%%%%%
\subsection{Predictor-Corrector}

The hyperparameters change continuously, so the corresponding  solutions should also change smoothly. Therefore, the solution at the current hyperparameters provides a good starting point for finding the solution at nearby hyperparameters. The predictor step predicts how the current solution will change after changing the hyperparameters. It provides an initial estimate to the solution at nearby hyperparameters. Correction revises the prediction. The algorithm iteratively alternates between an ODE (\ref{ODE}) based predictor step and a Newton based corrector step. All steps use warm starts (are initialized at the previous solutions).

%%%%%%%%%%%%%%%%%%%%%%%%%%%%%%%%%%%%%%%%%%%%%%%%%%%%%%%%%%%%%%%%
\subsubsection{The Prediction Step}
%% argue that forward Euler steps act like Newton steps on a linear update to the current cost function model
The ODE system (\ref{ODE}) introduced in section \ref{sec: ODEsystem} enables prediction. It accounts for the rate of change in the hyperparameters and allows larger steps than correction alone. By solving \ref{ODE}, we predict the solution at the next time step. For simplicity, we take an Euler step $ z^{p}_{t+1} = z^{*}_{t} + \frac{d}{d t} z^{*}_{t} \Delta t  $. The size of the derivative, and subsequent step, measures the sensitivity of the solution.

To conceptualize the update, note that the Euler forward predictor step is a Newton step on the linear approximation to the future local quadratic model. Let $\mathcal{Q}(z_{*} \mid r, \eta, \vartheta)$ be the local quadratic model to $\mathcal{G}(z_{*} \mid r, \eta, \vartheta)$. If $\nabla_{z} \mathcal{G} (z_{*} \mid r, \eta, \vartheta)=0$, then the right hand side of equation \eqref{ODE} is:
\begin{equation}
\begin{aligned}
& -\nabla_{z}\left(\mathcal{Q}(z_{*} \mid \psi)+ \nabla_{\psi} \mathcal{Q}(z_* \mid \psi) \cdot \frac{d}{dt} \psi(t) \Delta t \right)
%\partial_{r} \mathcal{Q}(z_{*} \mid r, \eta, \vartheta) \frac{d}{d t} r(t) + \right. \\
%& \hspace{2cm}  \left. \partial_{\eta} \mathcal{Q}(z_{*} \mid r, \eta, \vartheta) \frac{d}{d t} \eta(t) + \partial_{\vartheta} \mathcal{Q}(z_{*} \mid r, \eta, \vartheta) \frac{d}{d t} \eta(t)\right)
\end{aligned}
\end{equation}
Hence each Euler update acts   as a Newton step on the linear approximation to the next local quadratic model.

The forward Euler step requires  $\frac{d}{dt}z^*_t$. Each derivative is a solution to the linear system \eqref{ODE}. At first glance, finding $\frac{d}{dt} z_*(t)$ ought to be very expensive, since every update step requires solving a new linear system. That said, the Hessian depends continuously on the solution and hyperparameters, as does the right hand side of \eqref{ODE}. Then, subsequent systems are close to identical when the updates are small. Since we repeatedly solve similar linear systems involving $H$, we can reuse past solutions as initial iterates in iterative linear system solvers. Then, each update is computed iteratively starting from an iterate that is close to the true derivative. Iterative solvers are especially well-suited when the forawrd model $A$, and consequently, $H$, is sparse. In practice, we compute a preconditioner (see \ref{sec: pre}) that approximates the inverse  Hessian. Preconditioning speeds convergence. %Preconditioning is discussed in section \ref{sec: pre}, and depends heavily on the analysis of the Hessian performed in section \ref{sec: hessian}.

The ODE system (\ref{ODE}) holds the gradient of the objective constant, whether or not it is zero, so may accumulate error. We correct by directly optimizing the objective, starting from the predicted solution step. The correction step ensures fidelity to the solution path, thereby evading a more sophisticated ODE update. Future implementations could replace a forward Euler step with an alternative ODE update, could adopt an adaptive time-step that tunes $\Delta t$ according to $\frac{d}{dt}z^*_t$, or could modify the path-tracing ODE to incorporate a gradient descent term.

%%%%%%%%%%%%%%%%%%%%%%%%%%%%%%%%%%%%%%%%%%%%%%%%%%%%%%%%%%%%%%%%

\subsubsection{The Correction Step}\label{Correct}

The ODE used in prediction is unaware of the underlying optimization problem. Instead, it holds the gradient constant along any solution path. %Thus, there exist solutions to the ODE that are not a path of minimizers. 
Hence we need a correction step to correct the errors introduced by prediction. We use a second-order Newton based corrector, because $z^p_t$ is usually close to the exact solution $z^*_t$, and prediction uses a Newton step, so must already be implemented efficiently. If prediction is sufficiently cheap, so is second-order correction. 

When close to the optimizer, second-order correction is much more accurate than cheaper first-order correction since second-order methods converge quadratically. Faster convergence rates ensure more accurate correction when accrued errors are small. The advantages of second order correction are illustrated in Section \ref{sec: deconvolution}. There, we compare second-order correction with a first-order IAS corrector, which converges linearly on the support of the true solution \cite{calvetti2019hierachical}.Newton correction outperforms IAS correction since it converges quadratically near the minimizer.

The Newton direction $\delta z^{p}$ is the solution to:
\begin{equation}\label{Newton}
H\left(z^{p} \mid \psi \right) \delta z^{p}=-\nabla_z \mathcal{G}\left(z^{p} \mid \psi\right)
\end{equation}
Therefore, correction and prediction are both linear systems of the same form. Accordingly, we adopt the same methods used to speed prediction to speed correction. We precondition and use a Krylov iterative solver \cite{knoll2004jacobian} with warm starts.
We backtrack to prevent over-stepping. In the computed examples, we adopt Armijo's sufficient decrease backtracking strategy.
%%$$\mathcal{G}\left(z_{t-1}+\Delta t \delta z^p_{t-1}\right) \leq f\left(x_{k}\right)+ \alpha \Delta t p_{k}^{\top} \nabla f\left(x_{k}\right)$$
Since Newton is prone to over-stepping in poorly scaled problems, we select large $\alpha$ values that keep initial steps small.

\subsubsection{Solving the Linear Systems}

%Both the prediction and correction steps require solving a sequence of similar linear systems involving a gradually changing Hessian. Since the problems solved change slowly, and are often large and sparse, we use Krylov iterative solvers initialized from previous solutions. 

Even after the transforms introduced in \ref{sec: hessian}, the Hessian is often still ill-conditioned, especially near sparse estimates. We propose a preconditioning strategy in section \ref{sec: pre} which both speeds convergence and avoids amplifying errors. The preconditioner is carefully designed to take advantage of the sparse, near low-rank nature of the Hessian. In practice, it is both cheap to compute, and significantly boosts the performance of the iterative solvers. After preconditioning, we apply CGLS initialized at the previous estimate.

\subsubsection{The Algorithm}

The full Predictor-Corrector algorithm is summarized in Algorithm \ref{alg:PC}.

  \begin{algorithm}
  \caption{Predictor-Corrector Algorithm}
  \begin{algorithmic}[]    
    \STATE {\bf Input:}
Noisy data $b \in \mathbb{R}^{m}$,
linear forward operator $A \in \mathbb{R}^{m \times n}$, prior hyperparameters path from $\left(r(0), \beta(0), \vartheta(0)\right)$ to $\left(r(T), \beta(T), \vartheta(T)\right)$, maximum iterations of Krylov method $M$, tolerance $\epsilon$
     \STATE {\bf Output:}                                            
\text {estimated signal and variance path} $x_t^{*}, \theta_t^{*} \in \mathbb{R}^{n}$

    \STATE {\bf Initialize:} set $z^*_0 = (\theta^*_0,x^*_0)$ = minimizer with hyperparameter   $\left(r(0), \beta(0), \vartheta(0)\right)$, $\frac{d}{d t} z^{*}_{0}=0$, $\delta z_0^p = 0$
    \STATE \hspace*{0.02in}{\bf for:} t = 1,2,...N \hspace*{0.02in}{\bf do:}
    
    \STATE \qquad {\bf if:}
    iterations $> M$
    \STATE \qquad \qquad Build the preconditioner ({\ref{precondition}})
   \STATE \qquad {\bf endif:}
    \STATE \qquad Solve $\frac{d}{d t} z^{*}_{t}$ (ODE (\ref{ODE})) using preconditioned CGLS initialized with $\frac{d}{d t} z^{*}_{t-1}$.
   \STATE \qquad Predict: $ z^{p}_{t+1} = z^{*}_{t} + \Delta t \frac{d}{d t} z^{*}_{t}$.
    \STATE \qquad {\bf if:}
    iterations $> M$
    \STATE \qquad \qquad Build the preconditioner ({\ref{precondition}})
    \STATE \qquad {\bf endif:}
    \STATE\qquad Solve $\delta z_t^p$  (equation (\ref{Newton})) using preconditioned CGLS initialized with $\delta z^p_{t-1}$
    \STATE \qquad {\bf while:}
    $ ||\delta z^p_t|| > \epsilon$
    \STATE \qquad \qquad Correct: $z^*_t = z^p_t + \alpha \delta z^p_t$ with backtracking globalization
    \STATE \qquad {\bf endwhile:}
    \STATE \hspace*{0.02in}{\bf end for}
  \end{algorithmic}
  \label{alg:PC}
\end{algorithm}

%%%%%%%%%%%%%%%%%%%%%%%%%%%%%%%%%%%%%%%%%%%%%%%%%%%%%%%%%%%%%%%%%%
\subsection{Preconditioning}\label{sec: pre}

Recall that, after changing variables, the scaled Hessian can be broken into a fidelity term, $H_A$, associated with the forward model, and a penalty term, $H_P$, associated with the effective regularizer (see equation \eqref{eqn: fidelity and penalty}). The penalty term is tridiagonal and explicitly invertible (see equations \eqref{eqn: penalty decomp} to \eqref{eqn: R inverse}). The scaled fidelity term $D_{\theta}^{1 / 2} A^{\top} A D_{\theta}^{1 / 2}$ is the same matrix studied in \cite{calvetti2020sparse}. When $x$ is sparse, the fidelity 
term is close to low rank. Therefore, $\widetilde{H}$ is, a, approximately, a low rank perturbation of an explicitly invertible matrix. Accordingly, we use the Woodbury matrix identity \cite{sherman1950adjustment} to build an approximate inverse preconditioner.

If we can find a low-rank approximation, $\widetilde{H}_{A}=U U^{\top}+E$ where $U$ is $2 n \times r$, $r \ll n$ is the effective rank, and $E$ is the error in the low rank approximation, then
the Woodbury inverse \cite{sherman1950adjustment} is:
\begin{equation}\label{precondition}
\widetilde{H}^{-1} \simeq \widetilde{H}_{P}^{-1}-\widetilde{H}_{P}^{-1} U\left(I_{r \times r}+U^{\top} \widetilde{H}_{P}^{-1} U\right)^{-1} U^{\top} \widetilde{H}_{P}^{-1}
\end{equation}
The inner term is $r \times r$ so is cheap to evaluate if $r$ is small.
If an iterative solver is used then all other products need not be performed explicitly.

To build a sparse low rank approximation of the fidelity term, $D_{\theta}^{1 / 2} A^{\top} A D_{\theta}^{1 / 2}$, pick an accuracy tolerance $\epsilon<1$. Then, set all columns with column sum less than $\frac{1}{2} \epsilon$ equal to zero. Near a sparse solution, most variances $\theta$ are small, so usually only a small subset of columns remain. Then, build a low rank approximation to these columns (for example, use a truncated SVD, truncated at the rank such that the $l_{\infty}$ error in the approximation is less than $\frac{1}{2} \epsilon$ ). The low rank approximation is sparse since most columns/rows are set to zero and is cheap since it only requires a partial SVD of a small subset of the original matrix.     

 By construction, the error $\|E\|_{\infty}$ is less than $\frac{1}{2} \epsilon$ + $\frac{1}{2} \epsilon = \epsilon$. Then, by the Gershgorin disk theorem, the largest eigenvalue of $E=D_{\theta}^{-1 / 2} A^{\top} A D_{\theta}^{-1 / 2}-U U^{\top}$ is smaller in magnitude than $\|E\|_{\infty}=\max \left\{\sum_{j}\left|e_{i j}\right|\right\} \leq \epsilon$. So, to ensure that the spectral radius of $E$ is less than 1, we only need to pick $\epsilon<1$.  
 Iterative methods for solving the linear system will converge quickly when the spectral radius of $E$ is less than 1 since iterative methods implicitly build a polynomial that approximates the power series expansion in $E$ to the inverse of  $\tilde{H}$.
 
The resulting preconditioner, \eqref{precondition}, aims to speed iterative methods. Thus, the preconditioner is ineffective if iterative solvers require many iterations. We monitor the iteration count and recompute the preconditioner when there are too many. The computed examples in section \ref{sec: example} are of intermediate scale, so the preconditioner can be computed efficiently. In those examples, we recompute the preconditioner after every change in the hyperparameters since it is not excessively expensive. 

A user could also update the preconditioner perturbatively since it should change continuously. Let $P_{\text{old}}$ denote the current preconditioner. Then, the new preconditioner $P_{\text{new}}$ can be updated from the old precondtioner $P_{\text{old}}$ by:
$$P_{\text {new}}=P_{\text{old}}-P_{\text{old}}\left(H_{\text{new}}-H_{\text{old}}\right) P_{\text{old}}$$ 
In practice this update does not work well since small changes in the Hessian can lead to large changes in its inverse, so the preconditioner may have to change rapidly. Similar naive updates sacrifice the low-rank complexity of the Woodbury approach, and, in the computed examples, the occasional recompute policy was fast enough. Future work could use perturbative low-rank approximation to update the preconditioner while preserving its low-rank complexity.

%%%%%%%%%%%%%%%%%%%%%%%%%%%%%%%%%%%%%%%%%%%%%%%%%%%%%%%%%%%%%%%%%%%%%%
\subsection{IAS with Newton acceleration} \label{sec: acceleration}

Preconditioning enables fast, stable Newton steps. Accordingly, it may also accelerate MAP estimation.

The standard MAP estimation method, given fixed hyperparameters, is IAS. The IAS method converges linearly on the support of the estimated solution since it is a reweighted least squares algorithm based on coordinate descent \cite{calvetti2019hierachical} . To increase $\theta$ it must increase $x$, but to increase $x$ it must increase $\theta$. Any  such alternating algorithm will make slow progress. Consequently, 
IAS converges slowly once compressible, often producing an inference that systematically underestimates the true signal (c.f.~\cite{agrawal2021variational}). Compared to IAS, inexact IAS \cite{calvetti2020sparse} is more efficient but is inexact.

To speed convergence, we propose Newton acceleration: start with a series of IAS steps then switch to Newton. Unlike IAS, the Newton updates both $x$ and $\theta$ simultaneously, so does not require many alternating iterations. IAS converges rapidly off the support; Newton converges rapidly on the support.

%%%%%%%%%%%%%%%%%%%%%%%%%%%%%%%%%%%%%%%%%%%%%%%%%%%%%%%%%%%%%%%%%%
\section{Demonstration}\label{sec: example}

In this section, we present two computed examples that illustrate the efficacy and utility of path-following via prediction-correction. %The examples show how to study the sensitivity of MAP estimates to changes in the hyperparameters, the accuracy of path following when used to relax a non-convex problem, and the effectiveness of the prediction, correction, and preconditioning steps. We also report the accuracy and computational cost of the method.

%%%%%%%%%%%%%%%%%%%%%%%%%%%%%%%%%%%%%%%%%%%%%%%%%%%%%%%%%%%%%%%%%%%%
\subsection{Deconvolution} \label{sec: deconvolution}

First, we consider the standard $1 \mathrm{D}$ deconvolution test problem used in \cite{calvetti2019hierachical,agrawal2021variational,calvetti2020sparse,kim2022hierarchical}.

Let $f:[0,1] \rightarrow \mathbb{R}$ be a piecewise constant function with $f(0)=0$. We observe $y$:
$$
y_{j}=\int_{0}^{1} A\left(s_{j}-t\right) f(t) d t+\epsilon_{j}, \quad 1 \leq j \leq n, \quad A(t)=\left(\frac{J_{1}(\kappa|t|)}{\kappa|t|}\right)^{2}
$$
where $J_{1}$ is the Bessel function of the first kind, $\kappa$ is a scalar controlling the width of the kernel, and $\epsilon$ is Gaussian noise. We set $\kappa=40$ and $s_{j}=(4+j) / 100$. The convolution can be discretized:
$$
y=A v+\eta, \quad A_{j k}=w_{k} A\left(s_{j}-t_{k}\right), \quad \eta \sim \mathcal{N}\left(0, \gamma^{2} I_{n}\right),
$$
where $v \in \mathbb{R}^{d}$ has components $v_{k}=f\left(t_{k}\right)$ with $t_{k}=(k-1) /(n-1)$, and the $w_{k}$ are quadrature weights for discretization of the integral. We set the standard deviation $\gamma$ to  $1 \%$ the max-norm of the noiseless signal.

Let $y_{j}=x_{j}-x_{j-1}$ with $x_{0}=0$. Since $x$ is piecewise constant, $y$ is sparse. 

$$
y=L x, \quad L = \left[\begin{array}{cccc}
1 & 0 & \ldots & 0 \\
-1 & 1 & \ldots & 0 \\
& & \ddots & \\
0 & \ldots & -1 & 1
\end{array}\right] \in \mathbb{R}^{n \times n}
$$

Then $x=L^{-1} y$ with
$$
L^{-1}=\left[\begin{array}{cccc}
1 & 0 & \ldots & 0 \\
1 & 1 & \ldots & 0 \\
\vdots & & \ddots & \\
1 & \ldots & 1 & 1
\end{array}\right] \in \mathbb{R}^{n \times n}
$$

The inverse problem is to estimate the assumed sparse vector $y$ from the noisy data vector $b$, given the forward operator $AL^{-1}$.
$$
b=A L^{-1} y+\epsilon, \quad \epsilon \sim \mathcal{N}\left(0, \sigma^{2} I\right), \quad a_{j k}=w_{k} A\left(s_{j}-t_{k}\right)
$$

First we adopt the IAS algorithm for convex setting hyperparameters $(r,\eta,\vartheta)$ $=(1.5,10^{-5},10^{-6})$. The left panel of Figure \ref{fig:tias} shows the recovered signal initialized with all ones. The solution is insufficiently sparse. 

Next, we solve the same problem in the non-convex regime where $r = 0.5$. In the non-convex regime, the solution is sensitive to initialization. We run IAS 1000 times with randomly sampled initialization. The right panel of Figure \ref{fig:tias} displays an envelope of resulting solutions. The shaded area between the minimum and maximum of recovered estimate when randomly initialized. While IAS recovers sparse local minima, it fails to uniquely select an accurate solution. In particular, it consistently breaks the fourth jump into two separate jumps, so systematically fails to recover the support of the true signal. We will show that the path-following algorithm, when initialized in the convex region, is more stable and accurate.

\begin{figure}
\begin{center}
   \begin{minipage}{0.5\textwidth}
     \centering
     \includegraphics[width=.8\linewidth]{ 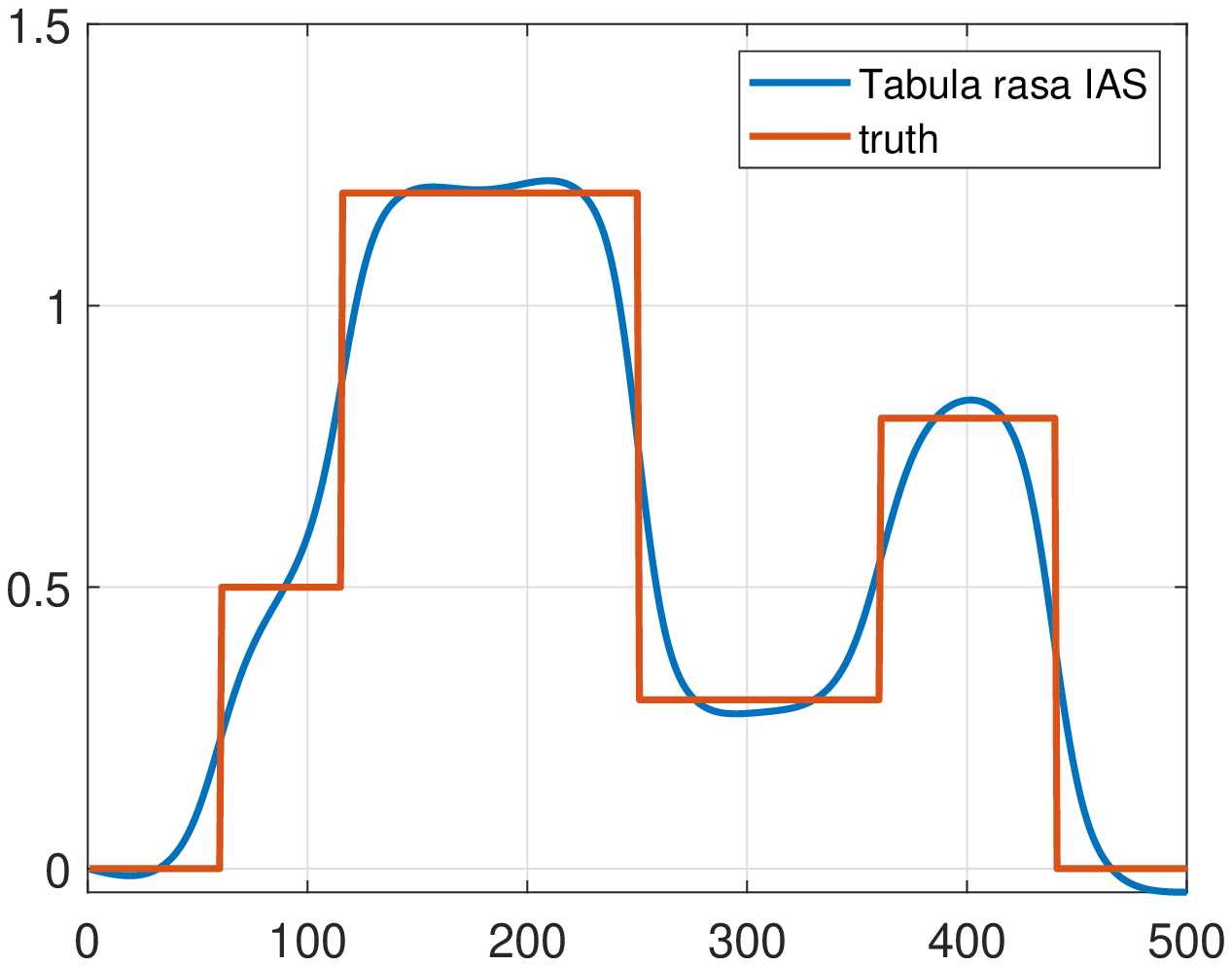}
   \end{minipage}\hfill
   \begin{minipage}{0.5\textwidth}
     \centering
     \includegraphics[width=.95\linewidth]{ 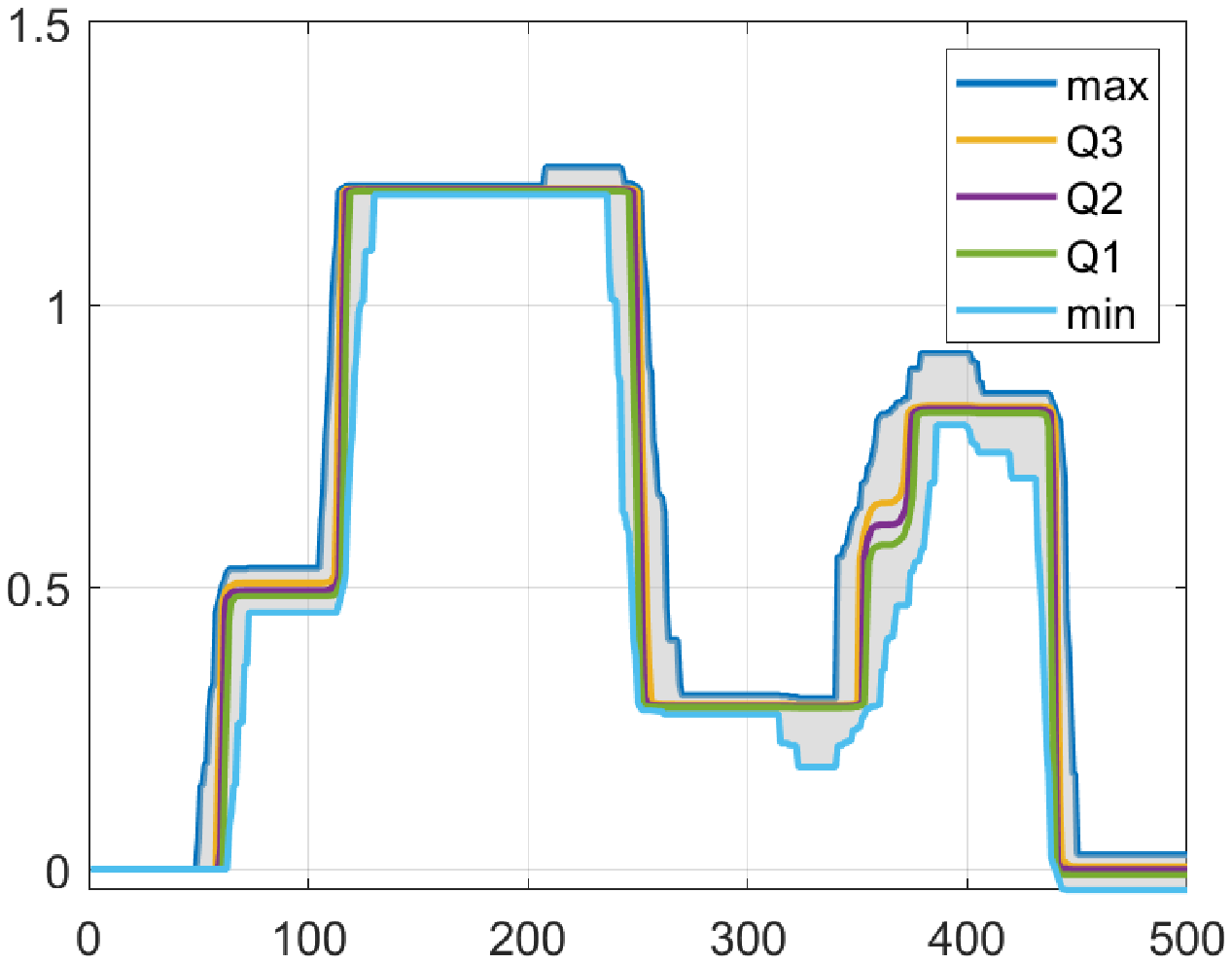}
   \end{minipage}
   \end{center}
\caption{ \textit{Left: } true signal (orange) and the recovered signal (blue) using IAS initialized with all ones and hyperparameters $(r,\eta,\vartheta) = (1.5,10^{-5},10^{-6})$. Note, the unique solution in the convex regime is not sharp. \textit{Right:} envelope plot of the recovered signal using IAS with random initialization and hyperparameters $(r,\eta,\vartheta) = (0.5,10^{-5},10^{-6})$ . Note, in the nonconvex regime the solution is sharp, but depends on initial conditions.}    
\label{fig:tias}
\end{figure}

To test the path-following approach, we vary the hyperparameters along the line from a convex setting, $(r, \eta, \vartheta) = (1.5,1.5,10^{-5})$, to the desired non-convex setting $(r, \eta, \vartheta) = (0.5,10^{-5},10^{-6})$. For simplicity, we adopt 60 equidistant time points along the path. We save more sophisticated, adaptive approaches for future work.

At initialization, we test IAS with Newton acceleration. Figure \ref{fig:1Dcost} shows the results. Newton acceleration rapidly improves convergence to the MAP estimator. After 3 initial IAS steps, Newton converges in 2 additional steps. Pure IAS requires 7 more iterations to match the first Newton step, and has not converged after 10 steps.

\begin{figure}
\begin{center}
   \begin{minipage}{0.5\textwidth}
     \centering
     \includegraphics[width=.8\linewidth]{ 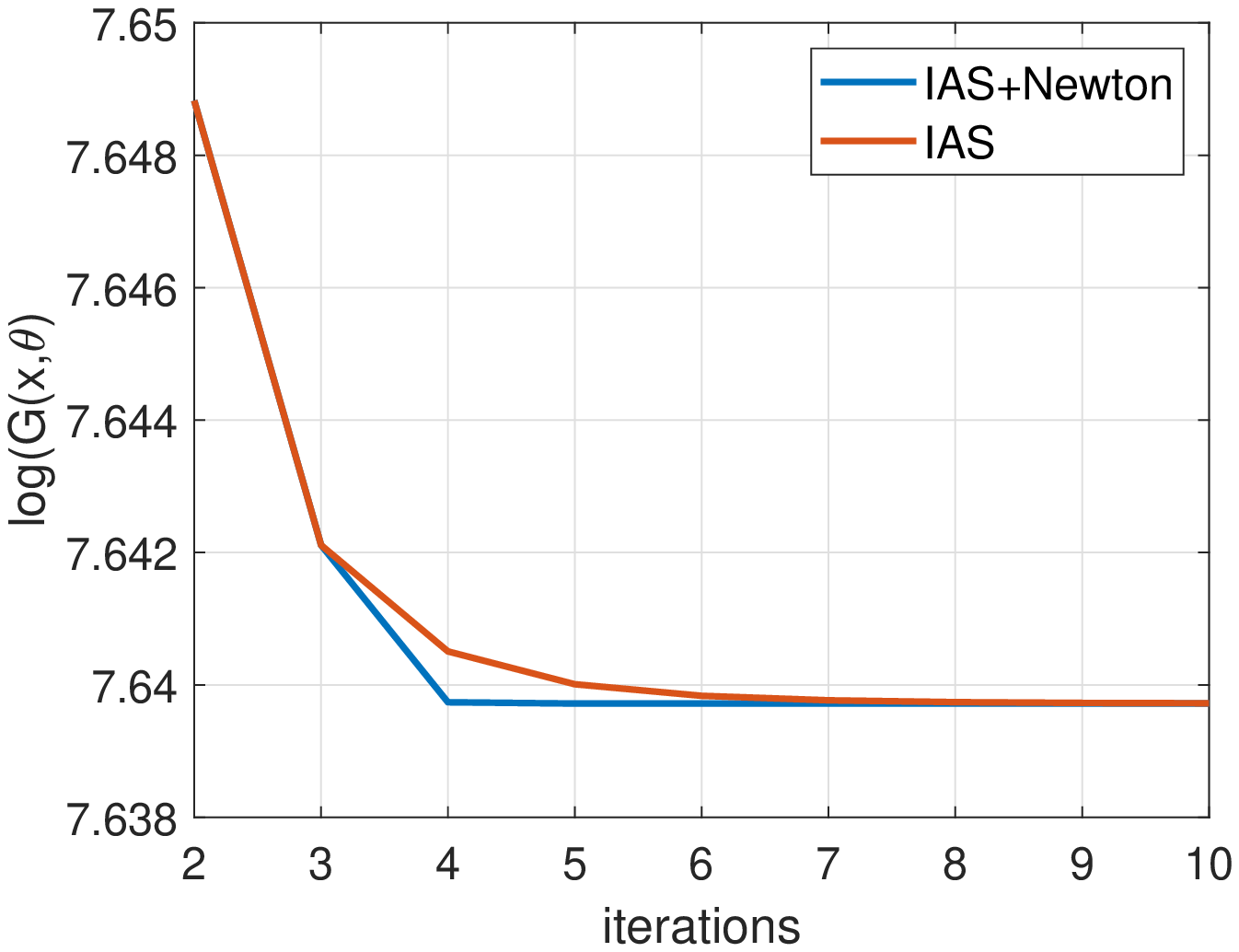}
   \end{minipage}\hfill
   \begin{minipage}{0.5\textwidth}
     \centering
     \includegraphics[width=.76\linewidth]{ 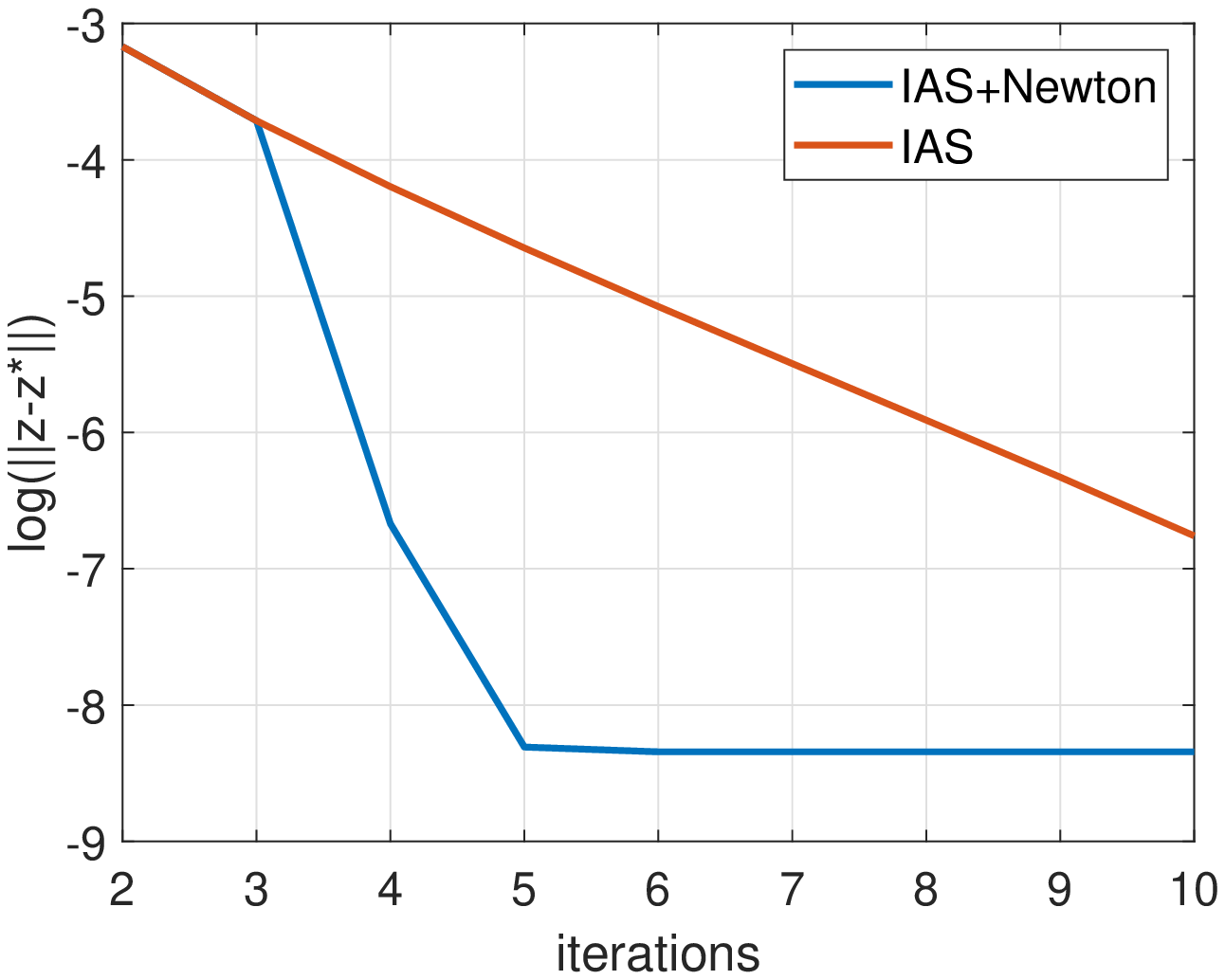}
   \end{minipage}
   \end{center}
\caption{Decay of the objective function value (left) and 2-norm error between the iterate $z_t$ and the minmizer $z^*$  (right) over repeated iterations with fixed hyperparameters. The orange line represents IAS algorithm. The blue line shows Newton accelerated IAS. Note the rapid convergence to the minimizer achieved by Newton acceleration.}
\label{fig:1Dcost}
\end{figure}

Next, we compare the solution path obtained by the following methods:

\begin{enumerate}
%\item tabula rasa IAS: IAS initilized with all zeros.
\item \textbf{Path-following IAS:} 1 IAS iteration at each time step.
\item \textbf{Path-following Inexact IAS:} 1 Inexact IAS iteration at each time step.
\item \textbf{Predictor-IAS corrector:} Iteratively alternate between ODE based predictor and 1 IAS iteration corrector.
\item \textbf{Predictor-Newton corrector:} Iteratively alternate between ODE based predictor step and 1 Newton iteration corrector.
\end{enumerate}

All the methods run 1 iteration in each time step and use warm starts to reuse past solutions. In practice, one could adopt more correction steps. In these examples, one Newton correction step was enough. The fact that only one correction step was required indicates the efficacy of prediction. Results using three correction steps were only marginally more accurate.

Figure \ref{fig:1Dscreenshots} shows the sparse solution $y(t)$, reconstructed signal $x(t)$, and variance $\theta(t)$ corresponding to 3 different points on the hyperparameters path recovered by these methods. Entries on the support become larger while entries off the support become smaller along the solution path. Thus, the prior promotes stronger sparsity as $r$ decreases.

The Predictor-Newton algorithm is the most accurate of all the methods tested. It correctly identifies the 5 non-zero entries of $y$ once $r(t)$ approaches 0.5. Predictor-IAS is the second most accurate method, but fails to identify the third non-zero entry correctly. Instead, it splits the third jump into 2 adjacent jumps. All methods recover signals close to the true signal at the end of the hyperparameter path, however, the scatter plots shown in the left panel of Figure \ref{fig:1Dscreenshots} indicate that all algorithms except Predictor-Newton blur the support and do not recover properly compressible signals.

The variance plots (right panel of Figure \ref{fig:1Dscreenshots}) also demonstrate that Predictor-Newton method accurately detects the support, and allows larger variances $\theta$ on the support than the other methods. The fact that Predictor-Newton produces sharper, taller spikes in the variance plot indicates that it allows sparser solutions.

Figure \ref{fig:min} displays the minimum value of the objective obtained by each method. The prediction-correction methods are significantly more accurate than path-following IAS. We show the value of the objective before and after correction to illustrate the effect of correction. The correction curves fall below their corresponding prediction curves. The change in objective is usually small, indicating that the predictor steps are accurate. Nevertheless, the small correction steps are necessary since the error introduced by the ODE based predictor can accumulate without correction. In trial runs completed without correction, the solution dramatically deviated from the correct path after accumulating a series of small errors. In some cases IAS correction achieves a larger reduction in the objective function than Newton correction, however it is less stable, and is prone to introducing errors, especially near $r = 1$. The middle and right hand panels of Figure \ref{fig:min} illustrate the accuracy of the solution paths by comparison to a gold-standard solution.

\begin{figure}
\begin{center}
	\begin{minipage}{0.3\linewidth}
		\centering
		\includegraphics[width=.99\linewidth]{ 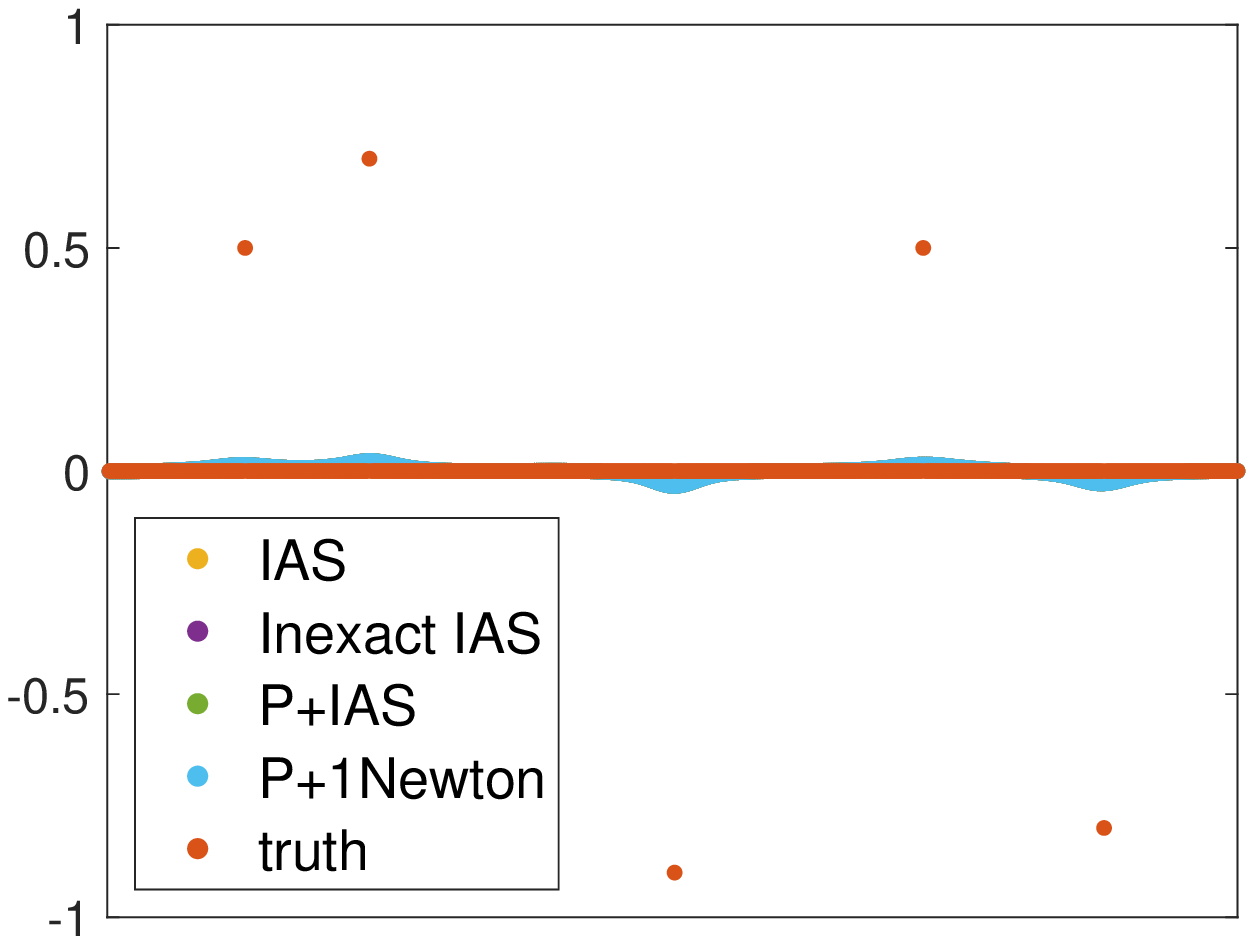}
	\end{minipage}
	\hspace{0.01cm}
	\begin{minipage}{0.3\linewidth}
		\centering
		\includegraphics[width=.98\linewidth]{ 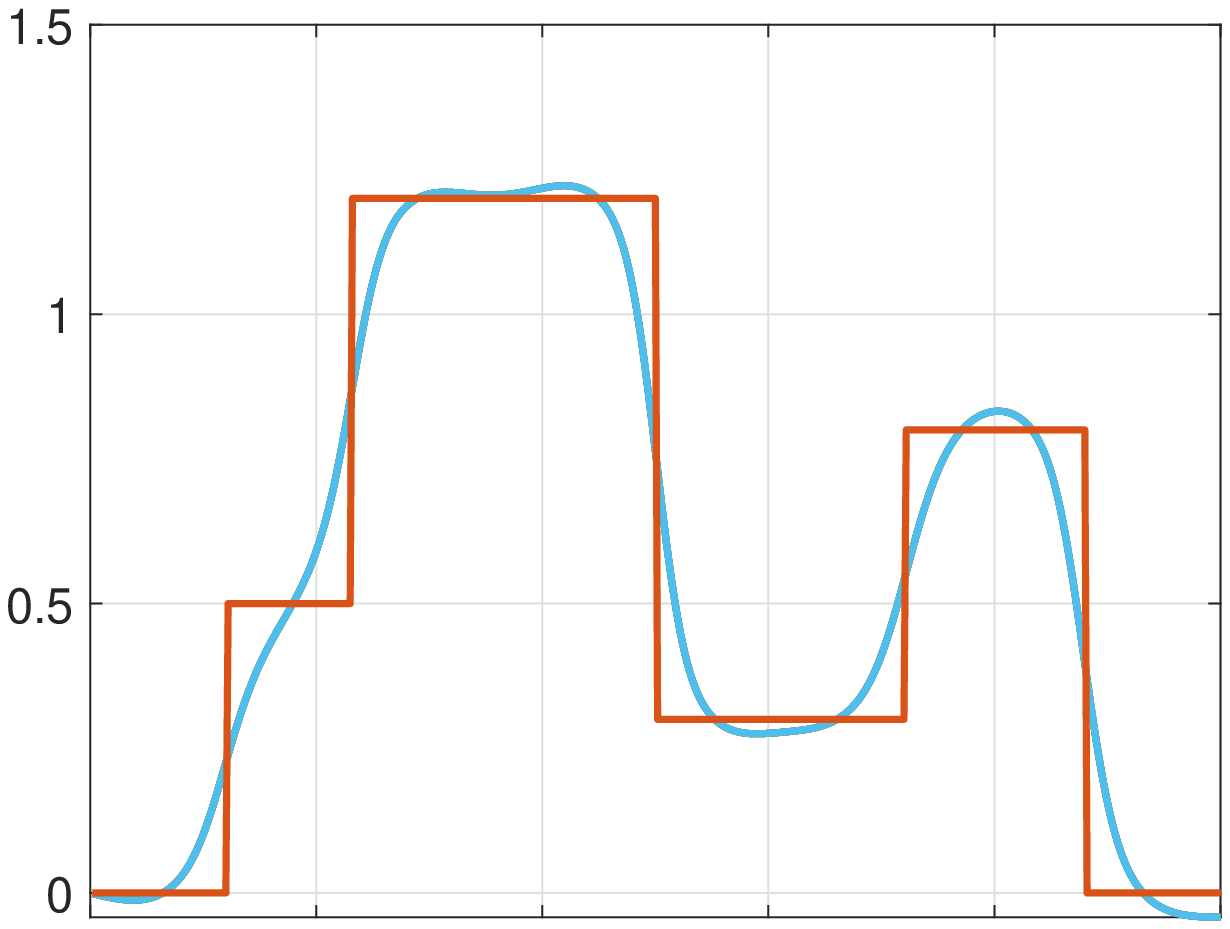}
	\end{minipage}
	\hspace{0.01cm}
	\begin{minipage}{0.3\linewidth}
		\centering
		\includegraphics[width=.95\linewidth]{ 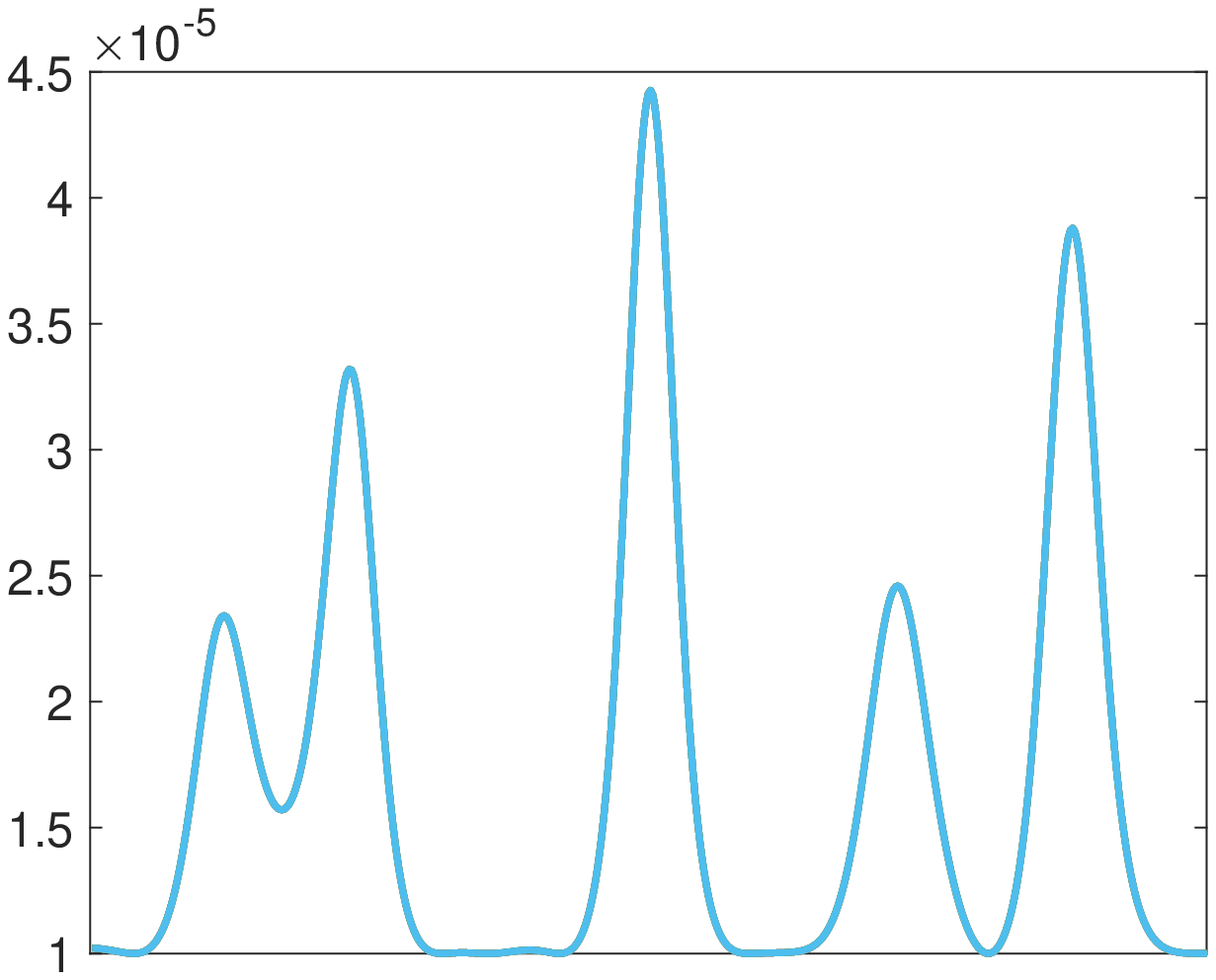}
	\end{minipage}
	\begin{minipage}{0.3\linewidth}
		\centering
		\includegraphics[width=.98\linewidth]{ 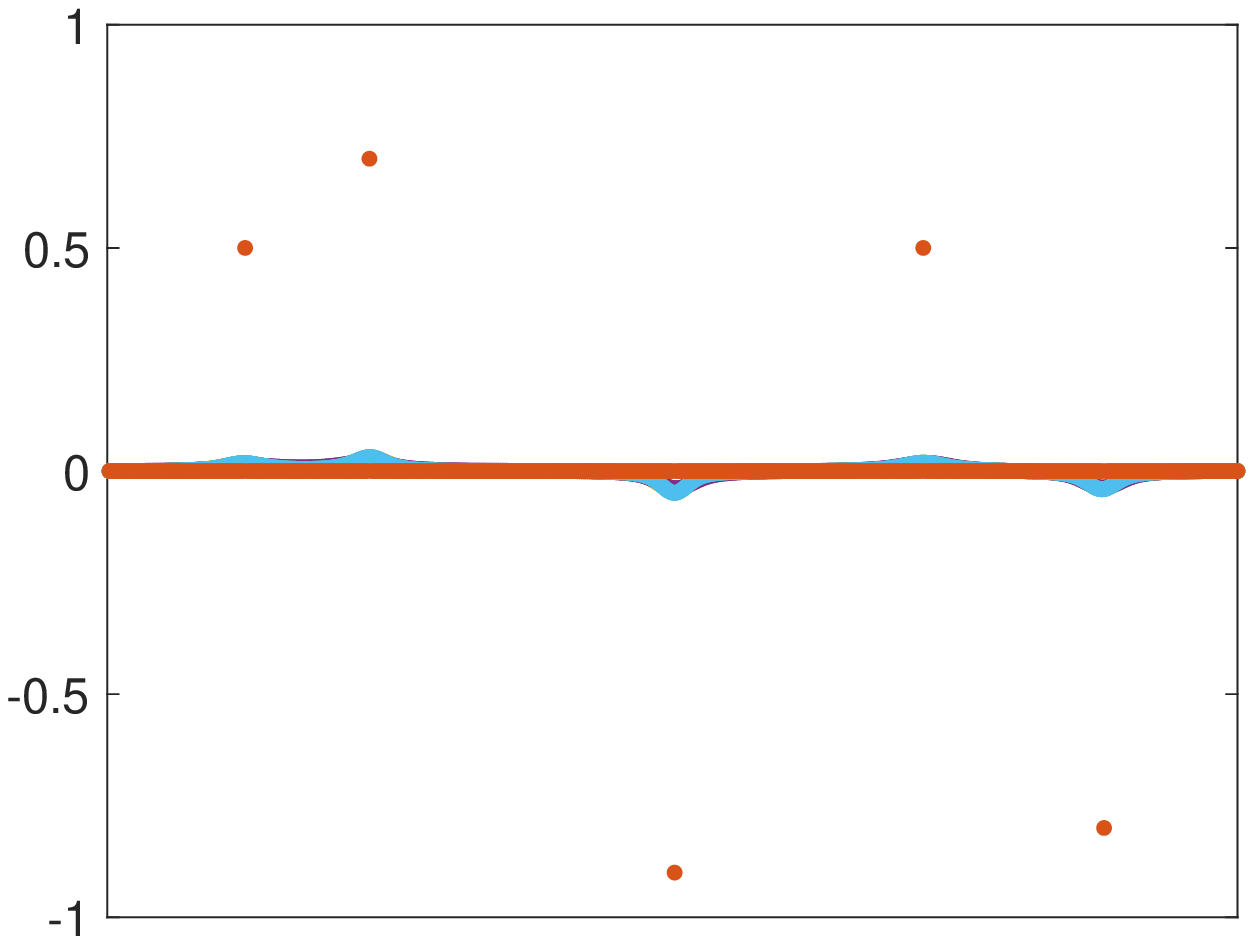}
	\end{minipage}
	\hspace{0.02cm}
	\begin{minipage}{0.3\linewidth}
		\centering
		\includegraphics[width=1\linewidth]{ 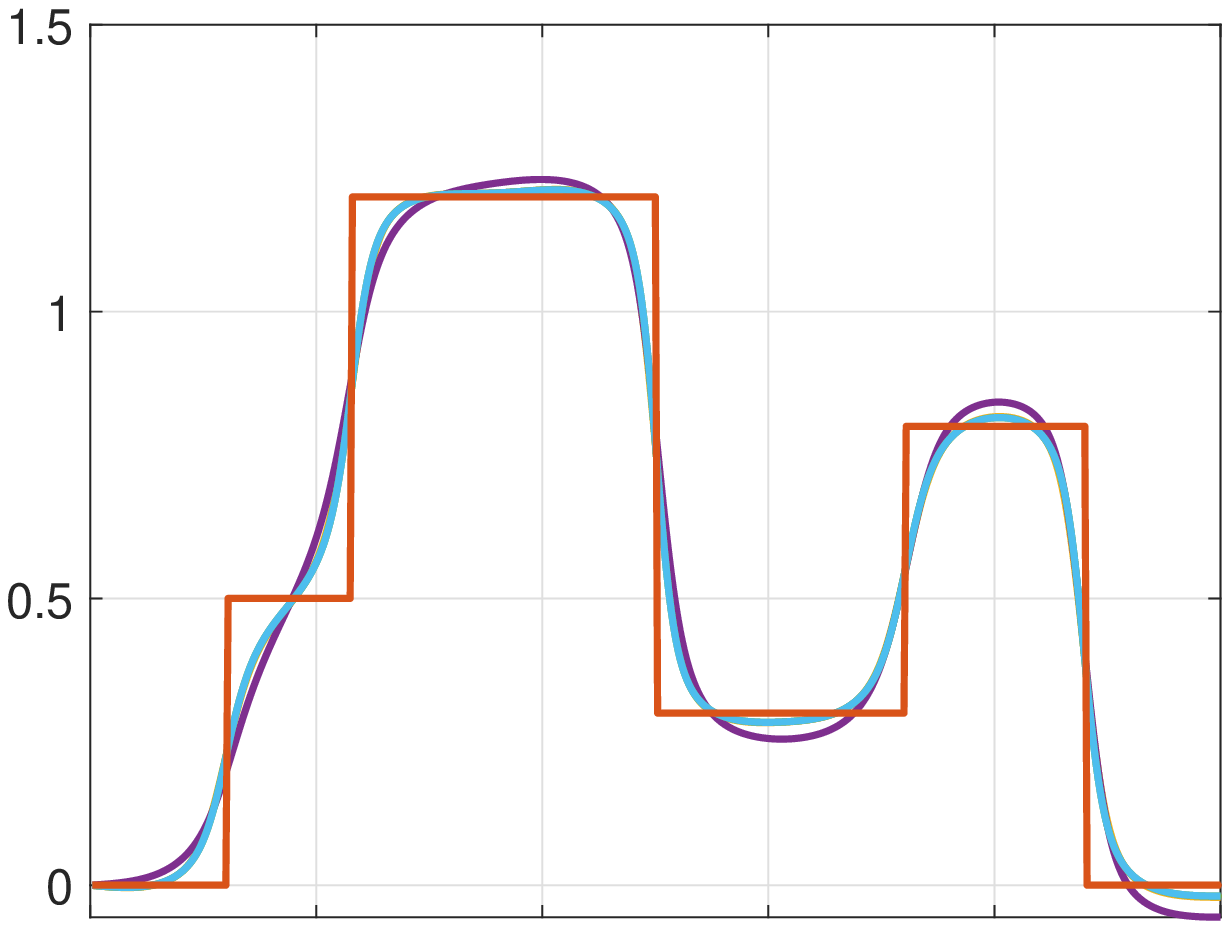}
	\end{minipage}
	\hspace{0.07cm}
	\begin{minipage}{0.3\linewidth}
		\centering
		\includegraphics[width=0.92\linewidth]{ 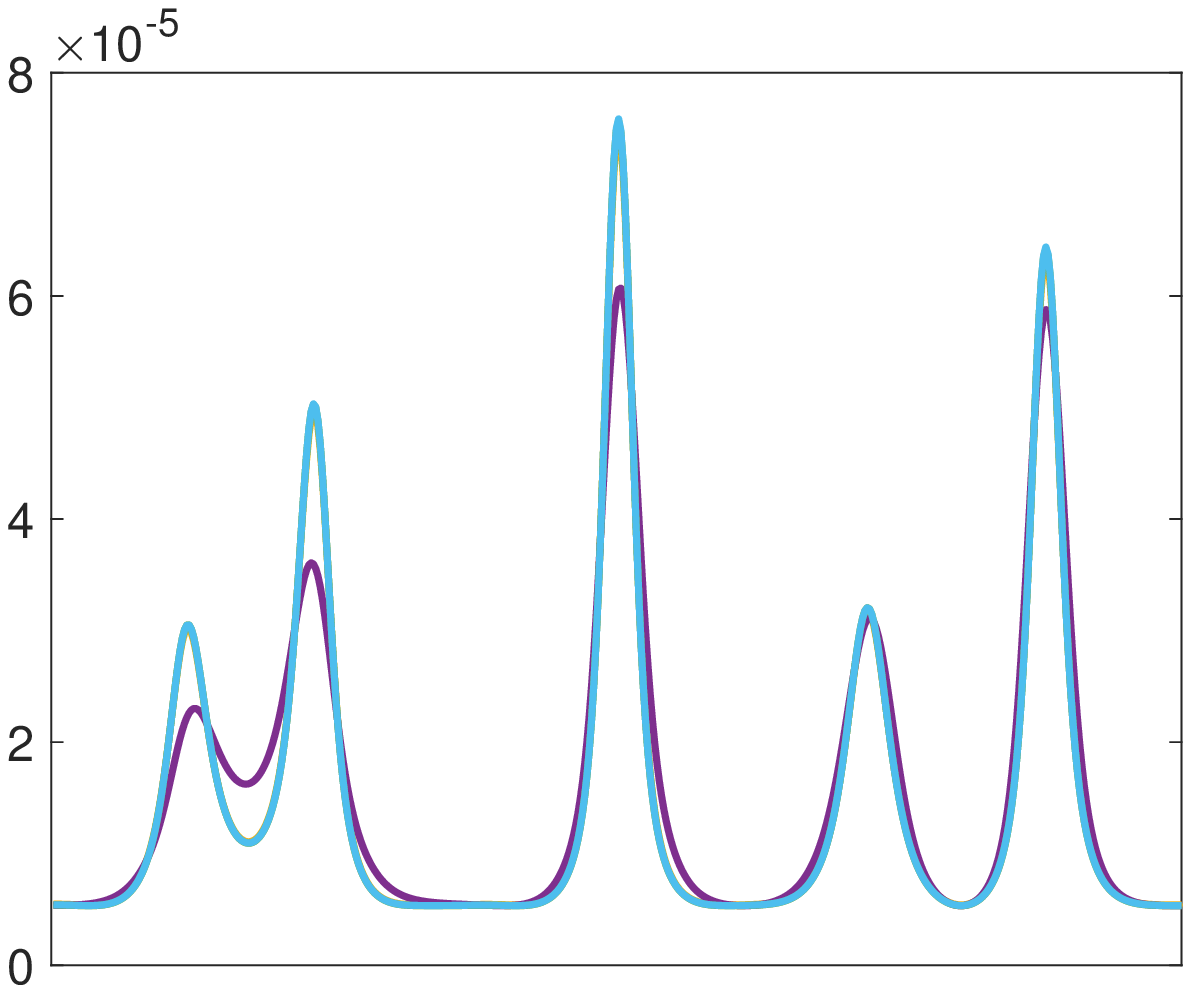}
	\end{minipage}
	\begin{minipage}{0.3\linewidth}
		\centering
		\includegraphics[width=1\linewidth]{ 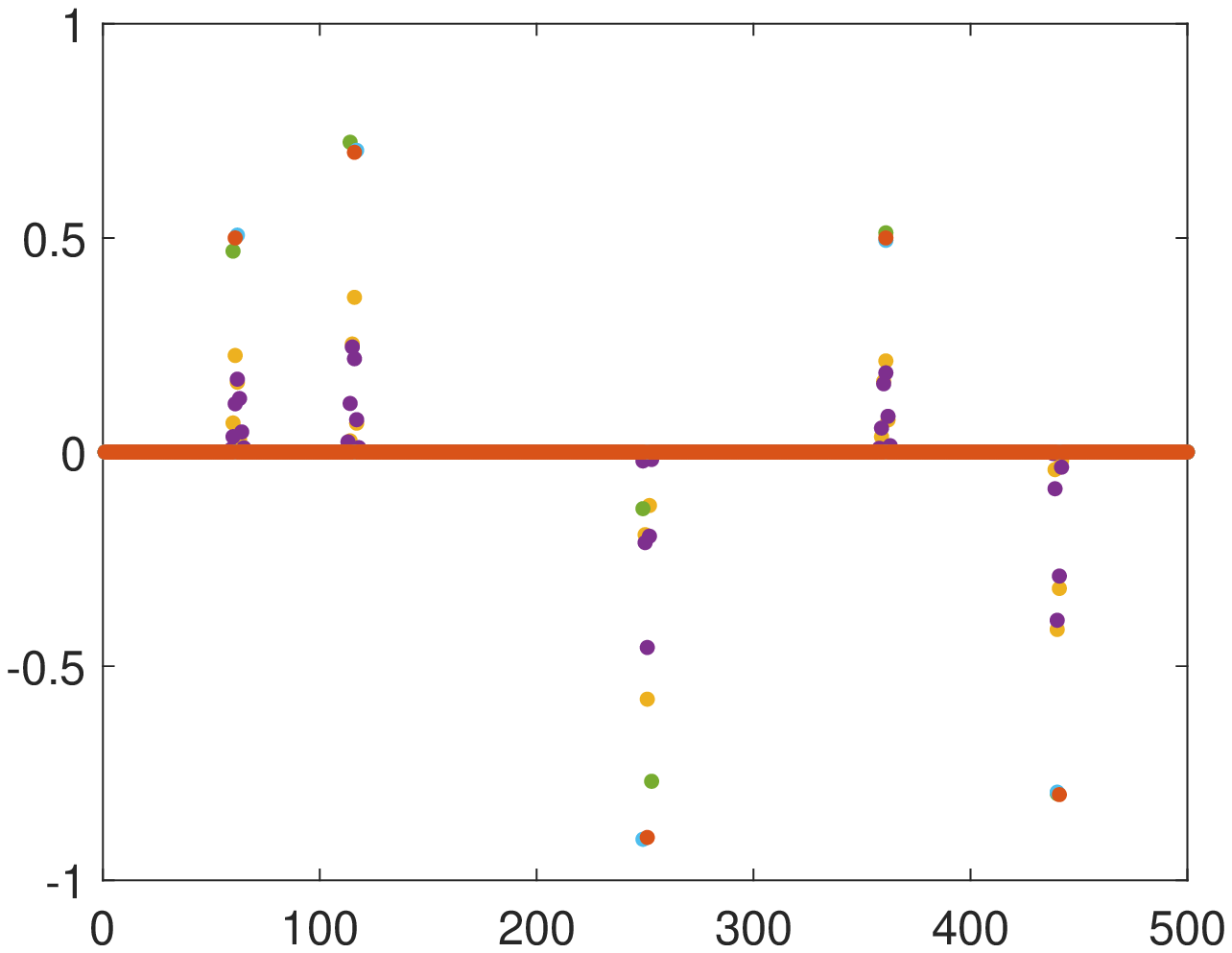}
	\end{minipage}
	\hspace{0.01cm}
	\begin{minipage}{0.3\linewidth}
		\centering
		\includegraphics[width=1\linewidth]{ 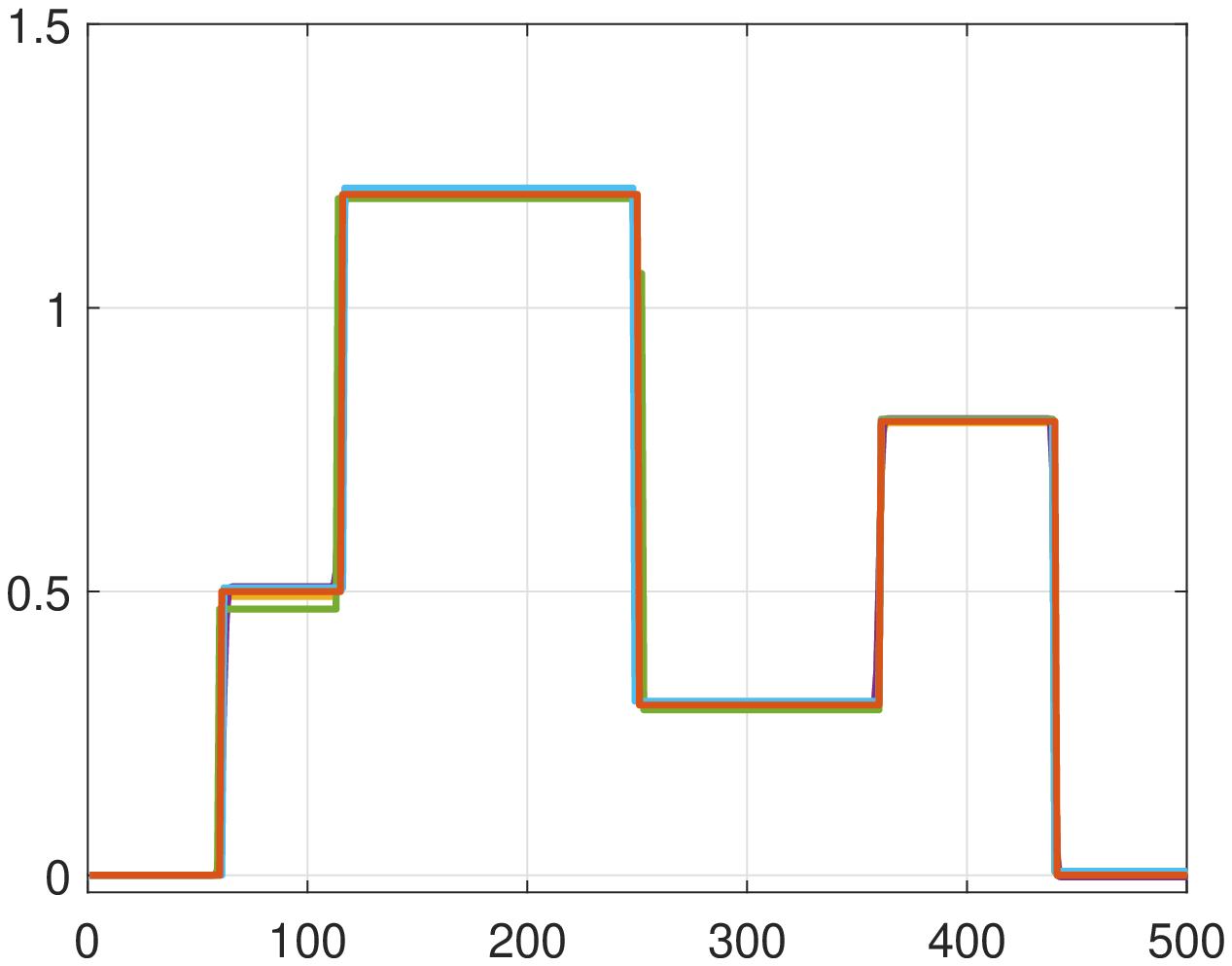}
	\end{minipage}
	\begin{minipage}{0.3\linewidth}
		\centering
		\includegraphics[width=1\linewidth]{ 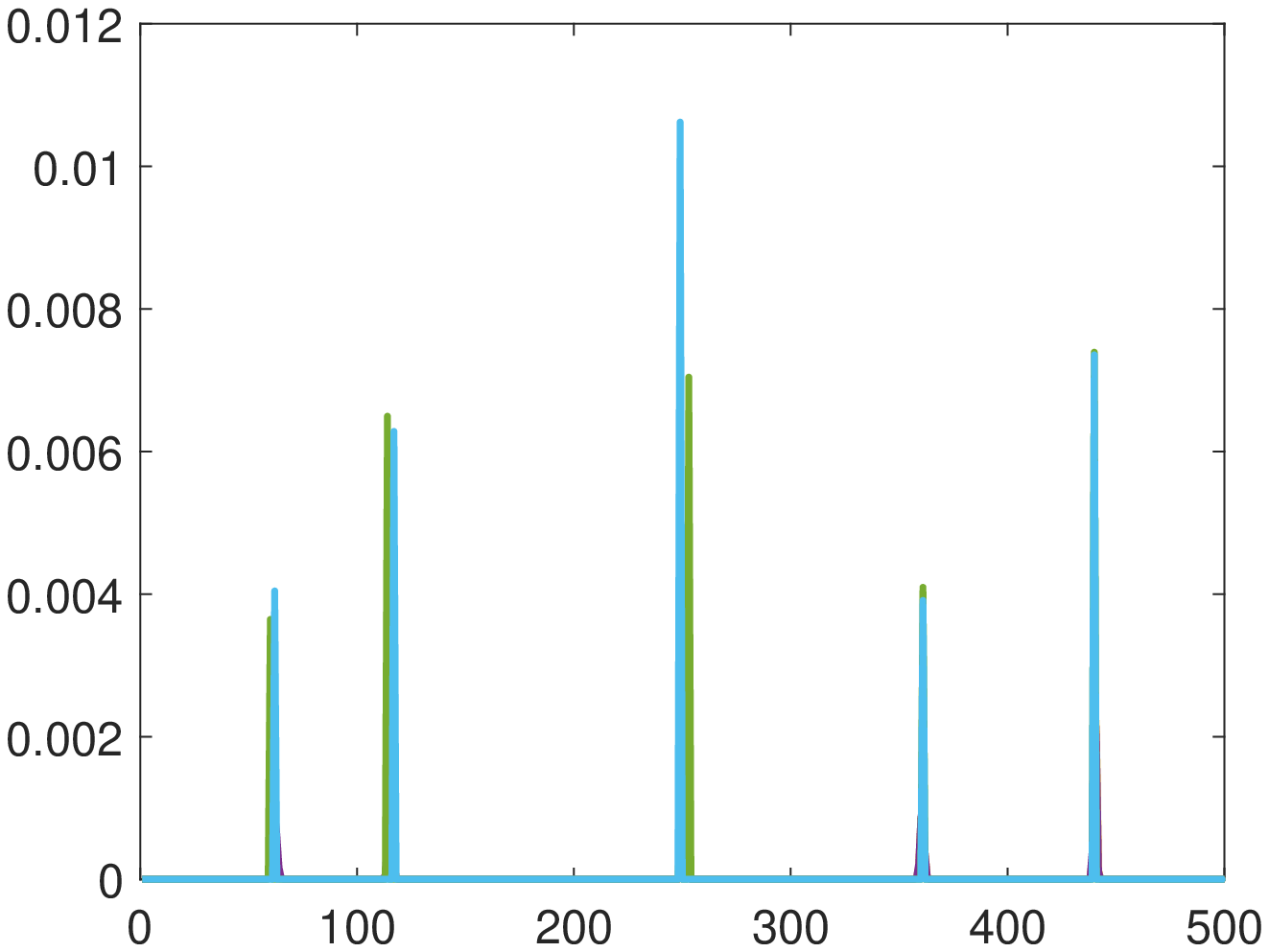}
	\end{minipage}
\end{center}
\caption{solution $y(t)$ (left), signal $x(t)$ (middle), variance $\theta(t)$ (right) corresponding to 3 different points on the
hyperparameters path (top, middle, bottom). In the top row, $r = 1.5$ , $\eta =  1.5$ and $\vartheta = 10^{-5}$, in the middle row, $r = 0.9$ , $\eta =  0.75$ and $\vartheta = 5.5\times 10^{-6}$, in the bottom row, $r = 0.5$ , $\eta =  10^{-5}$ and $\vartheta = 10^{-6}$. Note that the solution produced by Predictor-Newton is the most accurate and sparse.}
\label{fig:1Dscreenshots}
\end{figure}

\begin{figure}
\begin{center}
   \begin{minipage}{0.32\textwidth}
     \centering
     \includegraphics[width=0.93\linewidth]{ 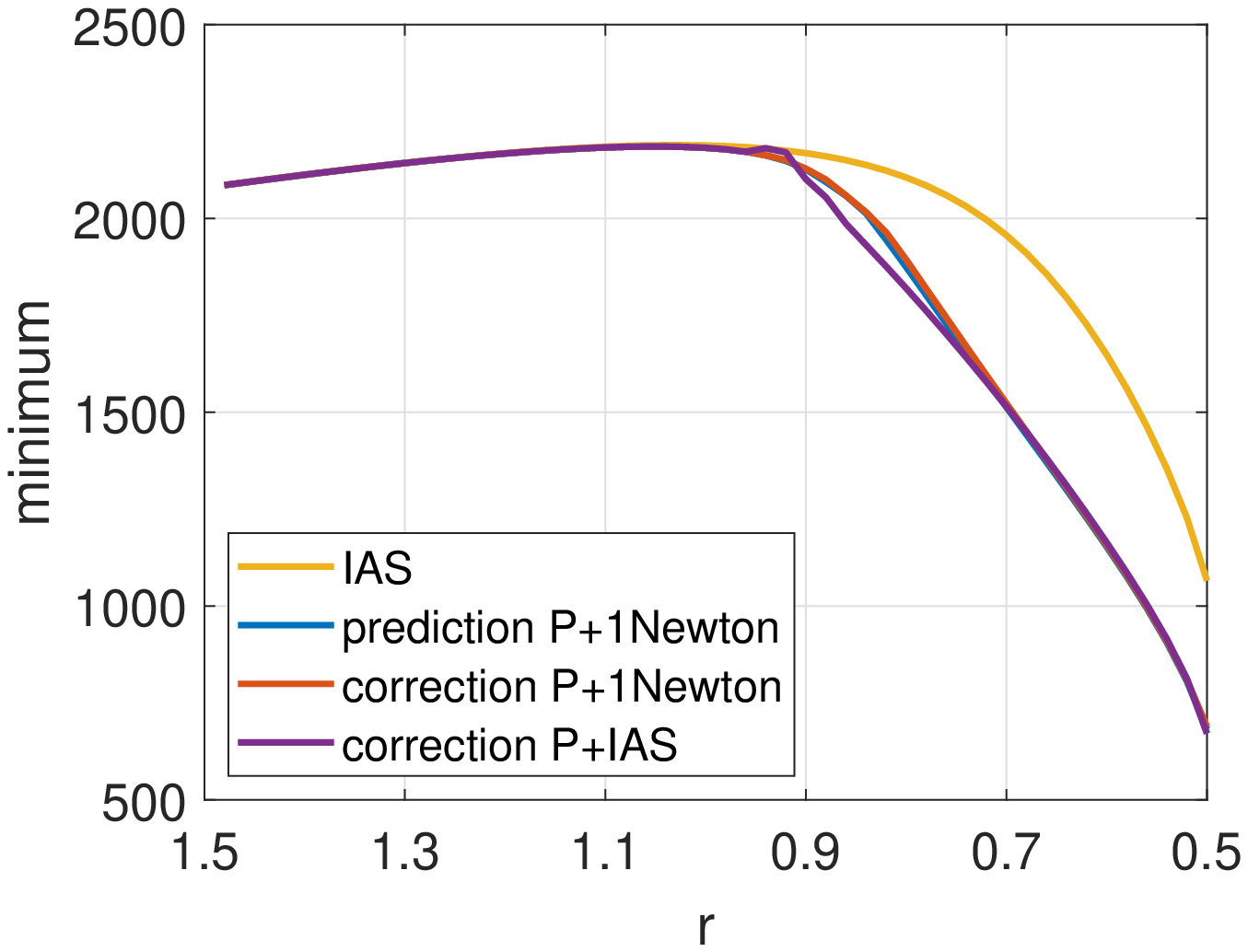}
   \end{minipage}
   \hspace{0.01cm}
   \begin{minipage}{0.32\textwidth}
     \centering
     \includegraphics[width=0.95\linewidth]{ 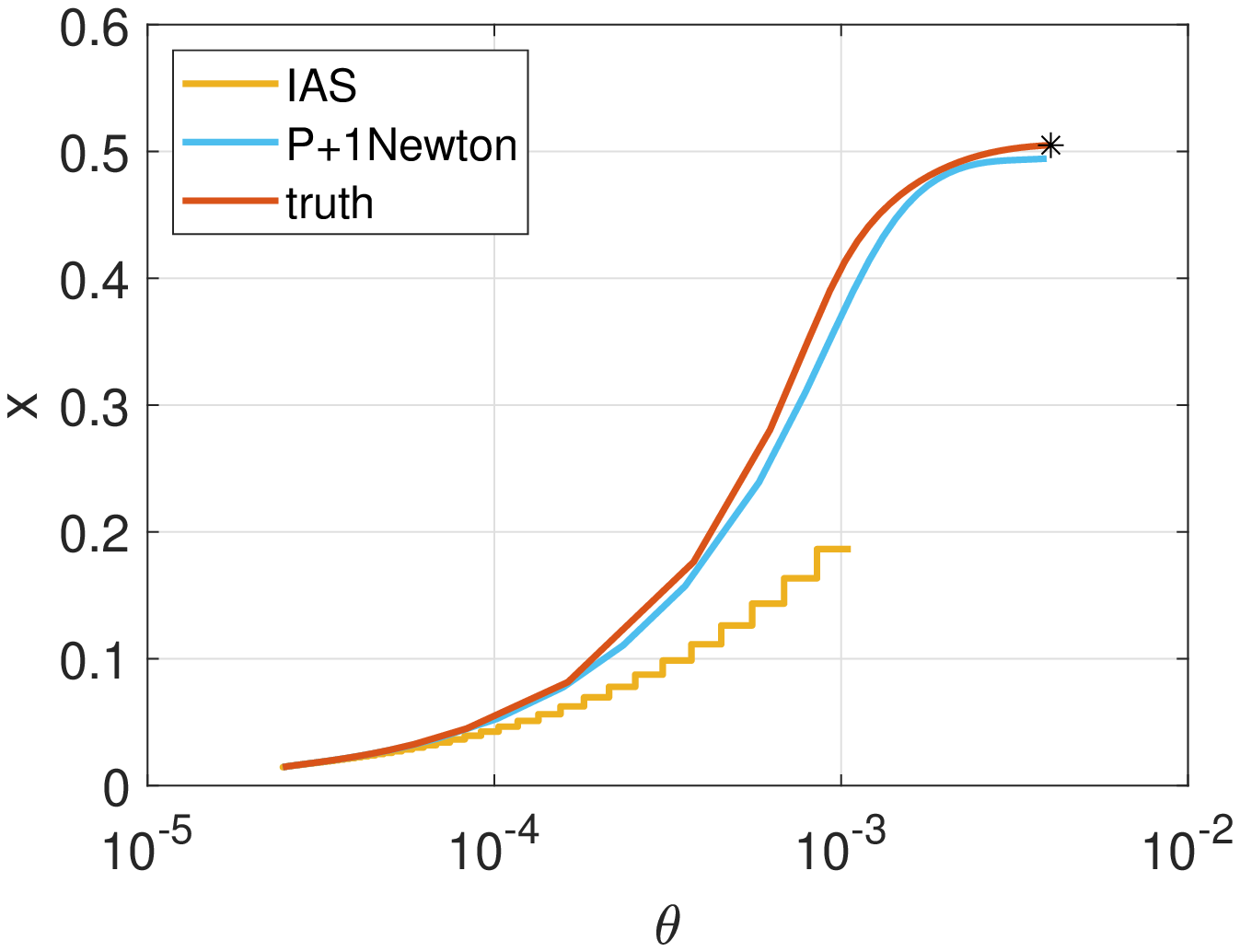}
   \end{minipage}
      \begin{minipage}{0.32\textwidth}
     \centering
     \includegraphics[width=0.93\linewidth]{ 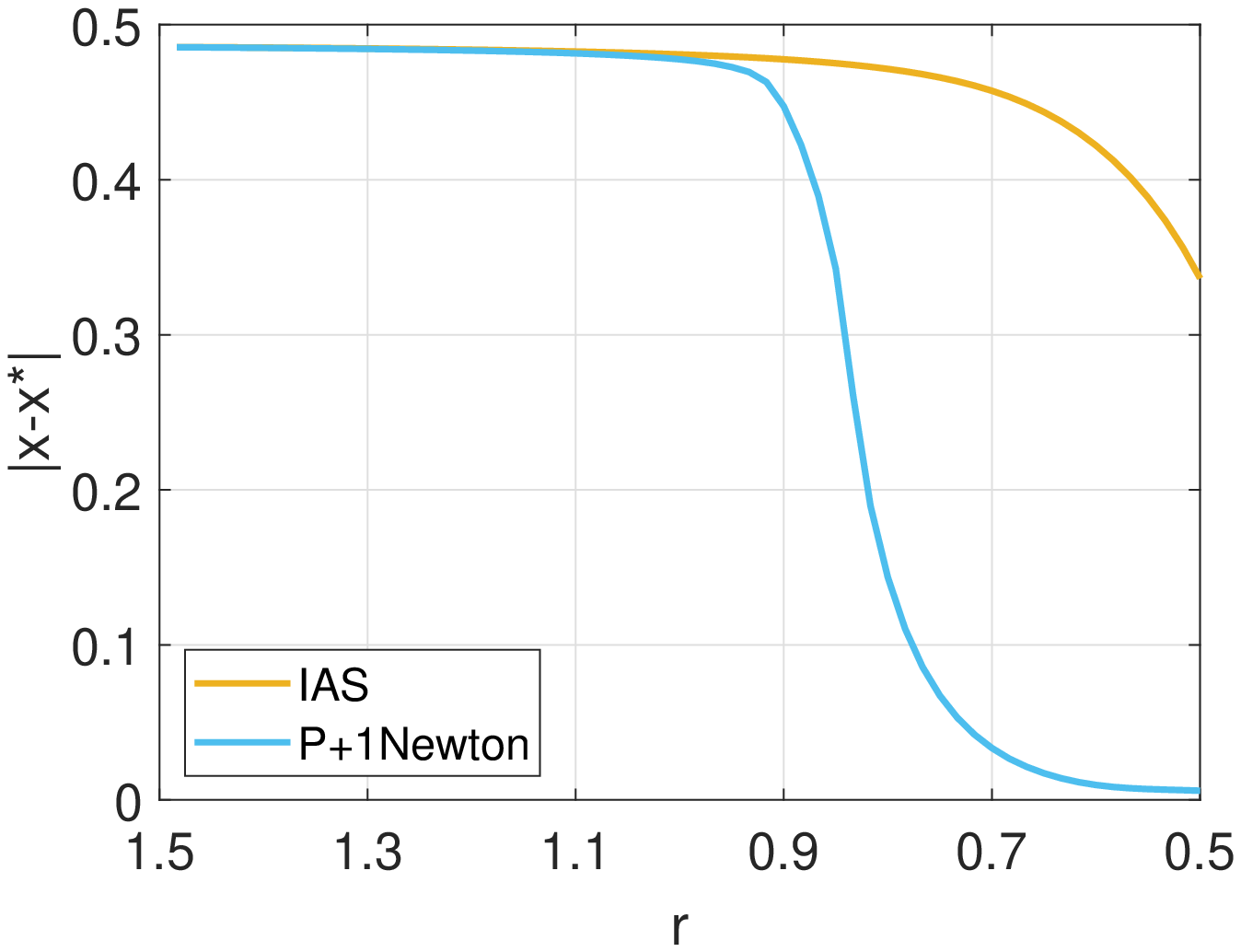}
   \end{minipage}
   \end{center}
   \caption{\textit{Left:} minimum of object function obtained by different path-following methods as the hyperparameters vary. \textit{Middle}: solution path of the third non-zero entry obtained by IAS and Predictor-Corrector. The curve labelled truth is a solution path obtained by running enough IAS iterations so that the solution converges at each time step and black asterisks represents true values at the end of the path. \textit{Right:} the distance between the third non-zero entry of the recovered solutions and the endpoint of the true solution path marked with an asterisk in the middle panel. } 
   \label{fig:min}
\end{figure}

Figure \ref{fig:path} shows the entire solution paths obtained by Predictor-Newton, path-following IAS and predictor only algorithms. The Predictor-Newton path leads to the exact solution $x$. The other methods do not. IAS alone underestimates, while prediction alone is unstable. Unlike classic Lasso paths, in which all coefficients shrink \cite{friedman2010regularization}, the entries on the true support grow and converge to their true values along Predictor-Newton path. Here the entries on the true support grow since changing $r$ amounts to moving from a relaxed $\ell_2$ penalty to a relaxed $\ell_p$ penalty with $p < 1$. %Classic Lasso paths only adjust the strength of the regularizer, not its shape. 

\begin{figure}
\begin{center}
   \begin{minipage}{0.33\textwidth}
     \centering
     \includegraphics[width=1\linewidth]{ 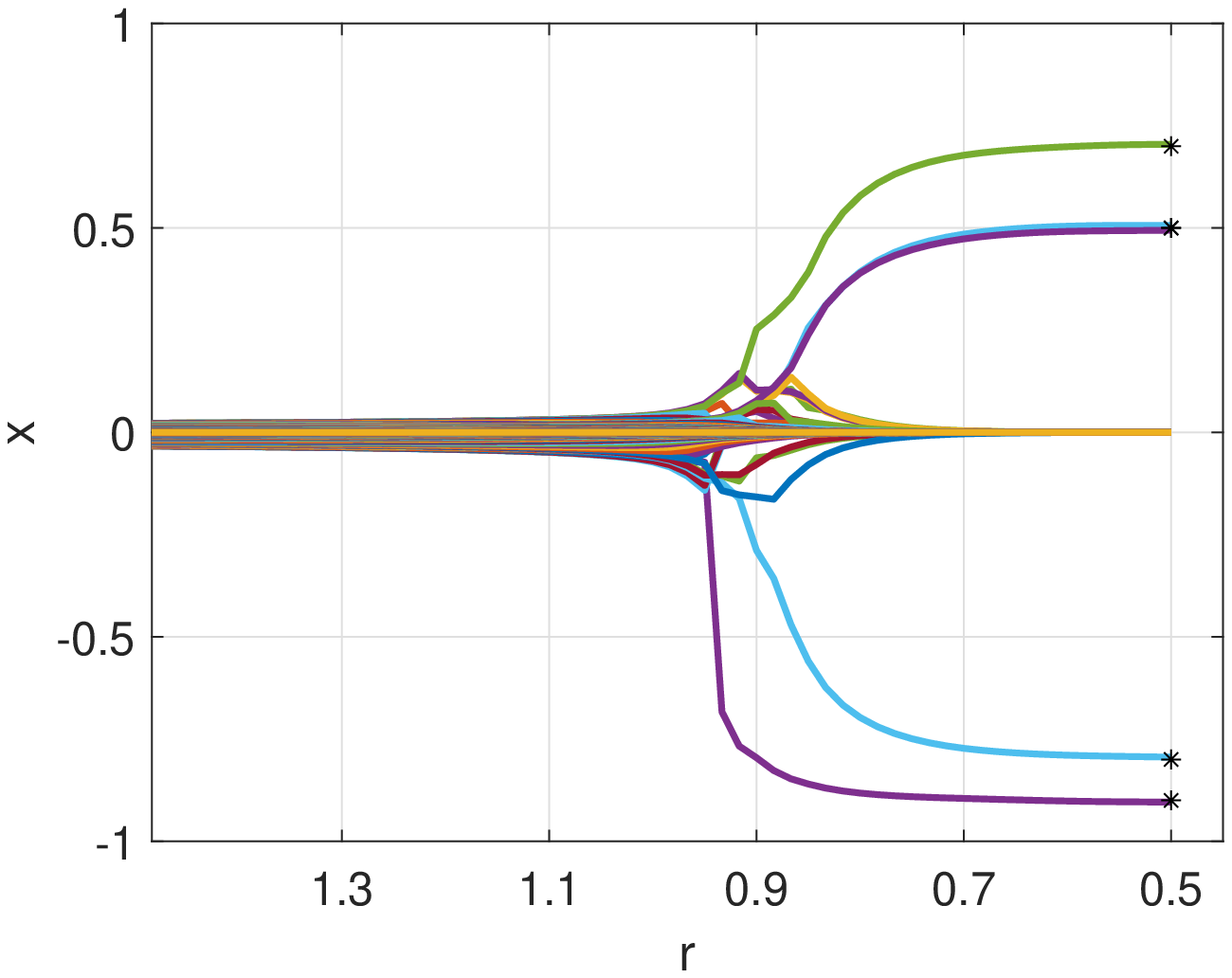}
   \end{minipage}\hfill
   \begin{minipage}{0.33\textwidth}
     \centering
     \includegraphics[width=1\linewidth]{ 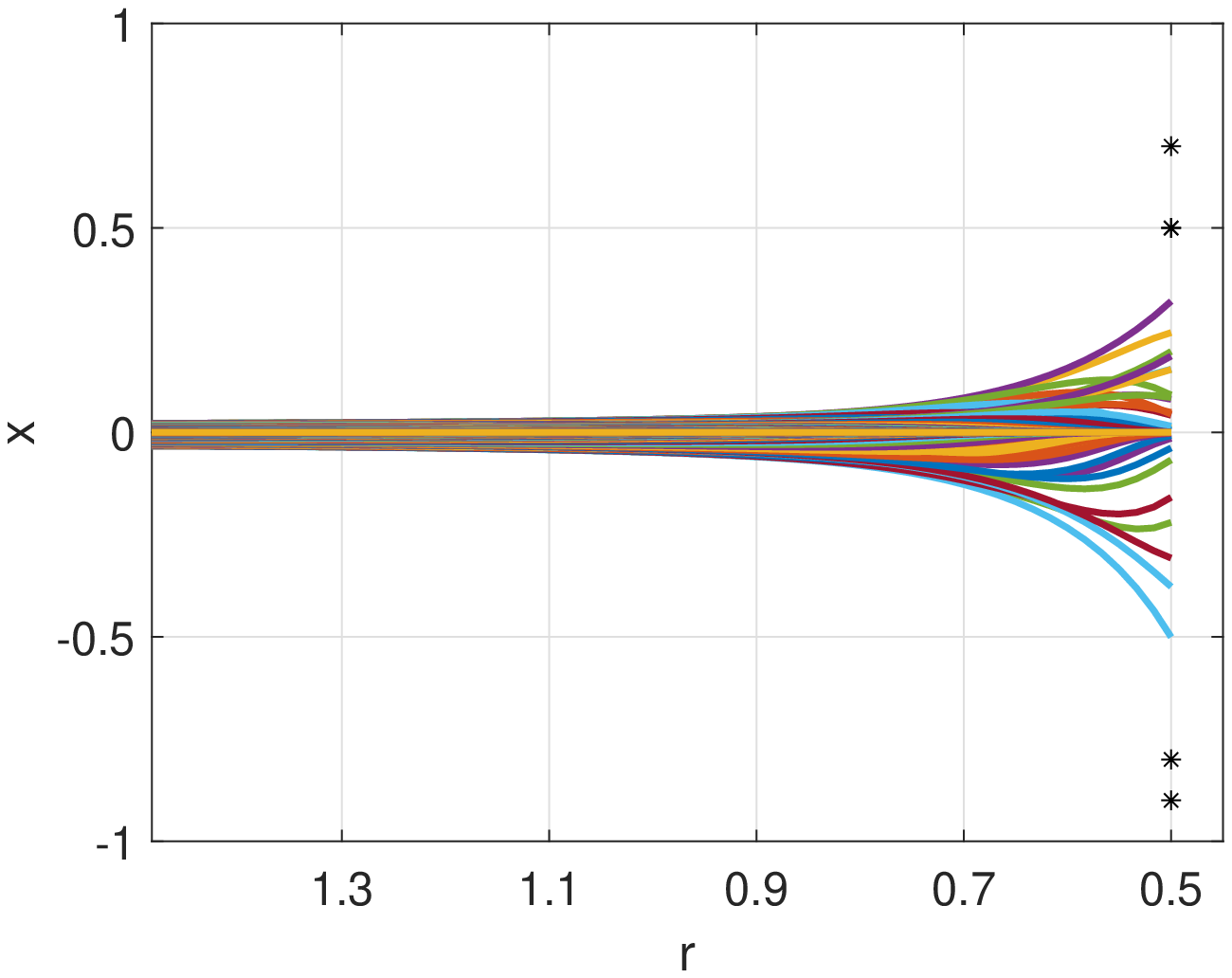}
   \end{minipage}
    \begin{minipage}{0.33\textwidth}
    \centering
    \includegraphics[width=1\linewidth]{ 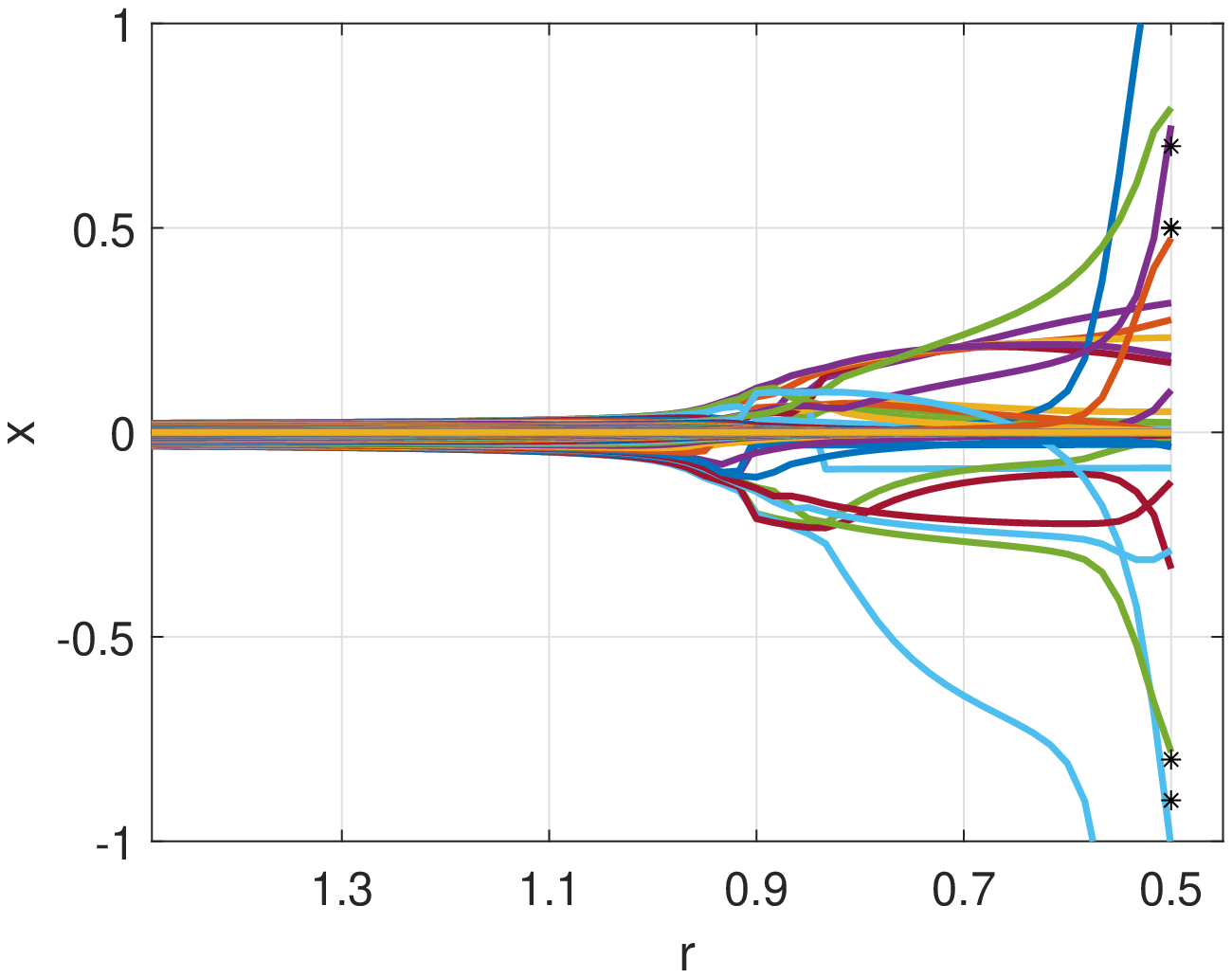}
   \end{minipage}
   \end{center}
   \caption{Entire solution paths of Predictor-Newton (left), IAS (middle) and predictor only (right). The black asterisks represent true values. Note that Predictor-Newton accurately identify the 5 non-zero entries and recovers their magnitude. IAS underestimates the magnitudes and fails to identify the true support, which indicates that 1 IAS iteration is not enough. The right panel shows that the Predictor alone is unstable.} 
   \label{fig:path}
\end{figure}

Figure \ref{fig:cond} displays the effect of preconditioning. Preconditioning radically reduces the condition number of the Hessian. The condition number approaches $10^7$ near $r = 1$. %The Hessian likely has condition number far larger than $10^7$ near $r = 1$, but is subject to a numerical ceiling. 
The preconditioned Hessian matrix is far better conditioned, with conditioner number less than 4 at all $r$. Near $r = 1$, preconditioning reduces the condition number by six orders of magnitude. 

Note that the sharp spike in the condition number of Hessian around $r = 1$ is smoothed out by the preconditioning. The region near $r = 1$ corresponds to the region where the objective function (\ref{obj}) enters the non-convex region. This observation is suggestive. When the Hessian is ill-conditioned the objective function is close to flat along some directions away from the minimum. In the extreme case when the condition number diverges the solution path may bifurcate. The spike in condition number near $r = 1$ indicates that, near the boundary between the convex and non-convex regions, the  optimization problem may allow bifurcating solutions. Bifurcating solutions likely arise from uncertainty in the support. As $r$ decreases, the MAP solution must resolve a sharp estimate of the support. The assumed support can be monitored by tracking which variances, $\theta$, remain large. For $r > 1$ some uncertainty in the assumed support is allowed, hence the peaks in the variances are broad. For $r \ll 1$ no uncertainty is allowed (c.f.~the left column of Figure \ref{fig:1Dscreenshots}). Thus, as $r$ decreases, the method must eventually resolve and commit to an assumed support. In principle this process could allow for many bifurcating solutions with similar supports, each converging to a set of sharp spikes consistent with the original, roughly localized spikes. Preconditioning eliminates this uncertainty entirely.

\begin{figure}
\begin{center}
   \begin{minipage}{0.5\textwidth}
     \centering
     \includegraphics[width=.8\linewidth]{ 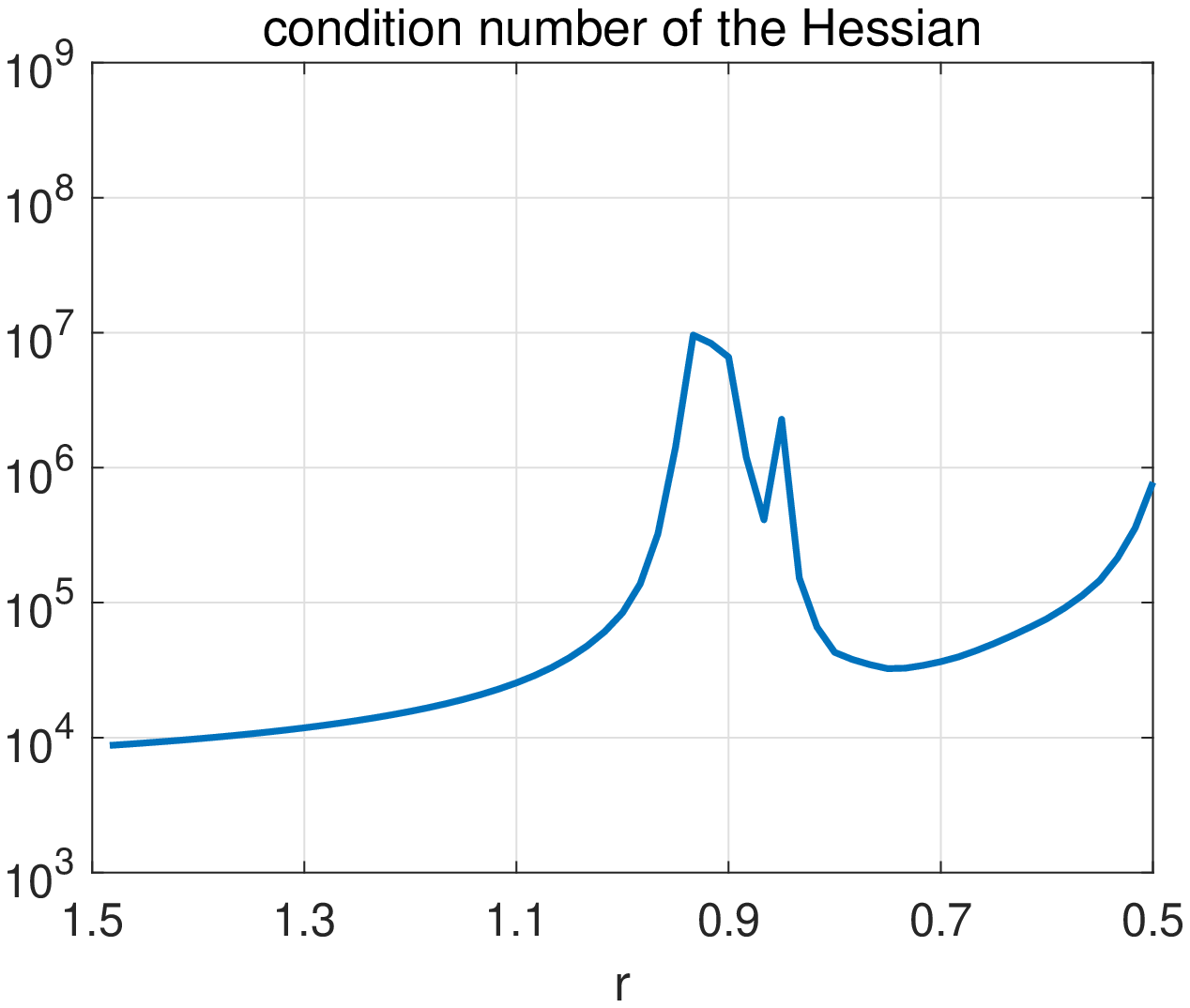}
   \end{minipage}\hfill
   \begin{minipage}{0.5\textwidth}
     \centering
     \includegraphics[width=.8\linewidth]{ 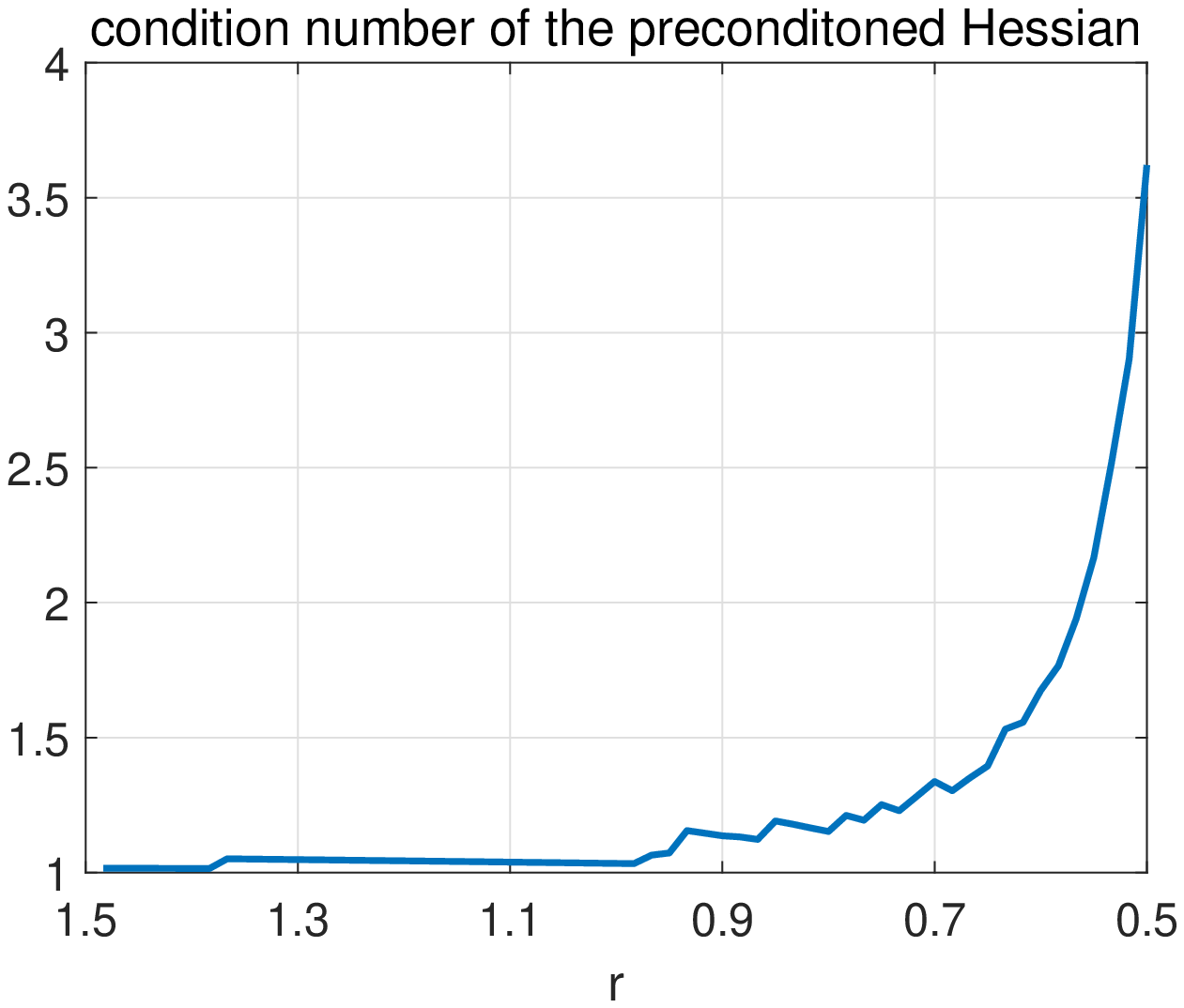}
   \end{minipage}
   \end{center}
   \caption{condition number of original Hessian (left) and preconditioned Hessian (right) along the solution path for $r$ varying from 1.5 to 0.5.}
   \label{fig:cond}
\end{figure}

The time cost of each method is displayed in Table \ref{tab:dectime}. The time cost for Predictor-Newton is subdivided into its constituent parts for comparison. IAS and inexact IAS are roughly 5 times faster than Predictor Newton, and 3 times faster than Predictor-IAS. Here, accuracy incurs computational expense. 

While Predictor-Newton is slower overall than the IAS based methods, it solves its linear systems much faster. Despite solving twice as many systems as either IAS method, all with matrices of twice the size, the total run time spent solving systems is half the time spent by IAS. This speed up indicates of the efficacy of preconditioning. %Here, more time is spent building the preconditioner than is saved using it. The run-time benefit of preconditioner is more obvious in the next, larger-scale, example.

% Please add the following required packages to your document preamble:
% \usepackage{multirow}

% \begin{table}
% \footnotesize
% \begin{tabular}{cccccccc}
% \hline
% \multirow{}{}{IAS} & \multirow{}{}{Inexact IAS} & \multicolumn{5}{c}{Predictor-Newton}                               & \multirow{}{}{Predictor-IAS} \\ \cline{3-7}
%                          &                              & Precondition & CGS & Building Linear System & Backtracking & Total &                                \\ \hline
% 0.35                 & 0.45                         & 0.67        & 0.13 & 0.91                   & 0.18         & 2.03  & 1.33                          \\ \hline
% \end{tabular}
% \caption{Run-times (in seconds) for path-following IAS/Inexact IAS, Predictor-Newton, Predictor-IAS}
% \label{tab:dectime}
% \end{table}s

\begin{table}[]
\footnotesize
\begin{center}
\begin{tabular}{cccccccc}
\hline
IAS & Inexact & \multicolumn{5}{c}{P-Newton} & P-IAS \\\hline
   &        & Precondition  & CGS     & Build Linear System      & Backtrack     & Total    &             \\\cline{3-7}
0.35  & 0.45        & 0.67     & 0.13     & 0.91    & 0.18     & 2.03    & 1.33     \\ \hline       
\end{tabular}
\end{center}
\caption{Run-times (in seconds) for path-following IAS/Inexact IAS, Predictor-Newton, Predictor-IAS}
\label{tab:dectime}
\end{table}

%\begin{table}[]
%\footnotesize
%\begin{tabular}{llllllll}
%IAS & Inexact & r&y&y&t&t & Predictor-IAS \\
%1   & 2       & 3     & 4     & 5     & 6     & 7    & 7             \\
%2   & 3       & 4     & 5     & 1     & 3     & 2    & 5            
%\end{tabular}
%\end{table}

Next, we show how solutions differ along three hyperparameter paths:

\begin{enumerate}
\item \textbf{Path 1: }  $(r,\eta,\vartheta) = (1.5,1.5,10^{-5}) \rightarrow (0.5,1.5,10^{-5}) \rightarrow (0.5,10^{-5},10^{-6}) $ 
\item \textbf{Path 2:}   $(r,\eta,\vartheta) = (1.5,1.5,10^{-5}) \rightarrow (1.5,10^{-5},10^{-5}) \rightarrow (0.5,10^{-5},10^{-6}) $ 
\item \textbf{Path 3: }  $(r,\eta,\vartheta) = (1.5,1.5,10^{-5}) \rightarrow (1.5,1.5,10^{-6}) \rightarrow (0.5,10^{-5},10^{-6}) $ 
\end{enumerate}

The 3 paths have the same starting and ending point, but differ at their midpoint. In each, we vary one hyperparameter first while holding the other two fixed. In path 1, $r$ first, in path 2, $\eta$ first and in path 3, $\vartheta$ first. 

Figure \ref{fig:3path} displays the intermediate and the final solutions. Different hyperparameter paths lead to different solutions paths. By varying one hyperparameter at a time, we isolate the influence of each hyperparameter. Given $r>0$, when $\eta \rightarrow 0$, the solution converges the $\ell_{p}$ penalized solution where $p=\frac{2 r}{r+1}$ \cite{calvetti2020sparse}.

\begin{figure}
\begin{center}
   \begin{minipage}{0.5\textwidth}
     \centering
     \includegraphics[width=.8\linewidth]{ 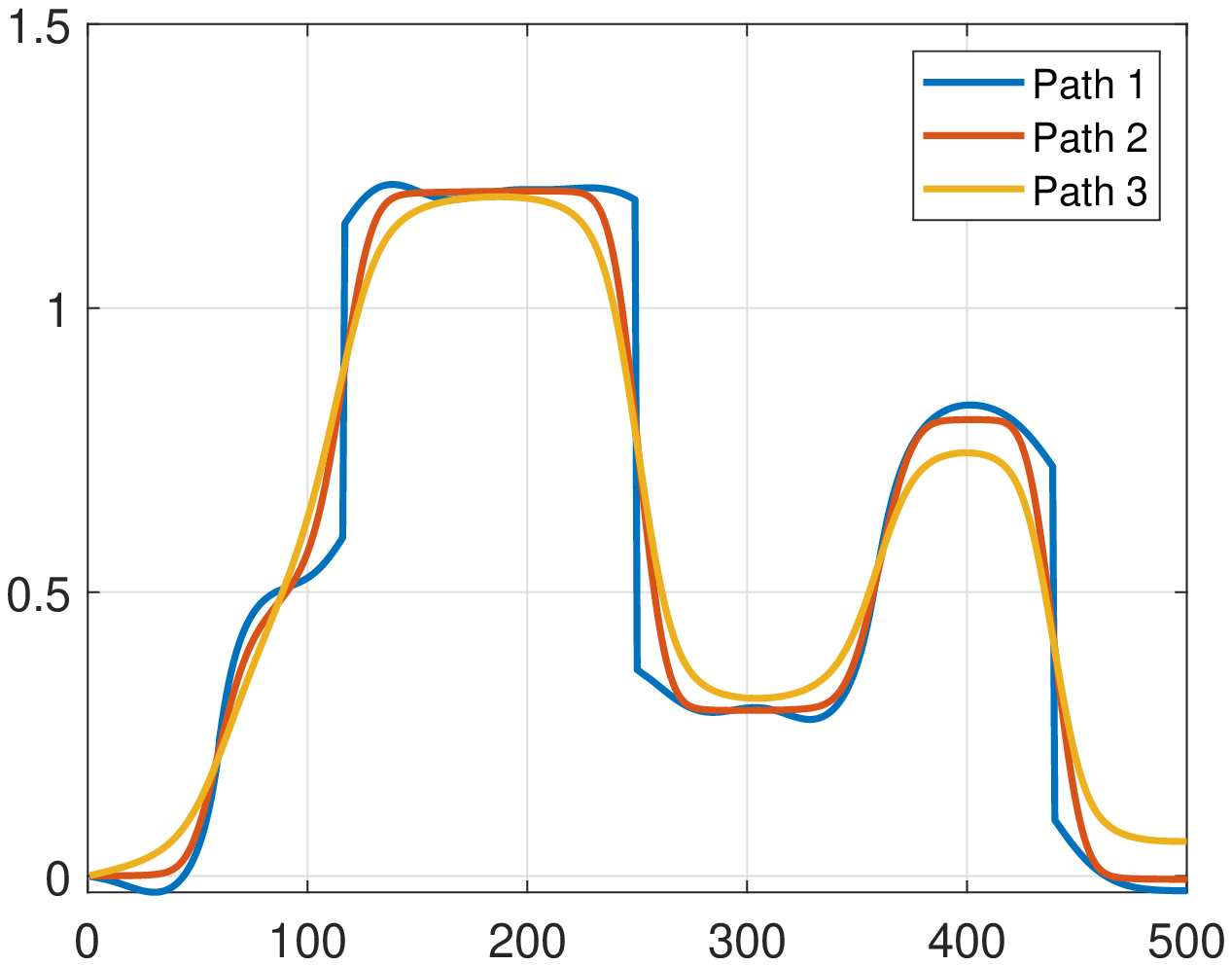}
   \end{minipage}\hfill
   \begin{minipage}{0.5\textwidth}
     \centering
     \includegraphics[width=.8\linewidth]{ 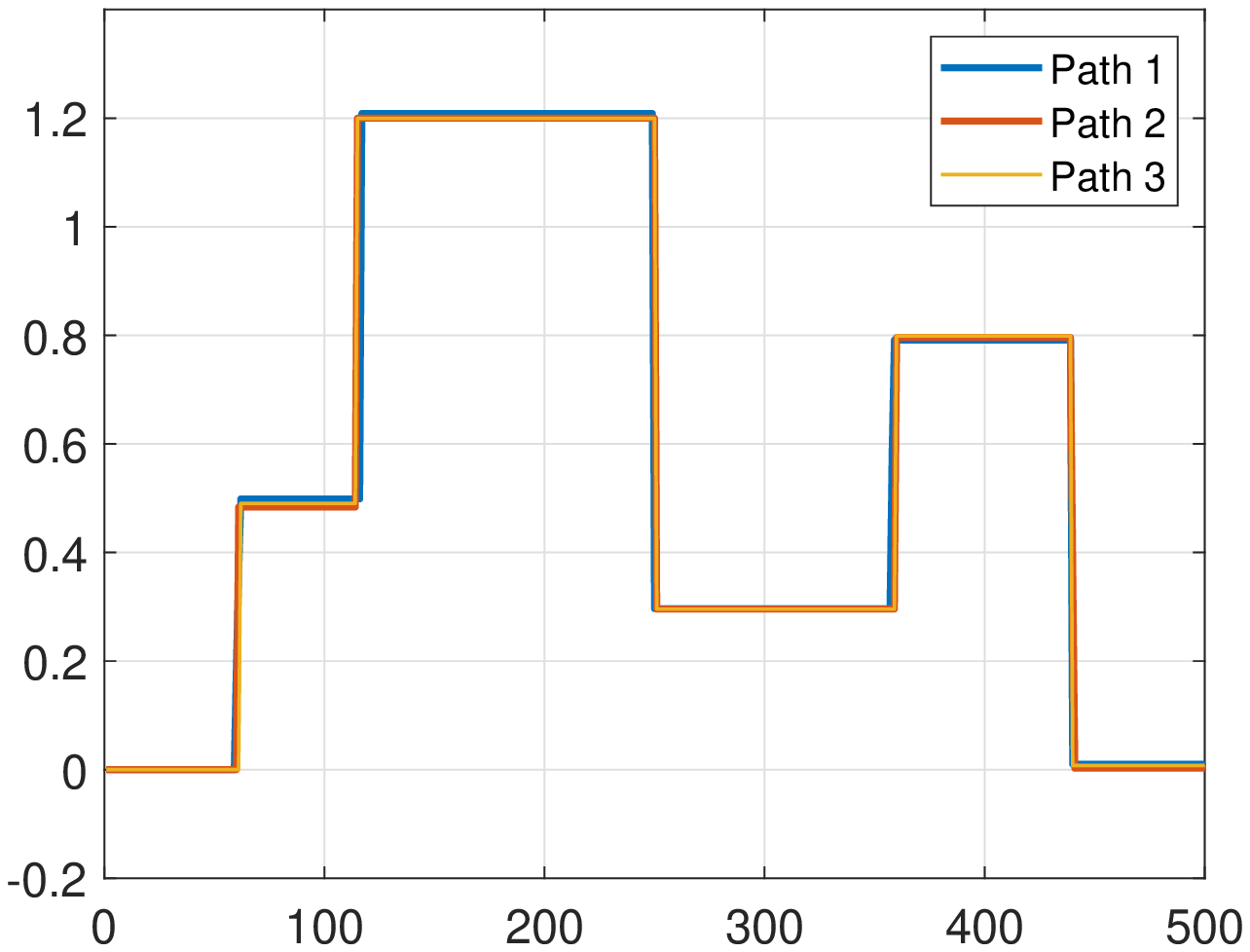}
   \end{minipage}
   \end{center}
   \caption{ \textit{Left}: solutions $x$ at the turning points of the 3 paths, where $(r,\eta,\vartheta) = (0.5,1.5,10^{-5})$ in path 1, $(1.5,10^{-5},10^{-5}) $ in path 2 and $(1.5,1.5,10^{-6})$ in path 3. \textit{Right}: solutions $x$ at the end of the hyperparameter paths. Note that while the solutions at intermediate points differ, they converge at the end of the path, suggesting that the minima selected by convex relaxation may be path independent.}
\label{fig:3path}
\end{figure}

Along path 1, $p = \frac{3}{5}$ at the midpoint, so the solution is sparser than the $\ell_{1}$ solution. However since $\eta$ is small yet, it only enforces the $\ell_{3/5}$ penalty softly. Consequently, it produces sharply defined jumps, but does not enforce that the solution remain constant between the jumps.

Along path 2, $p = \frac{6}{5} > 1$ at the midpoint, but $\eta$ is small, so the solution is effectively an $\ell_{6/5}$ penalized solution. Since $1 < \frac{6}{5} < 2$, the penalty only partially promotes sparsity, and the resulting signal does not jump sharply. However, since $\eta$ is small, the penalty is unsoftened, and the solution is constant away from the jumps.

On path 3, the scaling hyperparameter $\vartheta$ approaches zero first, so, after non-dimensionalization, the effective regularizer acts like a classic $\ell_{2}$ penalized solution. The $\ell_2$ penalized solution is the smoothest of all three penalties considered, so solutions neither jump sharply, nor remain constant away from the jumps.

Despite the observed differences along the three paths, all paths end at the same hyperparameter values, so the resulting solutions converge. Slight differences in the solutions are apparent in the right panel of Figure \ref{fig:3path}. These differences are likely a result of the distinct errors accumulated along the different paths. This experiment demonstrates that the path-following approach can recover a solution in the non-convex region that is robust to changes in the path taken from the convex region.

%%%%%%%%%%%%%%%%%%%%%%%%%%%%%%%%%%%%%%%%%%%%%%%%%%%%%%%%%%%%%%%%%%%%%
\subsection{Image Problem}

Next, we estimate a nearly black two-dimensional object. This example is borrowed from \cite{calvetti2020sparse}. 

 The generating model is an impulse image, defined on
 $\Omega = [0,1] \times[0,1]$,
$$
d \mu(p)=\sum_{k=1}^{J} a_{k} \delta\left(p-p_{k}\right) d p, \quad p_{k} \sim \operatorname{Uniform}(\Omega), \quad a_{k} \sim \operatorname{Uniform}([1.5,2]),
$$
The desired distribution, $d \mu$ is discretized, down-sampled, and observed after blurring with a Gaussian kernel,
$$
A\left(p, p^{\prime}\right)=\frac{1}{2 \pi w^{2}} e^{-\left\|p-p^{\prime}\right\|^{2} / 2 w^{2}}, \quad w=0.01
$$
Then, the discrete data at observation points $q_{j} \in \Omega$ is
$$
b_{j}=\int_{\Omega} A\left(q_{j}, p^{\prime}\right) d \mu\left(p^{\prime}\right)+\varepsilon_{j}=\sum_{k=1}^{K} a_{k} A\left(q_{j}, p_{k}\right)+\varepsilon_{j} .
$$
where $\epsilon$ is Gaussian noise. 

The image $\Omega$ is divided into $n=128 \times 128=16384$ pixels, denoted by $\Omega_{\ell}$. The kernel is discretized and denoted by $A$ as
$$
\int_{\Omega} A\left(q_{j}, p\right) d \mu(p) \approx \sum_{\ell=1}^{n} \underbrace{\left|\Omega_{\ell}\right| A\left(q_{j}, q_{\ell}^{\prime}\right)}_{={A}_{j \ell}} x_{\ell}, \quad x_{\ell}=\frac{1}{\left|\Omega_{\ell}\right|} \int_{\Omega_{\ell}} d \mu(p),
$$
where $q_{\ell}^{\prime}$ denotes the center point of the pixel $\Omega_{\ell}$ and $\left|\Omega_{\ell}\right|$ is its area. We assume that the number of observation points is $m=64 \times 64=4096$. Then, the forward operator is defined by a matrix ${A} \in \mathbb{R}^{m \times n}$. The signal is corrupted by scaled white noise with standard deviation approximately $1.8 \%$ of the maximum noiseless signal.

To study the sensitivity of MAP estimation, we select a hyperparameter path that starts in the convex setting $(r(0) , \eta(0), \vartheta(0)) $ $= (1.5,1.5,10^{-5})$, and ends in the non-convex setting $(r(T) , \eta(T), \vartheta(T)) = (0.5,10^{-5},10^{-6})$. We use 8 equidistant time points along a line connecting the initial and final assumptions.

At the starting point, we test the efficacy of Newton acceleration. Figure \ref{fig:imagecost} shows the results. We switch to Newton after 4 IAS iterations. After switching, the method converges in 3 Newton iterations. Pure IAS needs more than 10 iterations. 

\begin{figure}
\begin{center}
   \begin{minipage}{0.5\textwidth}
     \centering
     \includegraphics[width=.82\linewidth]{ 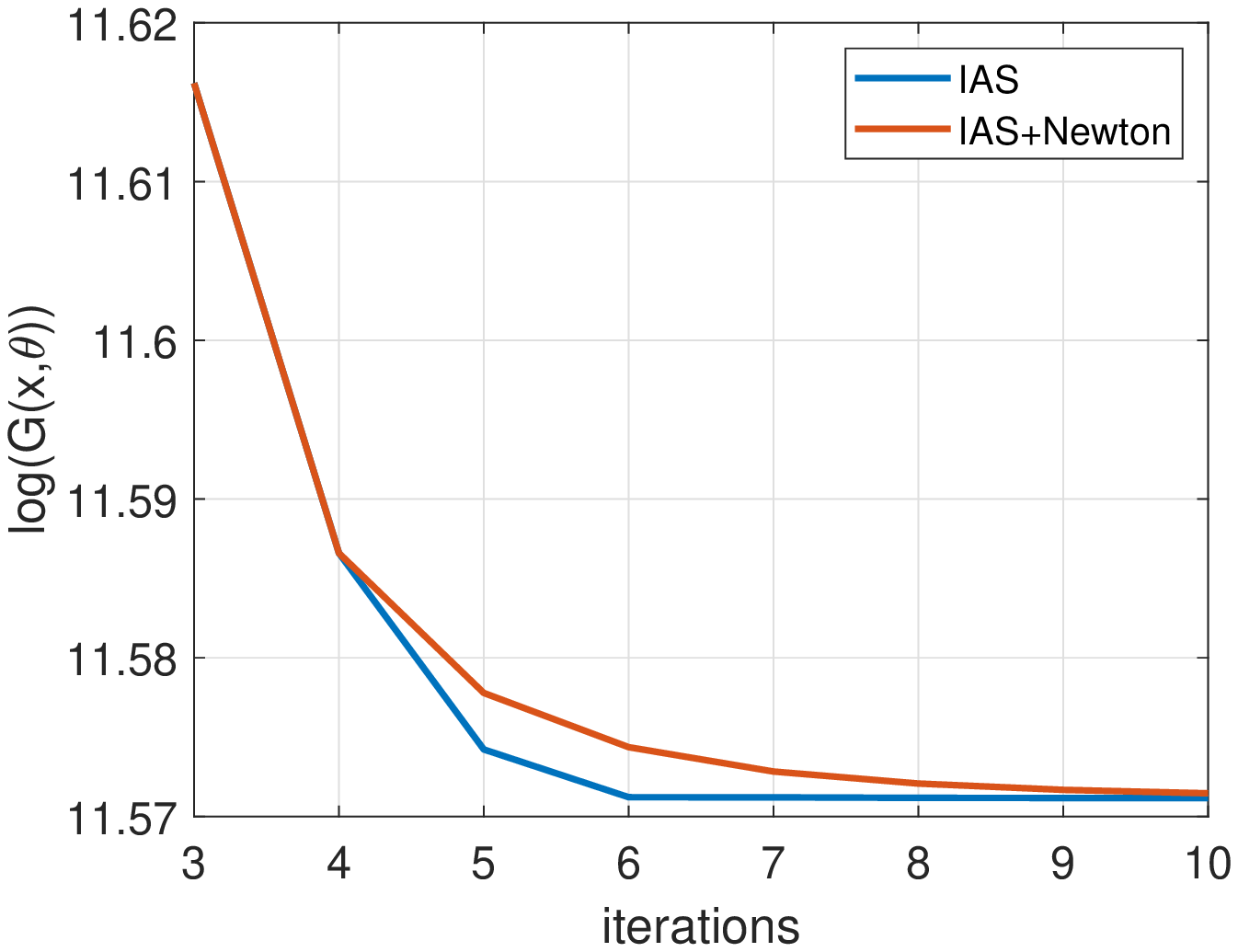}
   \end{minipage}\hfill
   \begin{minipage}{0.5\textwidth}
     \centering
     \includegraphics[width=.8\linewidth]{ 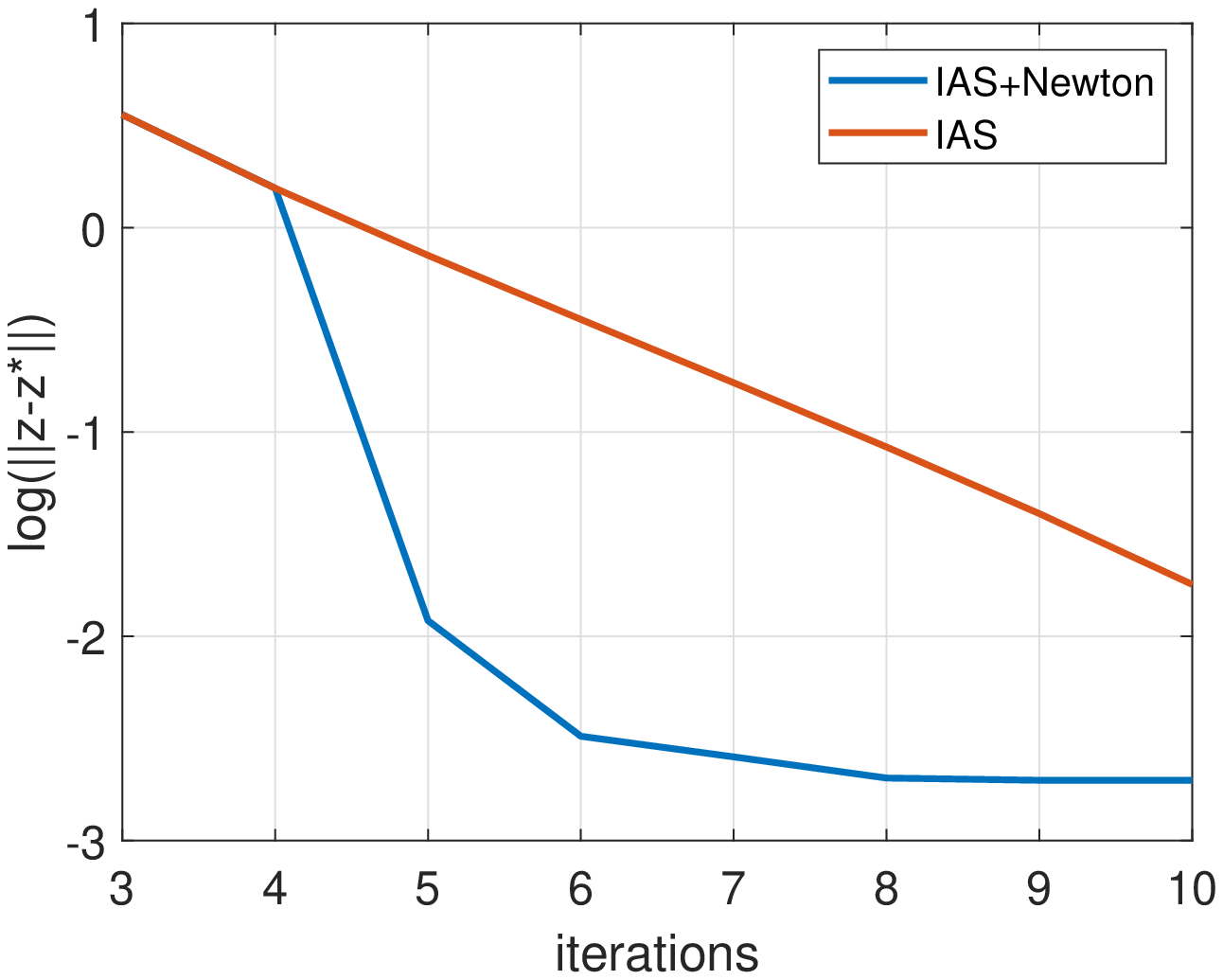}
   \end{minipage}
   \end{center}
   \caption{objective function (left) and 2-norm error (right) for IAS, and IAS with Newton acceleration.}
   \label{fig:imagecost}
\end{figure}

Figure \ref{fig:imageshots} shows the solution $x(t)$, reconstructed image, and variance $\theta(t)$ recovered by the path-following methods at 3 different points on the hyperparameter path. As $r(t)$ decreases, the obtained solution become increasingly sparse. All algorithms start from the same point hyperparameters, where the solution is accurately obtained by IAS with Newton acceleration. At the middle and end points, Predictor-Newton identifies the support most accurately and promotes sparsity most strongly.

\begin{figure}
\begin{center}
	\begin{minipage}{0.3\linewidth}
		\centering
		\includegraphics[width=1\linewidth]{ 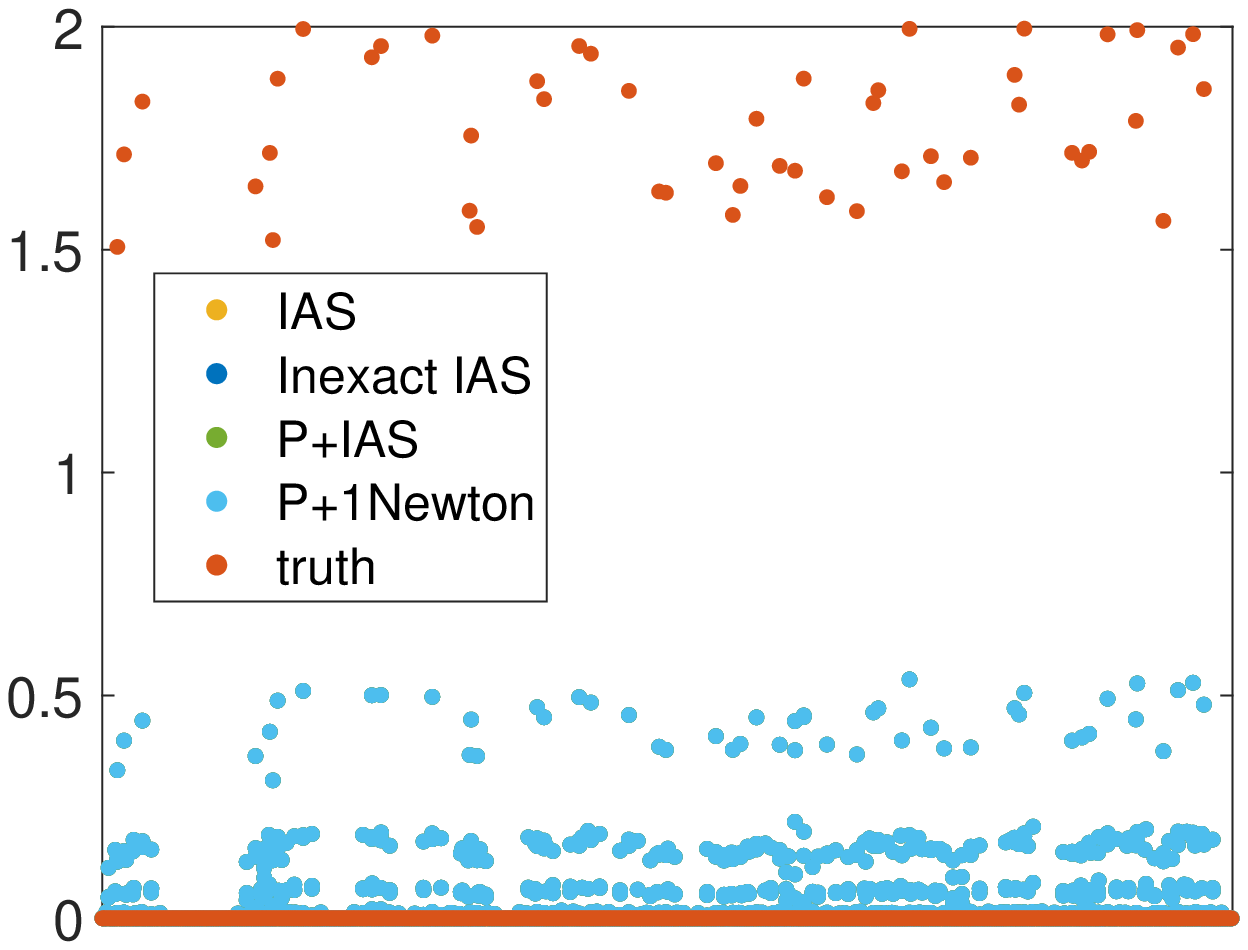}
	\end{minipage}
	\begin{minipage}{0.3\linewidth}
		\centering
		\includegraphics[width=.95\linewidth]{ 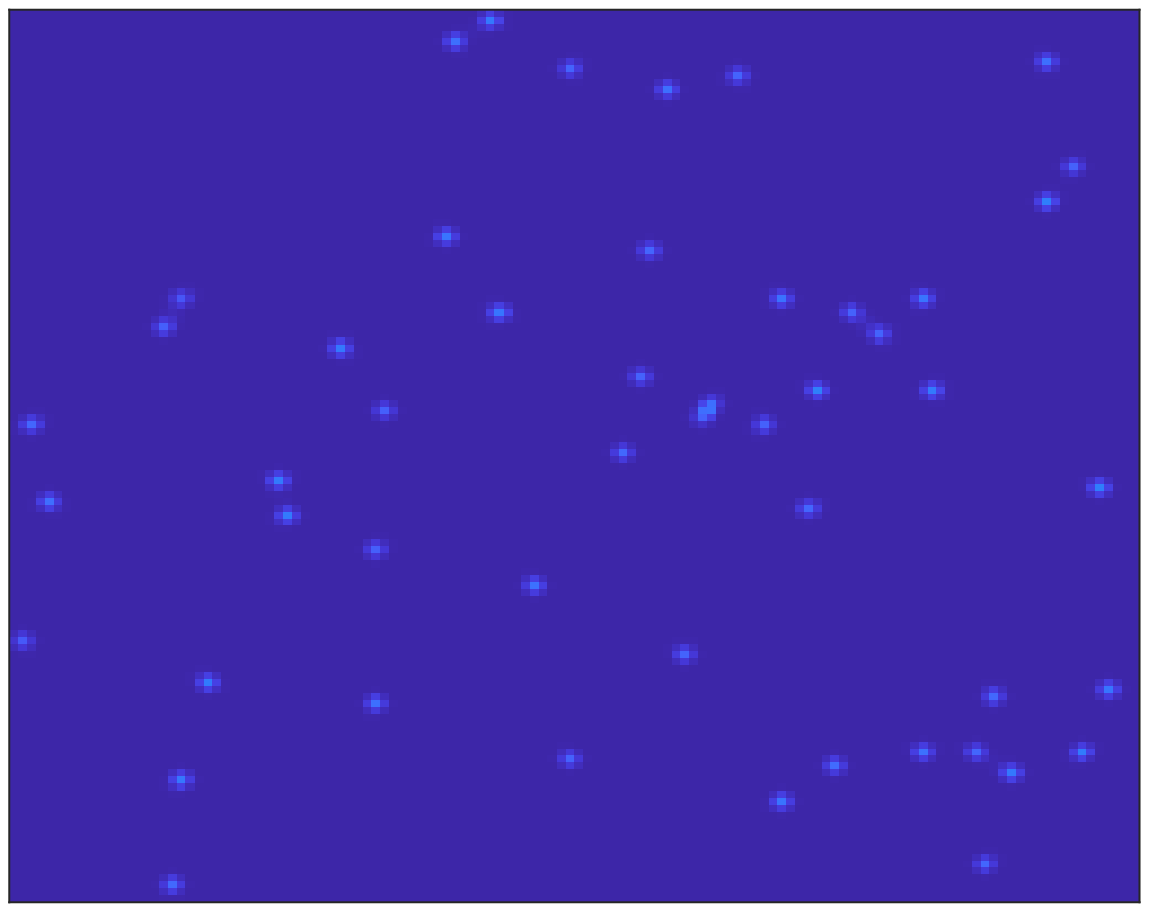}
	\end{minipage}
	\begin{minipage}{0.3\linewidth}
		\centering
		\includegraphics[width=.95\linewidth]{ 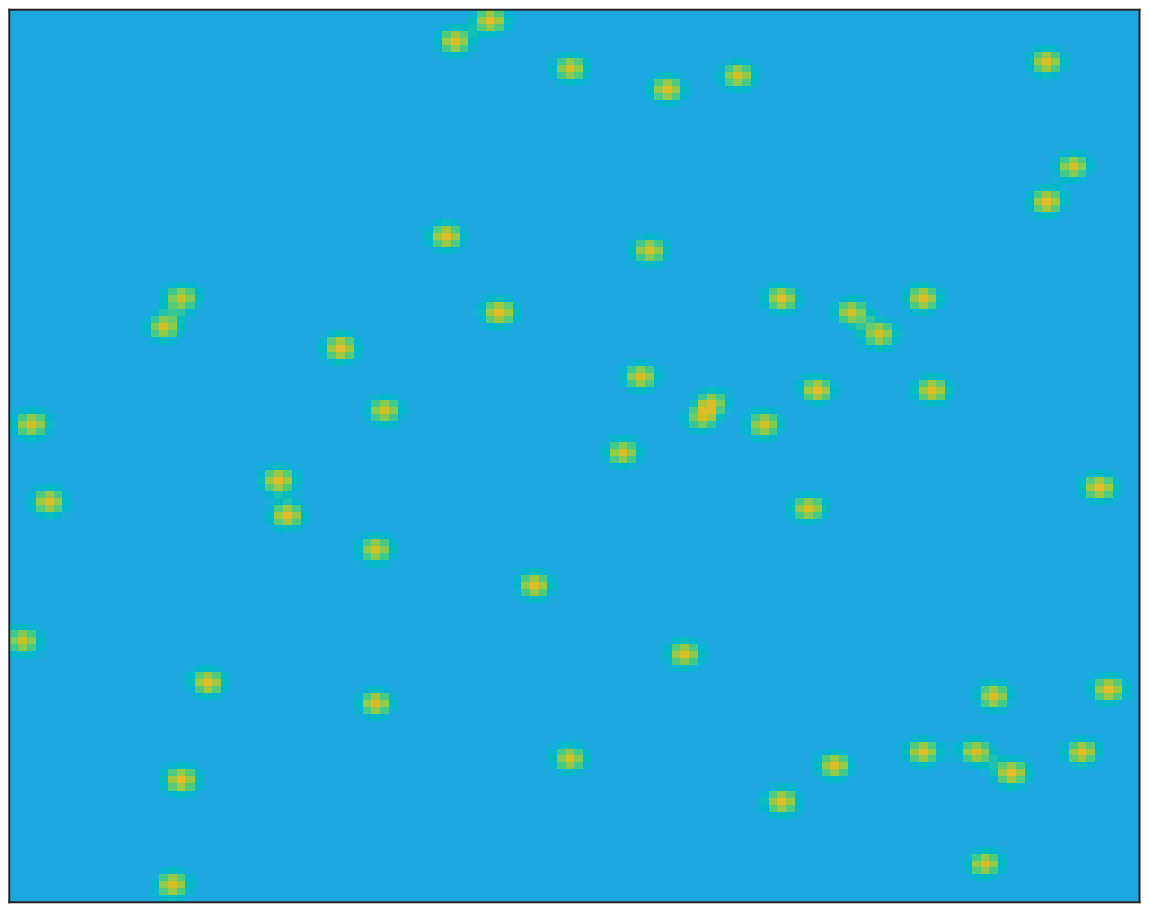}
	\end{minipage}
	
	\begin{minipage}{0.3\linewidth}
		\centering
		\includegraphics[width=1\linewidth]{ 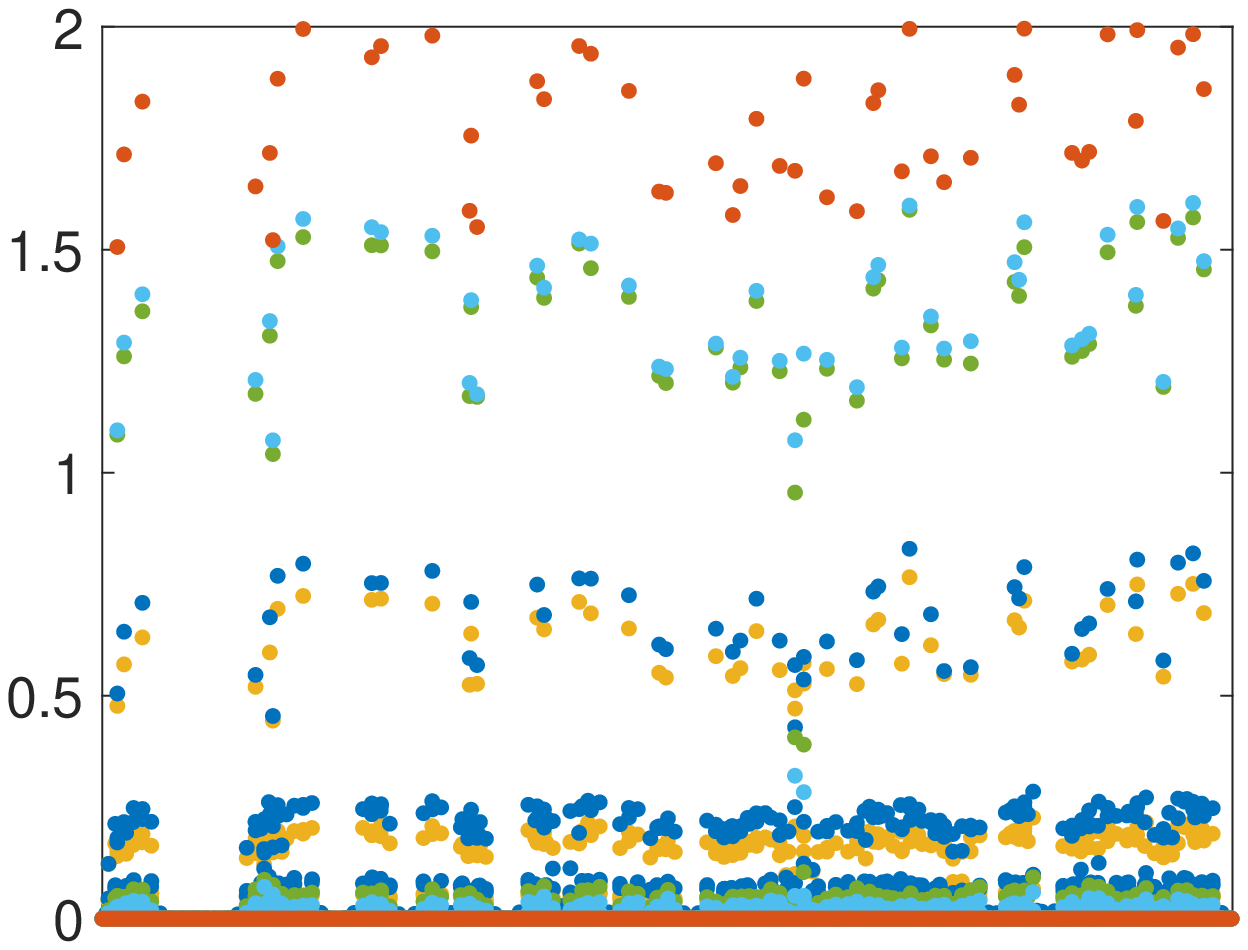}
	\end{minipage}
	\begin{minipage}{0.3\linewidth}
		\centering
		\includegraphics[width=.95\linewidth]{ 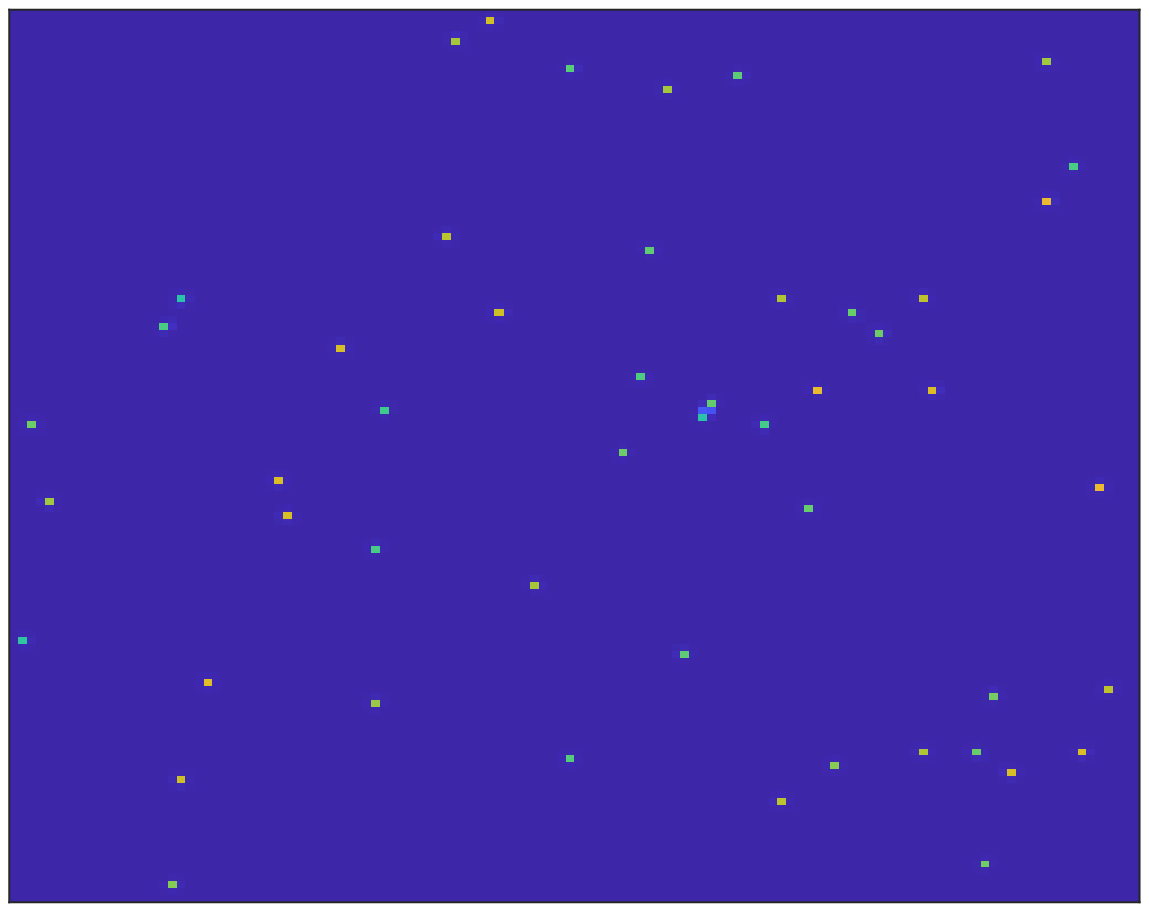}
	\end{minipage}
	\begin{minipage}{0.3\linewidth}
		\centering
		\includegraphics[width=.95\linewidth]{ 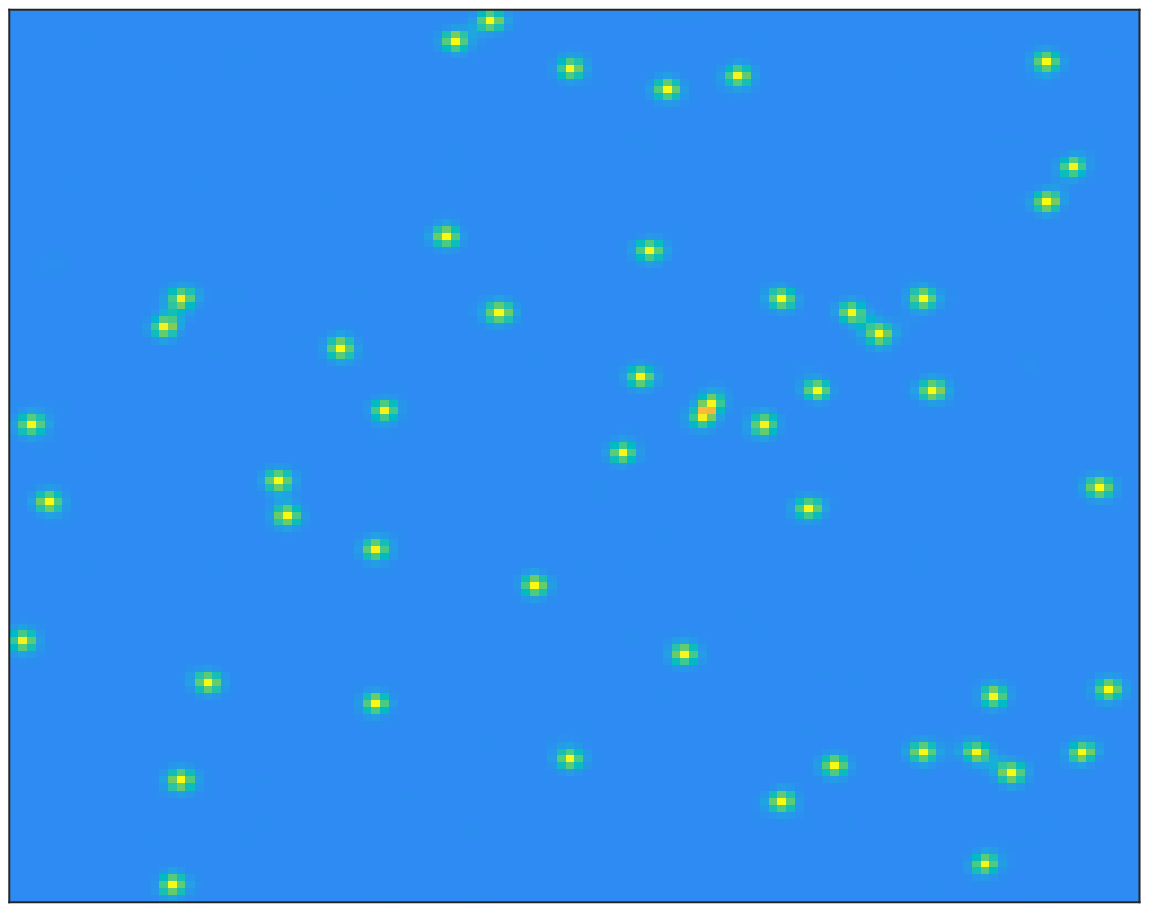}
	\end{minipage}
	\begin{minipage}{0.3\linewidth}
		\centering
		\includegraphics[width=1\linewidth]{ 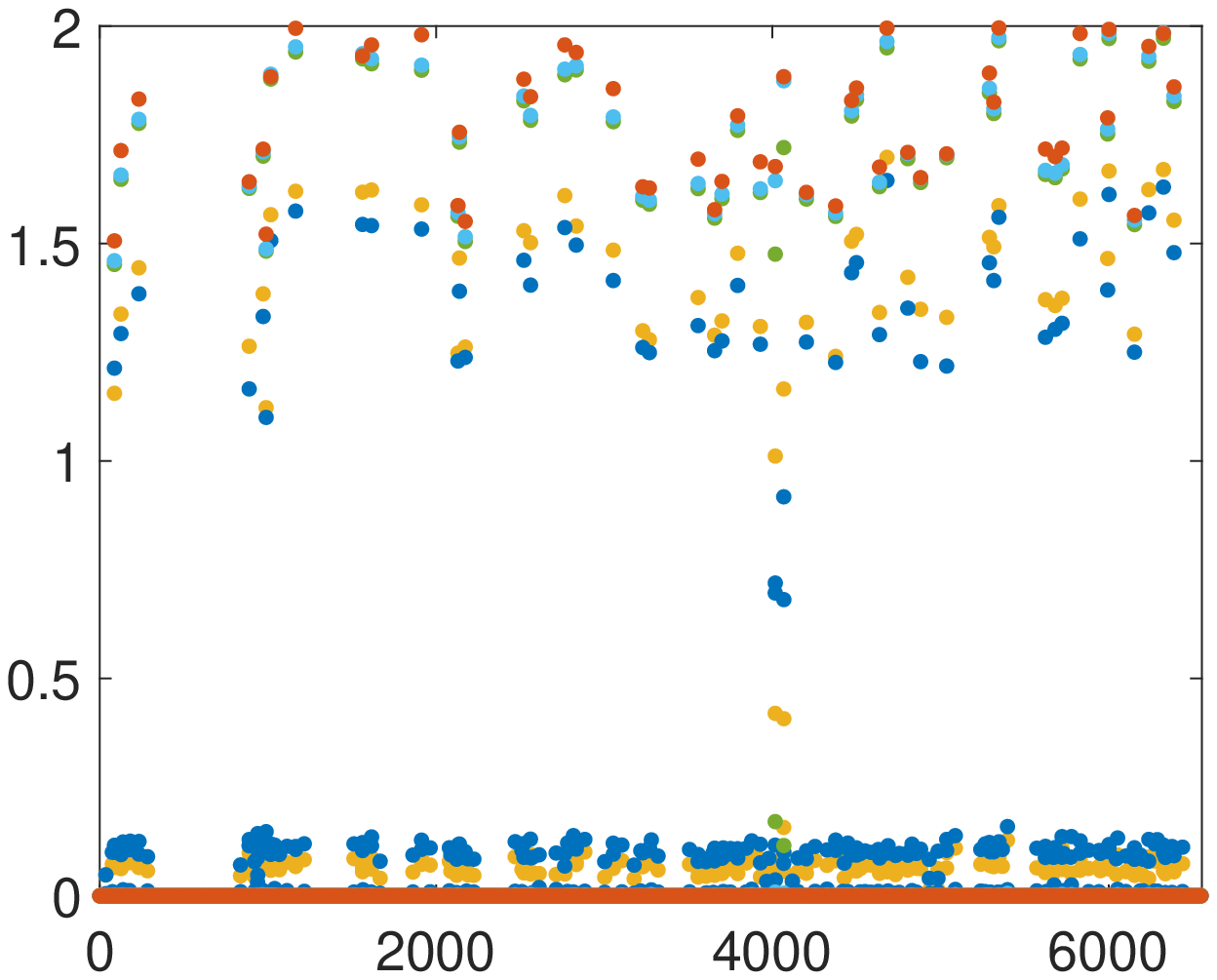}
	\end{minipage}
	\begin{minipage}{0.3\linewidth}
		\centering
		\includegraphics[width=.97\linewidth]{ 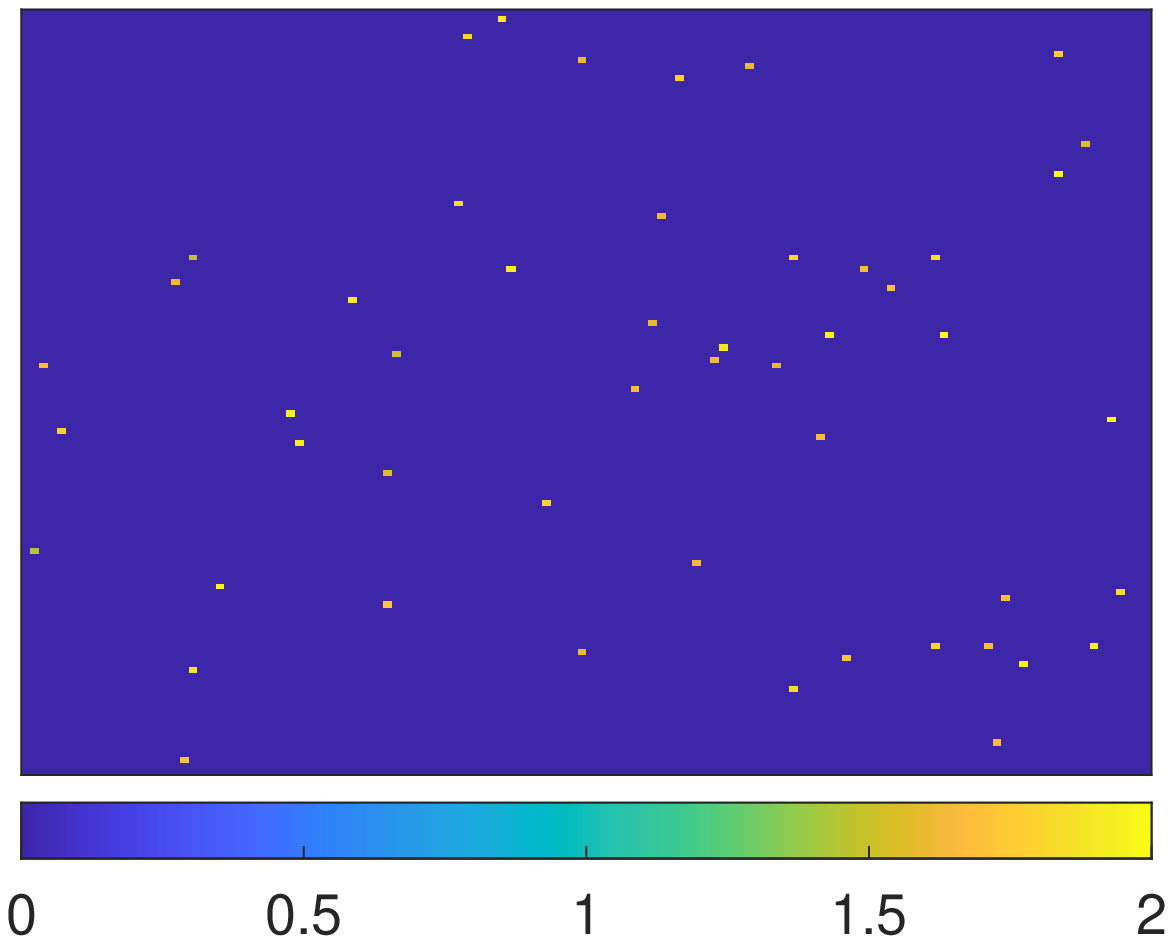}
	\end{minipage}
	\begin{minipage}{0.3\linewidth}
		\centering
		\includegraphics[width=.98\linewidth]{ 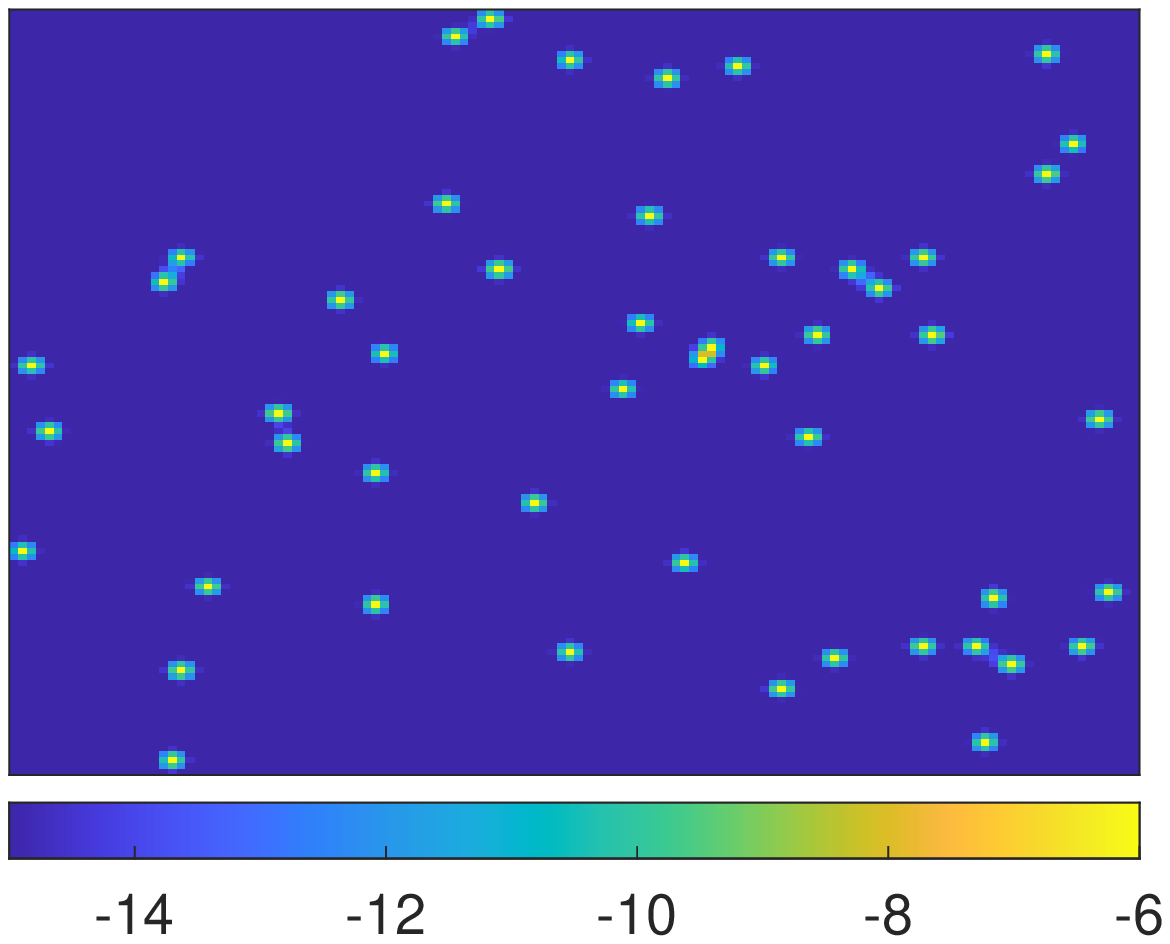}
	\end{minipage}
\end{center}
\caption{solution $x$ (left), reconstructed image via Predictor-Newton (middle) and the variance $\theta$ (right) via Predictor-Newton corresponding to 3 different points on the hyperparameters path. In the top row, $r = 1.5$ , $\eta =  1.5$ and $\vartheta = 10^{-5}$, in the middle row, $r = 1.1$ , $\eta =  0.9$ and $\vartheta = 5.5\times 10^{-6}$, in the bottom row, $r = 0.5$ , $\eta =  10^{-5}$ and $\vartheta = 10^{-6}$.}
\label{fig:imageshots}
\end{figure}

Figure \ref{fig:imagepre} demonstrates how the preconditioning strategy leverages the effective dimensionality of the problem. The dimension of original Hessian is 32,768. Screening out columns with small column sums in $D_{\theta}^{\frac{1}{2}}A^{\top}AD_{\theta}^\frac{1}{2}$ reduces the dimension by at least an order of magnitude. As $r(t)$ shrinks, the number of columns retained falls rapidly. Low rank approximation of the screened matrix further reduces the dimension. The blue line marks the effective rank the fidelity term. At largest, the effective rank is near 500. Both the dimension of the screened matrix, and effective rank converge to 50, the true number of non-zero values in the signal (stars in the original image). As $r$ decreases, the components of $\theta$ in the support become larger and the components off the support become smaller. Hence the effective rank of the Hessian decreases and converges to the real rank of the problem. Consequently, the preconditioner (\ref{precondition}) is relatively cheap to build.

\begin{figure}
\begin{center}
    \centering
     \includegraphics[width=.4\linewidth]{ 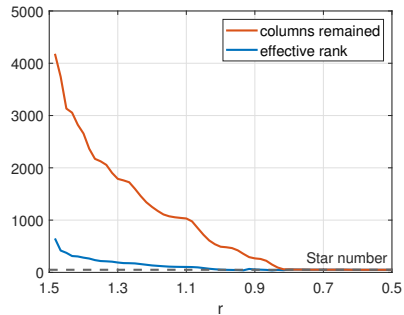}
   \end{center}
\caption{number of columns remaining after screening and the effective rank of the Hessian.}
\label{fig:imagepre}
\end{figure}

Table \ref{tab:imagetime} compares the computational costs of the path-following methods. The Predictor-Newton method is comparable to the Inexact IAS algorithm, which makes it attractive for large scale problems with sparse underlying signals. IAS type algorithms need more iterations to achieve a comparable accuracy. Note that, despite the scale of the problem, computing the preconditioner takes less time than in the first deconvolution example, since the effective rank of the image problem is smaller for large and intermediate $r$. Also notice that, despite solving twice as many linear systems, all twice as large, as either IAS method, the time spent solving linear systems by predictor-Newton is less than the runtime of the IAS based methods. For example, the time Predictor-Newton spent solving linear systems via CGLS is roughly a third the time Inexact IAS path-following spent. Thus, preconditioning is both relatively cheap, and significantly reduces the time spent solving linear systems.

% \begin{table}
% \begin{tabular}{llllllll}
% \hline
% \multirow{}{}{IAS} & \multirow{}{}{Inexact IAS} & \multicolumn{5}{c}{Predictor-Newton}                               & \multirow{}{}{Predictor-IAS} \\ \cline{3-7}
%                      &                              & Precondition & CGS & Building Linear System & Backtracking & Total &                                \\ \hline
% 18.83                & 5.92                         & 0.59         & 1.68 & 2.55                   & 0.51         & 5.26  & 21.17                          \\ \hline
% \end{tabular}
% \caption{Timings (in seconds) for path-following IAS/Inexact IAS, Predictor-Newton, and Predictor-IAS}
% \label{tab:imagetime}
% \end{table}
\begin{table}[]
\footnotesize
\begin{center}
\begin{tabular}{cccccccc}
\hline
IAS & Inexact & \multicolumn{5}{c}{P-Newton} & P-IAS \\\hline
   &        & Precondition  & CGS     & Build Linear System      & Backtrack    & Total    &             \\\cline{3-7}
 18.83    & 5.92        & 0.59     & 1.68     & 2.55    & 0.51     & 5.26    & 21.17      \\ \hline       
\end{tabular}
\end{center}
\caption{Run-times (in seconds) for path-following IAS/Inexact IAS, Predictor-Newton, Predictor-IAS}
\label{tab:imagetime}
\end{table}

\section{Discussion}

The analysis and methods introduced in this paper can be extended in multiple directions. These avenues for future work address shortcomings of our approach, and exploit other aspects of the path-following problem.

First, the realized path of MAP solutions depends on the specific hyperparameter path. The solution does not depend on how the path is parameterized in time. Nevertheless, it remains unclear whether multiple hyperparameter paths starting from the same initial location and arriving at the same end location, produce solution paths that end at the same solution. Future work should consider the consistency of solutions under changes in the hyperparameter path used. A consistency guarantee would strongly recommend the convex relaxation approach. 

Next, there is a trade-off between step length and the accuracy of the solution path. Smaller step lengths produce more accurate solutions but require more computational effort. Ideally, step length should be chosen adaptively to balance efficiency and accuracy. Two complementary approaches are clear. First, prediction computes the sensitivity of the solution to changes in the hyperparameters, so large steps could be used when the solution is insensitive and small steps could be used when sensitive. Alternatively, the number and length of correction steps needed evaluates the inaccuracy of prediction. Thus, the number or size of corrections steps used could guide step length. When the previous correction was small and cheap, the step size should be increased. When the previous corrections were large or expensive, then step length should be decreased.  

Our proposed algorithm also does not fully exploit the continuity of the linear systems solved en route. Each steps solves a linear system that is only slightly different than previously solved. We partially exploit this structure by initializing an iterative solver from the previous solution, but do not exploit it when building the preconditioner. Future work should investigate cheap methods for updating the preconditioner. The correct updating method is not obvious, since it is important to retain the low rank complexity. Otherwise, the update may prove more expensive than explicitly recomputing the preconditioner.

Finally, other ODE solvers may prove more efficient and stable. For example, Runge-Kutta methods offer well developed higher-order solvers. In \cite{zhu2021algorithmic}, Runge-Kutta methods beats Euler methods in tracking the $\ell_2$ regularized solution.  Here we use the simplest ODE solver, forward Euler, since correction prevents the accumulation of error and relies on solving linear systems of the same kind used for prediction. Future studies could evaluate other ODE solvers.

\appendix

\section{The \texorpdfstring{$ 2\times 2 $ }{g} Case}\label{sec:2 by 2}

Here we show that, when $n=1$, the $2\times2$ Hessian is invertible for almost all $x$. To begin, we introduce some relations between $u$ and $\xi$.
 
%% argue that the 2 by 2 hessian is almost always invertible

At a MAP solution, $\xi$ is optimal with respect to $x$. There, first order optimality requires that \cite{calvetti2019hierachical}:
\begin{equation}\label{eq}
    -\frac{1}{2}u_j^{2}-\eta \xi_j + r\xi_j^{r+1} = 0
\end{equation}
for all $j$.

The hyperparameters only define a valid objective if $r > 0$ and $\eta > 0$, or if $r < 0$ and $\eta < -3/2$. In those regions, equation \eqref{eq} implicitly defines a smooth function $\varphi$ which optimizes $\xi_j$ with respect to $u_j$ \cite{calvetti2019hierachical}
\begin{equation} \label{eqn: manifold}
    \xi_j = \varphi(u_j).
\end{equation}
When $u_j > 0$, $\varphi(u_j)$ is strictly increasing. 

\begin{lemma}[$2\times2$ Hessian Invertibility] \label{lem: two by two}
When $n=1$, the Hessian \eqref{Hessian} is invertible for almost all $x$. 
\end{lemma}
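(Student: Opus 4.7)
The plan is to reduce the $2 \times 2$ Hessian to the matrix $\hat{H}$ introduced in Section \ref{sec: invertibility}, then to directly check positivity of its determinant. The transformation from $H$ to $\hat{H}$ is a congruence by the factors $D_{[\sqrt{\theta},1]}$, $R$, and $D_{[\sqrt{\theta},x]}$. Each factor is invertible when $x \neq 0$ and $\theta > 0$. The bound \eqref{thetalowerbound} guarantees $\theta > 0$, so $H$ is invertible if and only if $\hat{H}$ is, provided $x \neq 0$. Since $\{x = 0\}$ has Lebesgue measure zero in $\mathbb{R}$, it suffices to verify invertibility of $\hat{H}$ on $\{x \neq 0\}$.

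For $n = 1$, $\hat{A}^{\top} \hat{A}$ collapses to a single non-negative scalar $\alpha \geq 0$, and $\hat{H}$ becomes
\[
\hat{H} = \begin{bmatrix} \alpha + \xi^{-1} & \alpha \\ \alpha & \alpha + \frac{1}{2}\xi^{-1} + r^{2} \xi^{r} u^{-2} \end{bmatrix}.
\]
A direct expansion yields
\[
\det \hat{H} = \frac{3\alpha}{2\xi} + \frac{\alpha r^{2} \xi^{r}}{u^{2}} + \frac{1}{2\xi^{2}} + \frac{r^{2} \xi^{r-1}}{u^{2}}.
\]
Each term is non-negative since $\alpha \geq 0$, $\xi > 0$, $r^{2} > 0$, and $u \neq 0$; moreover, the third term $\frac{1}{2\xi^{2}}$ is strictly positive and does not depend on $u$ or $\alpha$. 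Hence $\det \hat{H} > 0$ for every $u \neq 0$, and $H$ is invertible for every $x \neq 0$, which is almost all $x$.

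The main piece of care in this argument is the reduction step: the diagonal factor $D_{[\sqrt{\theta},x]}$ is only invertible when $x \neq 0$, which is precisely why the ``almost all'' qualifier appears rather than ``all''. Beyond that, the proof is essentially a single $2\times 2$ determinant computation with one unconditionally positive term. The interest of the lemma lies in its role as the base case for the inductive proof of Theorem \ref{thm: invertibility}; the harder work in the general-$n$ case will be to track how adding a new $(x_j,\theta_j)$ pair perturbs the induction hypothesis, rather than anything specific to this base case.
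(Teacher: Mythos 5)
Your proposal is correct and follows essentially the same route as the paper: reduce $H$ to $\hat{H}$ via the invertible scaling and triangular factors (valid when $x \neq 0$, $\theta > 0$), then verify by direct expansion that $\det \hat{H} = a^{2}\left(\frac{3}{2\xi}+r^{2}\frac{\xi^{r}}{u^{2}}\right)+\frac{1}{2\xi^{2}}+r^{2}\frac{\xi^{r-1}}{u^{2}} > 0$ whenever $u \neq 0$, and conclude via $u = 0 \Leftrightarrow x = 0$. The determinant computation and the handling of the measure-zero exceptional set match the paper's argument.
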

\begin{proof}

As shown before, if $\hat{H}$ is invertible and $x_j \neq 0$ for all $j$, then the Hessian (\ref{Hessian}) is invertible. When $n=1$, 
$$
\hat{H}=\left[\begin{array}{cc}
a^{2} + \frac{1}{\xi} & a^{2} \\
a^{2}  & a^{2} 
+\frac{1}{2\xi} 
+ r^{2} \frac{\xi^{r}}{u^{2}}
\end{array}\right],
$$
where $a$ is a constant determined by the column scaled forward model.

The matrix $\hat{H}$ is invertible if $\det(\hat{H}) \neq 0$, where

$$
\det(\hat{H})=\left(a^{2}+\frac{1}{\xi}\right) \left( a^{2} +\frac{1}{2\xi} 
+ r^{2} \frac{\xi^{r}}{u^{2}} \right) - a^{4} = a^{2} \left(\frac{3}{2\xi}+r^{2}\frac{\xi^{r}}{u^{2}}\right)+\left(\frac{1}{2\xi^{2}}+r^{2}\frac{\xi^{r-1}}{u^{2}} \right).
$$

When $u \neq 0$, $u^{2} > 0$. Then, since $\xi > 0$, $a^{2} \ge 0$ and $r^{2}>0$, 
$$
\begin{aligned}
    &\frac{3}{2\xi}+r^{2}\frac{\xi^{r}}{u^{2}} > 0 , \\
    &\frac{1}{2\xi^{2}}+r^{2}\frac{\xi^{r-1}}{u^{2}} > 0.
\end{aligned}
$$
Thus, $\det(\hat{H}) > 0$, so $\hat{H}$ is invertible. Therefore, $\hat{H}$ is invertible for all $u$ except $u = 0$.

As introduced before, $x = \vartheta^{1/2}u$, where $\vartheta > 0$. Then
$$
u = 0 \Leftrightarrow x = 0.
$$
So the Hessian $H$ is invertible for almost all $x$. 

\end{proof}

Lemma \ref{lem: two by two} establishes that the $2\times 2$ Hessian is invertible for almost all $x$. Next, we consider the general case.

\section{Proof of \texorpdfstring{\ref{thm: invertibility}}{}} \label{sec:invertibility}

\begin{proof} 
Let $f(u) = \frac{1}{\xi}$, $g(u) = \frac{1}{2\xi} + r^{2}\xi^{r}u^{-2}$ where $\xi = \varphi(u)$, $[u, \xi]$ is the non-dimensional parameters of $z$ and $M=\hat{A}^{\top}\hat{A}$.  Then, $\hat{H}$ \eqref{newhessian} can be expressed: 
$$
\left[ \begin{array}{cc}
M + D_{f(u)}     & M \\
M     & M+ D_{g(u)} 
\end{array} \right].
$$

Permute the rows and columns so that entries that depend on $x_j$ and $\theta_j$ are adjacent,
$$
C_{ij} = \hat{H}_{p(i)p(j)},
$$
where $p(i) = 
\left \{
\begin{array}{ll}
\frac{i-1}{2} + 1     &  \text{if } i \text{ is odd} \\
\frac{i}{2} + n     & \text{if } i \text{ is even}
\end{array}
\right..
$  

The Hessian is invertible if and only if its permutation, $C$, is invertible.

By Lemma \ref{lem: two by two}, the $2\times2$ minor, $C^{(2)}$ is invertible for almost all $z$. Suppose that $C^{(2k)}$ is invertible for almost all $z$ when $k < n$. Then, we aim to show that $C^{(2k+1)}$ and $C^{(2k+2)}$ are invertible for almost all $z$.

First, consider
$$
C^{(2k+1)} = \left[ \begin{array}{cc}
C^{(2k)}     &  v \\
v^{\top}     & M_{k+1,k+1} + f(u_{k+1}) 
\end{array} \right],
$$
where $v$ is the first $2k$ entries of the $(2k+1)^{st}$ column of $C$. Let $C^{(2k+1)}_{j}$ denote the $j^{th}$ column of matrix $C^{(2k+1)}$.

Choose $z$ so that $C^{(2k)} $ is invertible. Then $C^{(2k+1)}$ is singular if and only if
\begin{equation}\label{*}
    \left[\begin{array}{c}
    v   \\
    M_{k+1,k+1} + f(u_{k+1})  
    \end{array}
    \right] \in \text{span} \{ C^{(2k+1)}_{j}\}_{j=1}^{2k}.
\end{equation}

% Applying the same argument used in Lemma \ref{naivenonsingular}, 
equation \eqref{*} holds for at most one value of $f(u_{k+1})$, that is, for at most one value of $\xi_{k+1}$. At a MAP solution, $\xi_{k+1} = \varphi(u_{k+1})$ is an invertible, monotonically increasing function of $u_{k+1}$ when $u_{k+1} > 0$. Moreover, $\varphi$ is a symmetric function, such that $\varphi(u_{k+1}) = \varphi(-u_{k+1})$. Thus, if $C^{(2k)}$ is invertible, then there exist at most two value of $u_{k+1}$ such that $C^{(2k+1)}$ is non-invertible. Then, since we assumed that $C^{(2k)}$ is invertible for almost all $z$, so is $C^{(2k+1)}$. 

Next, consider
$$
C^{(2k+2)} = \left[ \begin{array}{ccc}
C^{(2k)}     &  v  &  v \\
v^{\top}  &  M_{k+1,k+1} + f(u_{k+1}) & M_{k+1,k+1} \\
v^{\top}  &  M_{k+1, k+1} & M_{k+1,k+1} + g(u_{k+1})
\end{array}\right]
$$

Suppose that $z$ is chosen so that $C^{(2k+1)}$ is invertible. Then $C^{(2k+2)}$ is singular if and only if:
\begin{equation} \label{**}
\left[\begin{array}{c}
v    \\
M_{k+1, k+1} \\
M_{k+1, k+1} + g(u_{k+1})
\end{array} \right]  \in \text{span}\{ C^{(2k+2)}_{j}\}_{j=1}^{2k+1}
\end{equation}

Equation \eqref{**} requires that there exists a $\left(2k+1\right)$-dimensional vector $y \neq 0$ such that:
\begin{equation*}
\begin{aligned}
\sum_{j=1}^{2k} C^{(2k+2)}_{j}(u_{1}, \cdots, u_{k}) y_{j} & + 
\left[\begin{array}{c}
v    \\
M_{k+1, k+1} + f(u_{k+1})\\
M_{k+1, k+1}
\end{array} \right]
y_{2k+1}\\
& =
\left[\begin{array}{c}
v    \\
M_{k+1, k+1} \\
M_{k+1, k+1} + g(u_{k+1})
\end{array} \right]
\end{aligned}
\end{equation*}

Subtract $ w + f(u_{k+1})e_{2k+1}$ from both sides, where:
$$
w = \left[\begin{array}{c}
v     \\
M_{k+1, k+1} \\
M_{k+1, k+1}
\end{array} \right].
$$

Then:
\begin{equation}\label{***}
    \sum_{j=1}^{2k} C^{(2k+2)}_{j}(u_{1}, \cdots, u_{k}) y_{j} + w(y_{2k+1}-1) = g(u_{k+1})e_{2k+2} - f(u_{k+1})e_{2k+1}.
\end{equation}

Let $s = y - e_{2k+1}$, where $s \neq - e_{2k+1}$ since $y \neq 0$. Then, equation \eqref{***} reduces to the linear system:
$$
\left[ \begin{array}{cc}
C^{(2k)}     &  v   \\
v^{\top}  &  M_{k+1,k+1} + f(u_{k+1})  \\
v^{\top}  &  M_{k+1, k+1} 
\end{array}\right]s
=
\left[\begin{array}{c}
0    \\
-f(u_{k+1})  \\
g(u_{k+1})
\end{array} \right].
$$

First, focus on the upper $2k$ rows.

The first $2k$ equations require $C^{(2k)}[s_1;\hdots;s_{2k}]$ $= - s_{2k+1} v$. By assumption, $z$ was chosen so that $C^{(2k)}$ is invertible, so $C^{(2k)} [s_1;\hdots;s_{2k}]$ $= - s_{2k+1} v$ has a unique solution for each choice of $s_{2k+1}$. Let $s_{2k+1} = \lambda$, and let $t$ be the unique solution to $C^{(2k)} t = v$. Then $s = \lambda \left[t ; 1 \right]$.

Next, focus on the last two rows. Equation \eqref{***} requires:
$$
\lambda \left[\begin{array}{cc}
v^{\top}  &  M_{k+1,k+1} + f(u_{k+1})  \\
v^{\top}  &  M_{k+1, k+1}
\end{array} \right]
\left[\begin{array}{c}
t     \\
1     
\end{array} \right]
= 
\left[\begin{array}{c}
-f(u_{k+1})  \\
g(u_{k+1})
\end{array} \right],
$$
which implies that:
\begin{equation}
\begin{aligned}
    &\lambda(v^{\top}t + M_{k+1,k+1} + f(u_{k+1})) = -f(u_{k+1}), \text{ and } \\
    & \lambda(v^{\top}t + M_{k+1,k+1}) = g(u_{k+1}).
\end{aligned}
\end{equation}

To simplify, let $v^{\top}t + M_{k+1,k+1} = \alpha$. Then, the system reduces to the pair of equations,
\begin{equation}\label{2row}
    \lambda(\alpha + f(u_{k+1})) = -f(u_{k+1}),
\end{equation}
and
\begin{equation}\label{1row}
    \lambda \alpha = g(u_{k+1}).
\end{equation}

If $f(u_{k+1}) \neq 0$ or $g(u_{k+1}) \neq 0$, then $\lambda \neq 0$. By definition, $f(u_{k+1}) = \frac{1}{\xi_{k+1}} = \frac{1}{\varphi(u_{k+1})} > 0$, so $\lambda \neq 0$. Thus, both equations must be satisfied simultaneously for some $\lambda \neq 0$. Similarly, $g(u_{k+1}) = \frac{1}{2\xi_{k+1}} + r^{2}\xi_{k+1}^{r}u_{k+1}^{-2} > 0$. Then equation \eqref{1row} requires that $\lambda$ and $\alpha$ have the same sign, and only holds if $\alpha \neq 0$. 

If both $\lambda$ and $\alpha$ are positive, 
$$
 \lambda(\alpha + f(u_{k+1})) > 0 > - f(u_{k+1}),
$$
so equation \eqref{2row} can't hold. 

Suppose both $\lambda$ and $\alpha$ are negative and substitute \eqref{1row} into \eqref{2row}. Then
\begin{equation}\label{lastone}
    g(u_{k+1}) + \frac{g(u_{k+1})f(u_{k+1})}{\alpha} = -f(u_{k+1}),
\end{equation}
which requires
\begin{equation} \label{eqn: z and xi constraint}
g(u_{k+1})f(u_{k+1}) + \alpha (f(u_{k+1}) + g(u_{k+1})) =  \frac{1}{2\xi_{k+1}^{2}} + r^{2}\frac{\xi_{k+1}^{r-1}}{u_{k+1}^{2}} + \alpha \left( \frac{3}{2\xi_{k+1}}+r^{2}\frac{\xi_{k+1}^{r}}{u_{k+1}^{2}} \right) = 0.
\end{equation}

Equation \eqref{eqn: z and xi constraint} enforces
$$
(1 + 3 \alpha \xi_{k+1} ) \frac{1}{2}u_{k+1}^{2} + r^{2}\xi_{k+1}^{r+1} + \alpha r^{2}\xi_{k+1}^{r+2} = 0 ,
$$
for all $u_{k+1} \neq 0$.

Suppose $\xi_{k+1} \neq -\frac{1}{3\alpha}$. Then:
\begin{equation}\label{funz}
    \frac{1}{2} u_{k+1}^{2} = - r^{2}\xi_{k+1}^{r+1} \frac{1+\alpha\xi_{k+1}}{1+3\alpha\xi_{k+1}}
\end{equation}

Since $u_{k+1}^{2} > 0$,  $\frac{1+\alpha\xi_{k+1}}{1+3\alpha\xi_{k+1}} < 0$. Due to $\alpha < 0$, equation \eqref{funz} only holds when $-\frac{1}{3\alpha} < \xi_{k+1} < -\frac{1}{\alpha}$.

Recall that, at a MAP solution, $-\frac{1}{2}u_{k+1}^{2} - \eta\xi_{k+1} + r\xi_{k+1}^{r+1} = 0$. Thus, when $-\frac{1}{3\alpha} < \xi_{k+1} < -\frac{1}{\alpha}$
$$
-r^{2} \frac{1+\alpha\xi_{k+1}}{1+3\alpha\xi_{k+1}} = r - \frac{\eta}{\xi_{k+1}^{r}}.
$$

Therefore, if $C^{(2k)}$ and $C^{(2k+1)}$ are invertible, $C^{(2k+2)}$ is non-invertible at $u_{k+1} \neq 0$ if and only if:
\begin{equation}\label{funxi}
    -r^{2} \frac{1+\alpha\xi_{k+1}}{1+3\alpha\xi_{k+1}} - r + \frac{\eta}{\xi_{k+1}^{r}} = 0.
\end{equation}

Suppose that $u_{k+1}$ can be chosen such that $\xi_{k+1} = \varphi(u_{k+1})$ solves equation \eqref{funxi}. Then, there is only one such solution since $\varphi$ is invertible and the left hand side of \eqref{funxi} is monotonically decreasing in $\xi_{k+1}$. To show that the left hand side is monotonically decreasing, differentiate it with respect to $\xi_{k+1}$. The derivative is:
$$
\frac{2\alpha r^{2}}{(1+3\alpha \xi_{k+1})^{2}} - \frac{r\eta}{\xi_{k+1}^{r+1}} < 0,
$$
which implies that at most one $\xi_{k+1}$ satisfies \eqref{funxi}, and \eqref{funz} holds for at most two $u_{k+1} \neq 0$.

Therefore, if $C^{(2k)}$ and $C^{(2k+1)}$ are invertible, $C^{(2k+2)}$ is  invertible for almost all $z$.  Induction follows:
\begin{enumerate}
    \item $C^{(2)}$ is invertible for almost all $z$ via Lemma \ref{lem: two by two}.
    \item Given $C^{(2k)}$ invertible for almost all $z$, both $C^{(2k+1)}$ and $C^{(2k+2)}$ are invertible for almost all $z$.
\end{enumerate}
Then, by induction, $C$ is  invertible for almost all $z$. It follows that the Hessian $H$ is invertible for almost all $z$. \end{proof}

Theorem \ref{thm: invertibility} ensures that the ODE governing the solution path $z_*(t)$, \eqref{ODE}, is well-defined for almost all $z$. Since $H$ is invertible for almost all $z$ on the solution manifold, linear systems involving $H$ evaluated at solutions admit unique solutions for almost all $z$. Consequently, if $z$ is a continuous random variable, the Hessian matrix is almost surely invertible. In practice, $z$ is random for two reasons. First, the original signal is perturbed by noise, and determines where the solution path starts. Second, any numerical algorithm will accrue random errors, so all practical methods will inherit randomly perturbed $z$. Therefore, in practice, all linear systems involving $H$ may be treated as invertible. Note that this argument does not rule out the possibility that solution paths may cross a bifurcation with nonzero probability, since, even if continuously distributed $z$ ensure $H$ is almost always invertible, the flow defined by the ODE \eqref{ODE}, may pass through a manifold of measure zero with probability one. Therefore, Theorem \ref{thm: invertibility} ensures that $H$ will be invertible at almost all $z$, for all sample $z$ observed numerically, but does not rule out the possibility that solution paths cross a lower-dimensional manifold where solution paths bifurcate.

\section{Proof of \texorpdfstring{\ref{thm: uniqueness}}{}} \label{sec: Uniqueness}

\begin{proof}
Suppose that the Hessian $H$ is invertible. Then, multiplying Equation \eqref{ODE} by the inverse of $H$ on both sides yields $\frac{d}{dt}x(t)$ explicitly:
\begin{equation*}
\begin{aligned}
&  \frac{d}{d t} z(t) = - H(z|\psi)^{-1} \left( \partial_{r} \nabla_{z} \mathcal{G}\left(z \mid \psi(t) \right) \frac{d}{d t} r(t) + \right. \\ & 
\hspace{2cm} \left. \partial_{\eta} \nabla_{z} \mathcal{G}\left(z \mid \psi(t) \right) \frac{d}{d t} \eta(t) +  \partial_{\vartheta} \nabla_{z} \mathcal{G}\left(z \mid \psi(t) \right) \frac{d}{d t} \vartheta(t) \right).
\end{aligned}
\end{equation*}
Let $h\left(z(t),t\right) = - H(z|\psi)^{-1}  \nabla_{z} (\nabla_{\psi} \mathcal{G}\left(z \mid \psi(t) \right) \cdot \frac{d}{dt} \psi(t))$. Then, the ODE can be expressed 
\begin{equation}\label{dzdt}
    \frac{d}{dt}z(t) = h\left(z(t),t\right), \quad z(t_{0})=z_{0}.
\end{equation}

If $h$ is defined on a closed rectangle containing $\left(z_{0}, t_{0}\right)$ where it is continuous in $t$ and Lipschitz continuous in $z$, then, by the Picard–Lindel\"of theorem, there exists some $\epsilon > 0$, such that ODE \eqref{dzdt} has unique solution on $\left[t_{0} - \epsilon, t_{0} + \epsilon \right]$. 

First, we show the ODE \eqref{dzdt} is well defined on an open set containing $\left(z_{0}, t_{0}\right)$. As long as $H$ is invertible, the ODE is well defined. Since $H(z,\psi)$ is a continuous matrix valued function of $z$ and $\psi$, and $\psi(t)$ is a continuous function of $t$, $H$ is a continuous function of $z$ and $t$. Then the determinant of Hessian $H$, $\det(H)$, is a continuous function of $z$ and $t$. At the initial point, $H(z_{0}, \psi(t_{0}))$ is invertible and $\det(H(z_{0}, \psi(t_{0})))\neq 0$. Therefore, on an open set in $\mathbb{R}^{2n+1}$ containing $\left(z_{0}, t_{0}\right)$, $\det(H(z, \psi(t)))\neq0$ and $H(z, \psi(t))$ is invertible  which implies the ODE \eqref{dzdt} is well defined on a closed rectangle $\mathcal{D}$ containing $\left(z_{0}, t_{0}\right)$.

Next, we check the continuity of $h(z,t)$ in $t$. Since $\psi(t)$ is continuously differentiable, $\frac{d}{dt}\psi(t)$ is continuous. The inverse of Hessian can be expressed $H^{-1} = \frac{H^{*}}{\det(H)},
$ where $H^{*}$ is the adjoint of $H$. Therefore, $H^{-1}$ is a rational function of the entries of $H$. Thus, $H^{-1}$ is a continuous function of the entries of $H$ where it exists. Since the composition of continuous functions is continuous, $H(z|\psi(t))^{-1}$ is continuous in $t$.

The right hand side of \eqref{ODE} has three parts. Since the object function has continuous second order partial derivatives, we can exchange the order of  the partials. Then, 
$$
\begin{aligned}
\partial_{r} \nabla_{z} \mathcal{G}\left(z \mid \psi(t) \right) 
& =  \nabla_{z} \partial_{r} \mathcal{G}\left(z \mid \psi(t) \right) = 
\nabla_{z}\left(\sum_{j=1}^{n}\left(\frac{\theta_{j}}{\vartheta_{j}}\right)^{r}\log \frac{\theta_{j}}{\vartheta_{j}}\right) \\
& = [0,\cdots, 0, \cdots \\
& \quad \frac{1}{\vartheta_{1}}\left(\frac{\theta_{1}}{\vartheta_{1}}\right)^{r-1}(1+r\log\frac{\theta_{1}}{\vartheta_{1}}), \cdots, \frac{1}{\vartheta_{n}}\left(\frac{\theta_{n}}{\vartheta_{n}}\right)^{r-1}(1+r\log\frac{\theta_{n}}{\vartheta_{n}})]^{\top} , \\
\partial_{\eta} \nabla_{z} \mathcal{G}\left(z \mid \psi(t) \right) 
& =
\nabla_{z} \partial_{\eta} \mathcal{G}\left(z \mid \psi(t) \right) =
\nabla_{z} \left (-\sum_{j=1}^{n} \log \frac{\theta_{j}}{\vartheta_{j}} \right) \\
& = -[0,\cdots, 0, 1/\theta_{1}, \cdots, 1/\theta_{n}]^{\top}, \\
\partial_{\vartheta} \nabla_{z} \mathcal{G}\left(z \mid \psi(t) \right) & = \nabla_{z} \partial_{\vartheta} \mathcal{G}\left(z \mid \psi(t) \right) =
\nabla_{z} \left([\eta\frac{1}{\vartheta_{1}} - r\frac{\theta_{1}^{r}}{\vartheta_{1}^{r+1}}, \cdots, \eta\frac{1}{\vartheta_{n}} - r\frac{\theta_{n}^{r}}{\vartheta_{n}^{r+1}} ]\right) \\
& = -r^{2}
\left[ \begin{array}{c}
0 \\
D_{[ \frac{\theta_{1}^{r-1}}{\vartheta_{1}^{r+1}}, \cdots, \frac{\theta_{n}^{r-1}}{\vartheta_{n}^{r+1}}]}
\end{array} \right].
\end{aligned}
$$
Each term is continuous in $t$, so $\partial_{\psi} \nabla_{z} \mathcal{G}\left(z \mid \psi(t) \right)$ is continuous in $t$. As a result, $h(z,t)$ is continuous in $t$.

Finally, all entries of $H(z \mid \psi)$ are continuously differentiable in $z$, so $H(z \mid \psi)^{-1}$ is continuously differentiable in $z$. The partial derivative $\nabla_{\psi} \nabla_{z} \mathcal{G}\left(z \mid \psi(t) \right)$ is also continuously differentiable in $z$ and $\frac{d}{dt}\psi(t)$ is independent of $z$. Thus, $h\left(z(t),t\right)$ is continuously differentiable in $z$ so is Lipschitz continuous in $z$ on $\mathcal{D}$.

It follows that, by Picard-–Lindel\"of, the ODE \eqref{ODE} has a unique solution on an interval containing $t_{0}$.
\end{proof}

\bibliographystyle{siamplain}
\bibliography{Refs.bib}
\end{document}